\numberwithin{equation}{section}
\newtheorem{theorem}{Theorem}[section]
\newtheorem{proposition}[theorem]{Proposition}
\newtheorem{corollary}[theorem]{Corollary}
\newtheorem{hypothesis}{Hypothesis}
\newcounter{conj}
\newtheorem{conjecture}[conj]{Conjecture}
\theoremstyle{definition}
 \title{Liouville quantum gravity on the annulus}
 \author{ Guillaume Remy\footnote{ D\'epartement de math\'ematiques et applications, \'Ecole normale sup\'erieure, CNRS, PSL Research University, 75005 Paris, France. Research supported in part by the ANR grant Liouville (ANR-15-CE40-0013).}  }
  \date{\vspace{-5ex}}
\begin{document}

  \maketitle

   \begin{abstract}
In this work we construct Liouville quantum gravity on an annulus in the complex plane. This construction is aimed at providing a rigorous mathematical framework to the work of theoretical physicists initiated by Polyakov in 1981 \cite{Pol}. It is also a very important example of a conformal field theory (CFT). Results have already been obtained on the Riemann sphere \cite{Sphere} and on the unit disk \cite{Disk} so this paper will follow the same approach. The case of the annulus contains two difficulties: it is a surface with two boundaries and it has a non-trivial moduli space. We recover the Weyl anomaly - a formula verified by all CFT - and deduce from it the KPZ formula. We also show that the full partition function of Liouville quantum gravity integrated over the moduli space is finite. This allows us to give the joint law of the Liouville measures and of the random modulus and to write the conjectured link with random planar maps.
  \end{abstract}
   
  \noindent{\bf Key words:} Gaussian free field, Gaussian multiplicative chaos, Liouville quantum gravity, conformal field theory, KPZ formula, conformal anomaly.

    

  \tableofcontents

  \section{Introduction}

  The goal of this work is to provide a rigorous probabilistic construction of the theory of Liouville quantum gravity (LQG) on a surface with the topology of an annulus. This theory was first introduced by the physicist A. Polyakov in 1981 in his seminal paper ``Quantum Geometry of Bosonic Strings", see \cite{Pol}. The case of surfaces with boundary was studied in \cite{Alvarez} and the precise case of the annulus was studied in \cite{Martinec}. The probabilistic framework used throughout this paper was introduced in \cite{Sphere} where the authors provide a construction of LQG for the Riemann sphere. Following the same approach, the theory has been defined on the unit disk in \cite{Disk}, on the complex tori in \cite{Tori} and on compact Riemann surfaces of higher genus in \cite{Genus}. The Riemann sphere is the simplest case as it corresponds to a simply connected compact surface without boundary. When considering the unit disk, technical difficulties appear due to the presence of the boundary and extra boundary terms have to be added in the Liouville action. The torus and higher genus surfaces have a non-trivial moduli space, meaning that all tori are not equivalent under conformal maps. In this case defining LQG requires to integrate over the moduli space with an appropriate measure. The annulus possesses two boundaries and has a non-trivial moduli space so we will encounter both of these difficulties.

The first building block of Liouville quantum gravity is the Liouville quantum field theory (LQFT), which in probabilistic terms corresponds to giving the law of a random field $\phi$ on an annulus $\Omega$ of radii $1$ and $\tau$.\footnote{$\tau \in (1, + \infty)$ parametrizes the space of non-conformally equivalent annuli. To define LQFT we work at fixed $\tau$ and then to define LQG we will perform an integration over $\tau$.} To make the construction clearer, let us make an analogy with Brownian motion. Physicists often define the law of Brownian motion using the Feynman path integral representation. Informally, let  $\Sigma_1$ be the space of paths $\sigma : [0,1] \rightarrow \mathbb{R} $ that start from $\sigma(0) = 0$. We can define the following functional on  $\Sigma_1$ by:
\begin{equation}
\forall\sigma\in\Sigma_1,\: S_{BM}(\sigma) = \frac{1}{2} \int_0^1|\dot{\sigma}(t)|^2 dt. 
\end{equation}  
   
This functional is minimal for $ \sigma \equiv 0$, this is the ``classical" solution. The quantum theory corresponds to considering all paths of $\Sigma_1$ with a probability density given by the exponential of $-S_{BM}$: $S_{BM}$ is the energy of the path, $e^{-S_{BM}}$ is the corresponding Boltzmann weight. This leads to the formal path integral definition of Brownian motion, for all suitable functionals $F$,
\begin{equation}
 \mathbb{E}[F((B_s)_{0 \leq s\leq 1})] = \frac{1}{\mathcal{Z}} \int_{\Sigma_1} F(\sigma) e^{-S_{BM}(\sigma)} D\sigma,
\end{equation} 
  where $D\sigma$ represents a formal uniform measure on $\Sigma_1$ and $\mathcal{Z}$ is a normalization constant. Although $D\sigma$ and $S_{BM}(\sigma)$ are ill-defined (a typical path $\sigma$ will not be differentiable), it is possible to give a rigorous meaning to this expression by subdividing $[0,1]$ into $n$ points and taking $n$ to infinity. The fact that we recover the law of Brownian motion is due to the Donsker theorem.
  
 Brownian motion is often seen as the canonical uniform random path in $\mathbb{R}^d$: it is the scaling limit of the simple random walk on the regular Euclidean lattice and it appears in a wide variety of problems in probability. A natural question is to ask what would be the equivalent in two dimensions, what is the canonical random geometry of a surface of given topology? More precisely, what is the canonical random Riemannian metric on a given surface? The answer is given by the Liouville quantum field theory. Just like for Brownian motion, there are two possible approaches to LQFT: one continuum approach that uses the Feynman integral formalism and one that uses the scaling limit of discrete models called random planar maps. We will focus on the first approach and construct directly the continuous object. For a given background Riemannian metric $g$ on our annulus $\Omega$, we want to consider the formal random metric $e^{\gamma \phi} g$ where $\gamma$ is a positive real number and where $\phi$ is the Liouville field whose law is formally given for all suitable functionals $F$ by,
 \begin{equation}\label{Path_int}
  \mathbb{E}[F(\phi)] = \frac{1}{\mathcal{Z}} \int_{\Sigma} F(X) e^{-S_L(X,g)} D_gX, 
 \end{equation}
with $D_gX$ being a formal uniform measure on the space $\Sigma$ of maps $X : \Omega \rightarrow \mathbb{R} $. $S_L(X,g)$ is the Liouville action defined for each map $X$ of $ \Sigma $ and each metric $g$ on $\Omega$ by,
 \begin{equation}
  S_L(X,g) = \frac{1}{4\pi} \int_{\Omega} ( \vert\partial^g X\vert^2 + Q R_g X + 4 \pi \mu e^{\gamma X}) d \lambda_g 
  + \frac{1}{2\pi} \int_{\partial \Omega} (Q K_g X + 2 \pi \mu_{\partial} e^{\frac{\gamma}{2} X}) d \lambda_{\partial g}, 
 \end{equation}
where $\partial_g$, $R_g$, $K_g$, $d \lambda_g$, and $d \lambda_{\partial g}$ stand respectively for the gradient, Ricci scalar curvature, geodesic curvature (along the boundary), volume form and line element along $\partial\Omega$ in the metric $g$. The parameters $\mu$, $\mu_{\partial} \geq 0$ (with $ \mu + \mu_{\partial} >0$) are respectively the bulk and boundary cosmological constants and $Q,\gamma$ satisfy $\gamma \in (0,2)$ and $ Q = \frac{2}{\gamma} + \frac{\gamma}{2}$. We notice that there are two boundary terms in our action which would not be present if $\Omega$ had no boundary. 
  
 One may wonder why the Liouville action is the correct action to define canonical random metrics. A first answer comes from physics, in particular from Polyakov in \cite{Pol} (see the appendix for some heuristics). An easier answer comes from the study of classical Liouville theory, meaning that we look for the functions $X$ minimizing the Liouville action. It is a well known fact of classical geometry that such a minimum $X_{min}$ is unique if it exists and the new metric $g' = e^{\gamma X_{min}} g $ on $\Omega$ is of constant negative curvature provided that $ Q_c = \frac{2}{\gamma}$\footnote{Here we write $Q_c = \frac{2}{\gamma}$ as this is the correct value in the classical theory where the goal is to minimize the Liouville action. In the quantum theory we will always have $Q =\frac{2}{\gamma} + \frac{\gamma}{2}$.}. In other words, the minimum of the Liouville action uniformizes the surface $(\Omega, g)$ and it is therefore natural to look at quantum fluctuations of the uniformized metric $e^{\gamma X_{min}} g $. This is precisely the meaning of \eqref{Path_int}.
 
As we have written it the path integral \eqref{Path_int} diverges for any surface of genus 0 or 1, including therefore the case of the annulus. To see this we can write the Gauss-Bonnet formula, given here for a boundaryless surface $M$ of genus $h$: $\int_M R_g d \lambda_g  = 8 \pi (1 -h)$. When $ h = 0$ or $1$, this implies that it is impossible to define on the surface a metric of constant negative curvature, meaning that $S_L(X,g)$ will have no minimum and therefore the path integral \eqref{Path_int} diverges. To solve this problem we proceed as in $\cite{Disk}$ and add insertion points. We consider the new expression,
\begin{equation}\label{Path_int2}
\mathbb{E}[F(\phi)] = \frac{1}{\mathcal{Z}} \int_{\Sigma} F(X) e^{\sum_{i=1}^n \alpha_i X(z_i) + \frac{1}{2}\sum_{j=1}^{n'} \beta_j X(s_j)} e^{-S_L(X,g)} D_gX,
\end{equation}
  where we have chosen $n$ insertion points $z_i \in \Omega$ with weights $\alpha_i \in \mathbb{R}$ in the interior of the annulus and $n'$ insertions points $s_j \in \partial \Omega $ with weights $\beta_j \in \mathbb{R}$ on the boundaries of the annulus ($n, n' \in \mathbb{N}$). We show that the following conditions known as the Seiberg bounds must be satisfied in order for \eqref{Path_int2}  to exist:
  \begin{align}
  \sum_{i=1}^n &\alpha_i + \sum_{j=1}^{n'} \frac{\beta_j}{2} > 0 \quad \text{and} \quad \forall i, \: \alpha_i < Q  , \quad    \forall j, \: \beta_j < Q.  
  \end{align}
Theses bounds imply that the minimum number of insertion points is one point in $\Omega$ or one point on the boundary $ \partial \Omega$. Fixing a point inside $\Omega$ is actually a stronger constraint than fixing a boundary point. Also note that choosing one boundary point is precisely the requirement to entirely determine a conformal automorphism of the annulus. From a geometric standpoint, we can also view insertion points as conical singularities of the metric which allow hyperbolic metrics to be defined on the surface.

Let us now outline the structure of our paper. We start in section \ref{GFFGMC} by introducing the Gaussian free field and the associated Gaussian multiplicative chaos measures and state the properties of these objects that we will need in the sequel. Then in section \ref{LQFT} we give a mathematical definition for the partition function of LQFT formally given by \eqref{Path_int2}. We establish some of its properties well known to physicists such as the Weyl anomaly (behavior under change of metric) and the KPZ formula (behavior under conformal automorphism). In subsection \ref{liouville_field} we define the Liouville field and the Liouville bulk and boundary measures for a fixed value of $\tau$. Lastly in section \ref{LQG} we construct the full theory of LQG by integrating over $\tau \in (1, + \infty)$ the partition function of LQFT along with the partition function of a matter field.\footnote{All partition functions will first be computed on an annulus of radii $1$ and $\tau$ and then we will perform the integration over $\tau$.} We prove that this integral over moduli space is convergent and thus we show that the partition function of LQG is well defined. We then give the joint law of the Liouville measures and of the random modulus and write the conjectured link with the scaling limit of random planar maps.

Our paper also clarifies a few points that were not addressed in \cite{Sphere, Disk}. First, we justify for any background metric $g$ the expression of the covariance of our field $X_g$ by diagonalizing an operator $T$ defined using Neumann boundary conditions. We also reconcile two possible approaches to Liouville conformal field theory - the one proposed in \cite{Sphere,Disk} and the framework of Gawedzki \cite{gaw} - by explaining that these two approaches correspond to different ways of introducing the metric dependence of the theory. This allows to provided a very simple proof of the KPZ formula which becomes a direct consequence of the Weyl anomaly. \\
  
\noindent

\emph{Acknowledgements}: I would like to thank Juhan Aru, Yichao Huang, R\'emi Rhodes, Xin Sun, Vincent Vargas and Tunan Zhu for all their comments that helped me improve this paper.

\section{Gaussian free field and Gaussian multiplicative chaos}\label{GFFGMC}

\subsection{Geometric background}

The first tool we need is a description of the space of metrics on a surface with the topology of an annulus. The uniformization theorem tells us that any metric $g$ can be written up to a change of coordinates $e^{\varphi} dx^2$ where $e^{\varphi}$ is the Weyl factor and $dx^2$ is the Euclidean metric on an annulus $\Omega = \{ z , 1 < |z| < \tau \} $ with radii $1$ and $\tau$, the value of $\tau$ being uniquely determined. The parameter $\tau \in (1, +\infty) $ parametrizes the space of non conformally equivalent annuli known as the moduli space. We will always work at fixed $\tau$ with the exception of in section \ref{LQG} where we perform an integration over $\tau$.

For the annulus given by $\Omega = \{ z , 1 < |z| < \tau \} $, we consider the inner and outer boundaries $\partial \Omega_1 = \{ z , |z| = 1 \}$ and $\partial \Omega_{\tau} = \{ z , |z| = \tau \}$ and we set $\partial \Omega = \partial \Omega_1 \cup \partial \Omega_{ \tau }$. When $\Omega$ and $\partial \Omega$ are equipped with the Euclidean metric, we will write $ d \lambda$ and $ d \lambda_{\partial}$ for the Lebesgue measures on $\Omega$ and $\partial \Omega$,  $\Delta$ for the Laplace-Beltrami operator and $\partial_n$ for the Neumann operator. The convention for $\partial_n$ is that the derivative is computed along the vector $\overrightarrow{n}$ normal to the boundary and pointing outward of $\Omega$. For a general metric $g$ we will use the notations $ \partial^{g}$, $\Delta_g$, $\partial_{n_g}$, $R_g$, $K_g$, $d \lambda_g$, and $d \lambda_{\partial g}$ for the gradient, Laplace-Beltrami operator, Neumann operator, Ricci scalar curvature, geodesic curvature, volume form and line element along $\partial\Omega$ in the metric $g$. Under a Weyl rescaling of the metric $g' =e^{\varphi} g$, these objects behave in the following way: $\Delta_{g'} = e^{-\varphi} \Delta_{g}$, $\partial_{n_{g'}} = e^{-\varphi/2} \partial_{n_{g}}$, $ d \lambda_{ g'} =  e^{\varphi}  d \lambda_{ g}$, and $d \lambda_{ \partial g'} =  e^{\varphi/2}  d \lambda_{ \partial g}$. We also have:
\begin{align}
&R_{g'} = e^{-\varphi} (R_g - \Delta_g \varphi) \label{change_R} \\
&K_{g'} = e^{-\varphi/2} (K_g + \partial_{n_g} \varphi/2 )   \label{change_K}
\end{align}

These formulas will be useful for $g = e^{\varphi} dx^2$.
In the case of the Euclidean metric $dx^2$ we have $R=0$ on $ \Omega $, $K= -1$ on $\partial \Omega_1 $, and $K= \frac{1}{\tau}$ on $\partial \Omega_{ \tau } $. Let us recall the Gauss-Bonnet theorem,
\begin{equation}\label{GB}
\int_{\Omega} R_g d \lambda_g + 2 \int_{\partial \Omega} K_g d \lambda_{\partial g} = 4 \pi \chi(\Omega) = 0,
\end{equation}
where in the case of the annulus the Euler characteristics $\chi(\Omega)$ is worth $0$. Similarly, the Green-Riemann formula gives:
\begin{equation}\label{GR}
\int_{\Omega} \psi \: \Delta_g \varphi \: d \lambda_g + \int_{\Omega} \partial^g \varphi \cdot \partial^g \psi \: d \lambda_{g} = \int_{\partial \Omega} \partial_{n_g} \varphi \: \psi \: d \lambda_{\partial g}. 
\end{equation}
Here again the derivative $\partial_{n_g}$ is along the normal vector pointing outwards of $\Omega$. For a function $f$ defined on $\Omega$ or $\partial \Omega$, we define the averages $m_g(f)$ and $m_{\partial g}(f)$ with respect to $d \lambda_g$ and $d  \lambda_{\partial g}$ as $m_g(f) = \frac{1}{\lambda_g(\Omega)} \int_{\Omega} f d\lambda_g$ and $m_{\partial \Omega}(f) = \frac{1}{\lambda_{\partial g}(\partial \Omega)} \int_{\partial \Omega} f d  \lambda_{\partial g} $. $m(f)$ and $m_{\partial}(f)$ will correspond to the averages in the flat metric. Finally let us mention that the conformal automorphisms of our annulus $\Omega$ are the rotations plus the inversion that exchanges both boundaries. Therefore a conformal automorphism can be written either $ z \rightarrow e^{i\theta} z $ or $ z \rightarrow e^{i \theta} \frac{\tau}{z}$ for a $\theta \in [0, 2 \pi )$.

\subsection{Gaussian free field}\label{gradient_term}

The first step in the construction of LQFT is to interpret the formal density $ D_{g} X  e^{- \frac{1}{4 \pi} \int_{\Omega} \vert \partial^g X \vert^2 d \lambda_g}$ as the density of a Gaussian free field (GFF) with certain boundary conditions. Let $H_0^1(\Omega)$ be the Sobolev space on $\Omega$ with $ \forall f \in H_0^1(\Omega), \: \int_{\partial \Omega} f d \lambda_{\partial g} =0 $. $H_0^1(\Omega)$ is a Hilbert space with respect to the inner product given for $f,h \in H_0^1(\Omega) $ by:
\begin{equation}
\langle f, h \rangle  := \int_{\Omega} ( \partial f \cdot \partial h) d \lambda. 
\end{equation}
We define the operator $T$ on $H_0^1(\Omega)$ by, $\forall f \in H_0^1(\Omega), T(f) = u$ where the function $u$ is the unique solution to the following Neumann problem:
\begin{equation}\label{neu_prob}  
  \left \lbrace  \begin{array}{lcl} \Delta_g u(z) = - 2 \pi f(z) & \text{for} & z\in \Omega  \\ 
 \partial_{n_g} u(z)  = -\frac{ c_g(z) }{ \tau +1 }    \int_{\Omega} f(z') d\lambda_g(z') & \text{for} &  z \in \partial \Omega    \\ 
 \int_{\partial \Omega} u(z) d \lambda_{\partial g}(z) = 0 & & \end{array}  \right. 
\end{equation}
The function $c_g(z)$ defined on $ \partial \Omega $ that appears in the above is equal to $1$ in the case of the Euclidean metric and has a general expression in any metric $g$ given by \eqref{def_cg}. We can give the expression of $u$ in terms of $f$ using the Green's function $G_g$:
\begin{equation}\label{neu_prob2}
u(z) = \int_{\Omega} G_g(z,z') f(z') d\lambda_g(z'). 
\end{equation}
The expression for $G_g$ will be given in subsection \ref{sub_green}. $T$ is an auto-adjoint compact operator of $H_0^1(\Omega)$, therefore we can find a basis $e_i, i \in \mathbb{N}^*$ of eigenvectors with associated eigenvalue $\lambda_i^{-1} >0 $. The $e_{i}$ are normalized so that $ \int_{\Omega} e_{i}(z)^{2} d \lambda_{g}(z) =1 $. With this basis of eigenfunctions we can formally write, 
\begin{equation}
 D_gX e^{- \frac{1}{4 \pi} \int_{\Omega} \vert \partial^g X \vert^2 d \lambda_g} = \prod_{i=1}^{+\infty} dx_i \: e^{-\frac{1}{2} \sum_{i=1}^{+\infty} \lambda_i x_i^2},
\end{equation}
where each $dx_i$ is a Lebesgue measure on $\mathbb{R}$. This corresponds to the density of a GFF $X_g$ which can be written,\footnote{Here the dependence in $g$ is contained in  $e_i$ and in $\lambda_i$.}
\begin{equation}\label{def_GFF}
X_g(z) = \sum_{i=1}^{+ \infty} \frac{x_i e_i(z)}{\sqrt{\lambda_i}},
\end{equation}
where in this expression the $x_i$ are i.i.d. standard Gaussian variables. This sum converges in the Sobolev space $H^{-1}(\Omega)$. Just like for all the $e_i$, we have $\int_{\partial \Omega} X_g \: d\lambda_{\partial g} = 0$. But this condition is arbitrary as the formal density $ D_gX e^{- \frac{1}{4 \pi} \int_{\Omega} \vert \partial^g X \vert^2 d \lambda_g}$ only defines $X_g$ up to a constant. To deal with this problem we will actually consider $X_g +c$, where $X_g$ is the GFF defined above with zero average on the boundary and $c$ is a constant integrated according to the Lebesgue measure on $\mathbb{R}$. We must also take into account the value of the formal normalization constant $\mathcal{Z}_{GFF}(g)$ of our Gaussian density,
\begin{equation}\label{partition_GFF_def}
\mathcal{Z}_{GFF}(g) = \int_{\Sigma'} D_gX e^{- \frac{1}{4 \pi} \int_{\Omega} \vert \partial^g X \vert^2 d \lambda_g},
\end{equation}
where here $\Sigma'$ stands for the space of maps $X : \Omega \mapsto \mathbb{R}$ with zero average on $\partial \Omega$. The dependence of $\mathcal{Z}_{GFF}(g)$ on the background metric $g$ is given by \eqref{shift_partition_GFF}. To summarize this construction we have for suitable functionals $ \hat{F}$:
\begin{equation}
\int_{\Sigma} D_g X e^{ -\frac{1}{4 \pi } \int \vert \partial^g X \vert^2 d  \lambda_g} \hat{F}(X) = \mathcal{Z}_{GFF}(g) \int_{\mathbb{R}} dc \: \mathbb{E} [ \hat{F}(X_g +c)].
\end{equation}
To define LQFT we will choose, 
\begin{equation}
\hat{F}(X) = F(X)e^{\sum_i \alpha_i X(z_i) + \frac{1}{2}\sum_j \beta_j X(s_j)} e^{ -\frac{1}{4 \pi} \int_{\Omega} ( Q R_g X + 4 \pi \mu e^{\gamma X}) d \lambda_g} e^{ -\frac{1}{2 \pi} \int_{\partial \Omega} ( Q K_g X + 2 \pi \mu_{\partial} e^{\frac{\gamma}{2} X}) d \lambda_{\partial g}}, 
\end{equation}
where $F$ is again some functional, the $(z_i, \alpha_i)$ are the bulk insertion points, the $(s_j, \beta_j)$ are the boundary insertion points, and the last part corresponds to the curvature and exponential terms of the Liouville action. Extra care will be required as $X_g$ lives in the space of distributions $H^{-1}(\Omega)$ so $e^{\gamma X_g}$ is ill-defined.

  \subsection{Properties of the Green's function}\label{sub_green}
We can check that the function $G_g(z,z')$, which solves problem \eqref{neu_prob} through the relation \eqref{neu_prob2}, is also the covariance of our Gaussian free field defined by \eqref{def_GFF}:
\begin{equation}
G_g(z,z') = \mathbb{E}[ X_g(z) X_g(z')] = \sum_{i=1}^{\infty} \frac{e_i(z) e_i(z')}{\lambda_i}.
\end{equation}
The Neumann boundary conditions of \eqref{neu_prob} entirely determine $G_g$. Other boundary conditions would also be possible but they lead to major inconsistencies in the theory\footnote{For instance for Dirichlet boundary conditions the Weyl anomaly of section \ref{sec_weyl} fails to hold.}. In the case of the Euclidean metric $g= dx^2$, we simply write $G$ for the Green's function. From the computation detailed in the appendix we obtain the following expression for $G$ using polar coordinates $z=r e^{i \theta}$, $z' = \rho e^{i \phi}$,
\begin{equation}\label{green}
 G(r,\theta,\rho, \phi) =   g_0(r,\rho) +2\sum_{n=1}^{\infty} {g}_n(r,\rho) \cos n(\theta - \phi) + \ln \frac{|\tau^4 z^2 z'^2|}{|1 - z \overline{z'}| |\tau^2 - z \overline{z'}||z - z'| |\tau^2 z - z'|},  
\end{equation}  
  where the factor $|\tau^2 z - z'|$ holds for $r < \rho$ and is replaced by $| z - \tau^2 z'|$ for $r > \rho$, and where
  \begin{align}
  &g_n(r,\rho) = \frac{r^{-n} \rho^{-n} }{2n \tau^{2n}(\tau^{2n} - 1)} \left\{ \begin{array}{lcl} (\tau^{2n} + \rho^{2n})(r^{2n} + 1), & \text{for}  & r \leq \rho \\ (\tau^{2n} + r^{2n})(\rho^{2n} + 1), & \text{for}  &  r \geq \rho \end{array} \right. \\
  &g_0(r,\rho) = \left\{ \begin{array}{lcl}  \frac{ \ln(r) + \tau^2 \ln (\tau/\rho) + \tau \ln (r/ \rho) }{(\tau +1)^2}, & \text{for} & r \leq \rho \\  \frac{ \ln(\rho) + \tau^2 \ln (\tau/r) +  \tau \ln ( \rho/r) }{(\tau+1)^2}, & \text{for} &  r \geq \rho \end{array} \right.
  \end{align}
We now give some useful properties of the Green's function. By construction we have:
\begin{align}
&G(z,z') = G(z',z) \\
&\Delta G(z,z') = - 2 \pi \delta_0(z-z')
\end{align}
  By direct computation on the expression of the Green's function, we have the following results:
\begin{align}\label{der_g}
  &\frac{\partial G(r,\theta,\rho, \phi)}{\partial r} |_{r = 1} = \frac{\partial g_0(r,\rho)}{\partial r} |_{r = 1} =  \left\{ \begin{array}{lcl} \frac{1}{\tau +1} , & \text{for} & \rho >1 \\  \frac{-\tau}{\tau +1}, & \text{for} &  \rho = 1 \end{array} \right.  \\
  &\frac{\partial G(r,\theta,\rho, \phi)}{\partial r} |_{r = \tau} = \frac{\partial g_0(r,\rho)}{\partial r} |_{r = \tau} = \left\{ \begin{array}{lcl} \frac{-1}{\tau +1} , & \text{for} & \rho < \tau \\  \frac{1}{\tau(\tau +1)}, & \text{for} &  \rho = \tau \end{array} \right. \nonumber
\end{align}
Concerning the integral of the Green's function over the boundaries, we get:
\begin{equation}\label{bou_g} 
\int_{\partial \Omega} G(z, z') d\lambda_{\partial}(z)= \frac{2 \pi  }{(\tau +1)^2} (\tau^2 \ln(\tau/\rho) + \tau \ln(1/\rho ))+ \frac{2 \pi \tau }{(\tau +1)^2} ( \ln(\rho) + \tau \ln(\rho/\tau )) =0.
  \end{equation}
 Finally, we have the following formula valid for all (smooth) functions $f$ on $\Omega$ and for all $z \in \Omega$:
  \begin{equation}\label{gr_g}
  \int_{\Omega} G(z,z') \Delta f(z') d\lambda(z') - \int_{\partial\Omega} G(z,z') \partial_nf(z') d\lambda_{\partial}(z')  = -2 \pi (f(z) - m_{\partial} (f) ). 
  \end{equation}
  Indeed, using \eqref{GR} and \eqref{der_g}
  $$ \int_{\Omega} G(z,z') \Delta f(z') d\lambda(z') = -\int_{\Omega} \partial G(z,z') \partial f(z') d\lambda(z')
  + \int_{\partial\Omega} G(z,z') \partial_nf(z') d\lambda_{\partial}(z') $$
  and
  \begin{align*}
  -\int_{\Omega} \partial G(z,z') \partial f(z') d\lambda(z') &= \int_{\Omega} \Delta G(z,z') f(z') d\lambda(z') 
  - \int_{\partial\Omega}  \partial_n G(z,z')f(z') d\lambda_{\partial}(z') \\
  &= -2\pi f(z)  + \frac{1}{\tau +1} \int_{\partial \Omega} f(z') d \lambda_{\partial}(z') \\
  &= -2\pi(f(z) - m_{\partial }(f)).
  \end{align*}
We now discuss the relation between $G_g$ and $G$ when $ g = e^{\varphi} dx^2 $. By definition $G$ solves our Neumann problem \eqref{neu_prob} in the flat metric $dx^2$. We will show that $G_g$ is given by the following expression,
\begin{equation}\label{green_g}
G_g(z,z') = G(z,z') - m_{ \partial g}( G(z,\cdot)) - m_{ \partial g}( G(\cdot,z')) + m_{ \partial g}( G(\cdot,\cdot)),\footnote{Here we have defined $m_{ \partial g}( G(\cdot,\cdot)) := \frac{1}{\lambda_{\partial g}(\partial \Omega)^2} \int_{\partial \Omega} \int_{\partial \Omega} G(z,z') d \lambda_{\partial g}(z) d \lambda_{\partial g}(z') $. }
\end{equation}
which means that we must check that this function solves our problem \eqref{neu_prob} in the metric $g$. By construction of $G_g$, we clearly have $\int_{\partial \Omega} u(z) d\lambda_{\partial g}(z) = 0$. Next we must check that $ \Delta_g u(z) = - 2 \pi f(z) $:
\begin{align*}
\Delta_g u(z) &= e^{- \varphi(z)} \int_{\Omega} \Delta G(z,z') f(z') d \lambda_g(z') - e^{- \varphi(z)} \int_{\Omega} \Delta (m_{\partial g} G(z, \cdot)) f(z') d \lambda_g(z') \\
 &= - 2 \pi f(z) - e^{-\varphi(z)} \frac{\lambda_g(\Omega)}{\lambda_{\partial g }(\partial \Omega)} m_g(f) \int_{\partial \Omega} \Delta G(z,z') d \lambda_{\partial g}(z')\\
 &= - 2 \pi f(z).
\end{align*}
Lastly we look at the normal derivative, the function $c_g(z)$ introduced in \eqref{neu_prob} will be chosen to match the following computation,
\begin{align}
\partial_{n_g} u(z) &= e^{-\varphi(z)/2} \int_{\Omega} \partial_n G(z,z') f(z') d \lambda_g(z') - e^{-\varphi(z)/2} \int_{\Omega} \partial_n m_{\partial g} G(z, \cdot) f(z') d \lambda_g (z')  \nonumber \\
&= - \frac{e^{-\varphi(z)/2}}{\tau +1} \int_{\Omega}  f(z') d \lambda_g(z') - e^{-\varphi(z)/2} \partial_n m_{\partial g} G(z, \cdot) \int_{\Omega}  f(z') d \lambda_g(z') \nonumber \\
& = -\frac{c_g(z)}{\tau +1} \int_{\Omega} f(z') d \lambda_g(z'),
\end{align}
where a straightforward computation shows that $c_g(z)$ is given by:
\begin{equation}\label{def_cg}
c_g(z)  = \left\{ \begin{array}{lcl} e^{-\varphi(z)/2}(1 + \tau  \lambda_{\partial g}(\partial \Omega_1) -    \lambda_{\partial g}(\partial \Omega_{\tau}) ) , & \text{for} & z \in \partial \Omega_1 \\  e^{-\varphi(z)/2}(1 -   \lambda_{\partial g}(\partial \Omega_1) + \frac{1}{\tau}   \lambda_{\partial g}(\partial \Omega_{\tau}) ), & \text{for} &  z \in \partial \Omega_{\tau} \end{array} \right.
\end{equation}
We notice that $c_g(z) =1$ for the Euclidean metric. From all the above we have shown that \eqref{green_g} gives the correct expression for $G_g$. From this we can easily deduce
\begin{equation}
X_g(z) \overset{law}{=} X(z) - m_{\partial g}(X)
\end{equation}
where we have written $X$ for the GFF in the flat metric of covariance $G$. In particular this tells us that $X_g + c$ for $g = e^{\varphi} dx^2$ and $c$ distributed according to the Lebesgue measure on $\mathbb{R}$ is independent of $\varphi$ because we can simply make a shift in the integral over $c$ to remove the constant $m_{\partial g}(X)$. More precisely, for $g = e^{\varphi} dx^2$ and any suitable functional $F$:
\begin{equation}
\int_{\mathbb{R}} dc \; \mathbb{E} [ F(X_{g} +c)] = \int_{\mathbb{R}}  dc \; \mathbb{E} [ F(X +c)].
\end{equation}

  \subsection{Circle average regularization}\label{sec_circle}
    
Let $X$ be the GFF on $\Omega$ with covariance function given by $G$. Because $X$ lives almost surely in the space of distributions $H^{-1}(\Omega)$ and we wish to define the exponential of $X$,  we need to introduce a regularization procedure. We will choose a circle average regularization, more precisely for $\epsilon >0$ we call $l_{\epsilon}(x)$ the length of the arc 
    $ A_{\epsilon}(x) = \{z \in \Omega; |z-x| = \epsilon\}$ and we set:
    \begin{equation}
     X_{g,\epsilon}(x) = \frac{1}{l_{\epsilon}(x)} \int_{A_{\epsilon}(x)} X(x + g(x)^{-1/2} s) ds.
    \end{equation}
For a point $x$ at a distance larger than $\epsilon$ from the boundary this definition gives the standard circle average. The term $g^{-1/2}$ is added because the regularization must depend on the background metric $g$ in order to get a consistent theory.\footnote{This way of introducing the metric dependence is slightly different than the one of \cite{Sphere, Disk}. The advantage of our method is that it allows us to recover exactly the framework of \cite{gaw}, see Theorem \ref{WEYL}. The KPZ formula of section \ref{sec_KPZ} then becomes a direct consequence of the Weyl anomaly of section \ref{sec_weyl}.} Here $g(x)= e^{\varphi(x)}$ with as usual $g = e^{\varphi}dx^2 $.  We have the following result:
    
\begin{proposition}\label{circlegreen}    
As $\epsilon \rightarrow 0$ we have the following convergences.    \\
1) Uniformly over all compact subsets of $\Omega$: 
    $$ \mathbb{E}[X_{g , \epsilon}(x)^2] + \ln \epsilon \underset{\epsilon \rightarrow 0}{\rightarrow} \frac{1}{2} \ln g(x) +  \ln g_P(x) + h(x)$$     
\\      
2)  Uniformly over $\partial \Omega$: 
    $$ \mathbb{E}[X_{g , \epsilon}(x)^2] + 2 \ln \epsilon \underset{\epsilon \rightarrow 0}{\rightarrow} \ln g(x) +  h_{\partial}(x)$$ 
Here $g$ is our background metric, $g_P(x) := \frac{1}{|1 -  |x|^2| |\tau^2 - |x|^2|} $ is the term that diverges on the boundaries, and $h$, $h_{\partial}$ are continuous functions independent of $g$ whose explicit expressions are given in the proof below.
\end{proposition}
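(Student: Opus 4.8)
The plan is to reduce everything to the behaviour of the flat Green's function $G$ near the diagonal. The metric enters the regularization only through the radius: averaging $X$ over the $g$-circle of radius $\epsilon$ about $x$ is the Euclidean circle average $X_\delta(x)$ at the rescaled radius $\delta := g(x)^{-1/2}\epsilon = e^{-\varphi(x)/2}\epsilon$. Writing $A_\delta(x)$ for the (possibly truncated) circle of Euclidean radius $\delta$ around $x$ and $l$ for its length, one then has, up to corrections negligible as $\epsilon\to 0$,
\begin{equation*}
\mathbb{E}[X_{g,\epsilon}(x)^2] = \frac{1}{l^2}\int_{A_\delta(x)}\int_{A_\delta(x)} G(z,z')\,|dz|\,|dz'| + o(1),
\end{equation*}
so the whole statement is an asymptotic analysis of this double integral, carried out using the explicit formula \eqref{green}.

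For part 1), $x$ ranges over a compact $K\subset\Omega$, so for $\delta$ small the circle $A_\delta(x)$ lies entirely inside $\Omega$ and $l=2\pi\delta$. I split the Green's function as $G(z,z')=-\ln|z-z'|+H(z,z')$, where from \eqref{green} the function
\begin{equation*}
H(z,z') = g_0(r,\rho) + 2\sum_{n=1}^\infty g_n(r,\rho)\cos n(\theta-\phi) + \ln\frac{|\tau^4 z^2 z'^2|}{|1 - z\overline{z'}||\tau^2 - z\overline{z'}||\tau^2 z - z'|}
\end{equation*}
is continuous near the diagonal of $K\times K$. Averaging the singular part over the circle uses the elementary identity $\frac{1}{2\pi}\int_0^{2\pi}\ln|1-e^{i\theta}|\,d\theta=0$, which yields $\frac{1}{(2\pi\delta)^2}\int\!\!\int(-\ln|z-z'|)\,|dz||dz'| = -\ln\delta$, while averaging the continuous part gives $H(x,x)+o(1)$ uniformly on $K$ by continuity of $H$. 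Substituting $\delta=e^{-\varphi(x)/2}\epsilon$ turns $-\ln\delta$ into $-\ln\epsilon+\tfrac12\ln g(x)$, and evaluating $H$ on the diagonal isolates the two boundary factors $|1-|x|^2|$ and $|\tau^2-|x|^2|$, i.e. $H(x,x)=\ln g_P(x)+h(x)$ with $h$ the remaining continuous, $g$-independent part. This gives the claimed limit and simultaneously the explicit expression for $h$.

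Part 2) is the delicate one. When $x\in\partial\Omega$ the arc $A_\delta(x)\cap\Omega$ is essentially a half-circle, so $l\sim\pi\delta$, and, more importantly, the Neumann boundary condition of \eqref{neu_prob} forces the singularity to double: near a boundary point the method of images makes $G$ behave like $-\ln|z-z'|$ plus the contribution of the reflected singularity, and the two coincide when both arguments sit on $\partial\Omega$. Consequently the averaged singular part now produces $-2\ln\delta$ rather than $-\ln\delta$, which is the origin of the coefficient $2$ in front of $\ln\epsilon$; with $\delta=e^{-\varphi(x)/2}\epsilon$ this contributes $-2\ln\epsilon+\ln g(x)$. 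I would make this precise by straightening each boundary circle with a local conformal chart (the inner and outer boundaries being handled symmetrically), computing the half-arc average of the combined direct-plus-image logarithmic singularity, and reading the finite remainder off \eqref{green} restricted to $\partial\Omega$; the residual continuous term is $h_\partial(x)$. The main obstacle here is bookkeeping: one must verify that the truncation of the arc and the curvature of $\partial\Omega$ contribute only $o(1)$, and that the convergence is uniform in $x\in\partial\Omega$, which follows because all error terms are controlled by the moduli of continuity of $H$ and of $\varphi$ on the compact boundary.
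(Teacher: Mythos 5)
Your proposal is correct and follows essentially the same route as the paper: both reduce $\mathbb{E}[X_{g,\epsilon}(x)^2]$ to a double integral of the explicit flat Green's function \eqref{green} over the (possibly truncated) circle of radius $g(x)^{-1/2}\epsilon$, extract the logarithmic singularity via the mean-value identity $\frac{1}{2\pi}\int_0^{2\pi}\ln|1-e^{i\theta}|\,d\theta=0$ in the interior, and attribute the doubled coefficient of $\ln\epsilon$ on $\partial \Omega$ to the image term $\ln|1-z\overline{z'}|^{-1}$ (resp. $\ln|\tau^2-z\overline{z'}|^{-1}$) degenerating on the boundary diagonal. The only cosmetic difference is that the paper reads the boundary asymptotics directly off \eqref{green} --- for $|z|=|z'|=1$ one has exactly $|1-z\overline{z'}|=|z-z'|$, so no local chart straightening is needed --- which is precisely how the explicit half-circle double-integral expression for $h_{\partial}$ arises.
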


\begin{proof}
This proposition follows from an explicit computation with the Green's function of $X$. For a point $x \in \Omega$, we have:
\begin{align*}
\mathbb{E}[X_{g , \epsilon}(x)^2] &= \mathbb{E} [ \frac{1}{4\pi^2} \int_{0}^{2\pi} X(x + \epsilon g(x)^{-1/2}  e^{i \theta}) d\theta \int_0^{2 \pi} X(x + \epsilon g(x)^{-1/2} e^{i \theta'}) d\theta'  ] \\
 &=  \frac{1}{4\pi^2} \int_{0}^{2\pi}   \int_{ 0}^{2\pi} G(x + \epsilon g(x)^{-1/2} e^{i \theta},x + \epsilon g(x)^{-1/2} e^{i \theta'} ) d\theta d\theta'
\end{align*}
    We then notice that in the expression of the Green's function, only the term $\ln \frac{1}{|z-z'|}$ diverges when $\epsilon$ goes to 0. We have added consequently the term $\ln \epsilon$ to cancel this divergence. From this computation we get the expression of the function $h$,
\begin{align*}
h(x)  &= g_0(r,r) + 2 \sum_{n=1}^{\infty} g_n(r,r) + \ln \frac{\tau^4 r^3}{\vert \tau^2 - 1 \vert} +  \frac{1}{4\pi^2} \int_{0}^{2\pi} \int_{0}^{2\pi} d\theta d\theta' \ln \frac{1}{\vert e^{i\theta} - e^{i\theta'} \vert }   \\
&= g_0(r,r) + 2 \sum_{n=1}^{\infty} g_n(r,r) + \ln \frac{\tau^4 r^3}{\vert \tau^2 - 1 \vert},
\end{align*}   
where $r = \vert x \vert$. Similarly for the boundary $\partial \Omega$ we have the following expression for $h_{\partial}$:
\begin{small}
\begin{align*}
&h_{\partial}(x) = \left\{ \begin{array}{lcl}  g_0(1,1) + 2\sum_{n=1}^{\infty} {g}_n(1,1) + \ln \frac{|\tau^4|}{|\tau^2-1|^2}   -  \frac{1}{\pi^2} \int_{0}^{\pi} \int_{0}^{\pi} \ln|(e^{i \theta} - e^{i \theta'} ) \frac{\overline{x} e^{i \theta} + x e^{-i \theta'}} {x}| d\theta  d \theta'  , & \text{for} & x \in \partial \Omega_1 \\  g_0(\tau,\tau) + 2\sum_{n=1}^{\infty} {g}_n(\tau,\tau) + \ln \frac{|\tau^6|}{|\tau^2-1|^2} -  \frac{1}{\pi^2} \int_{0}^{\pi} \int_{0}^{\pi} \ln| (e^{i \theta} - e^{i \theta'})   \frac{\overline{x} e^{i \theta} + x e^{-i \theta'}} {x}| d\theta  d \theta'
, & \text{for} &  x \in \partial \Omega_{\tau} \end{array} \right.
\end{align*}
\end{small}
\end{proof}
 
\subsection{Gaussian multiplicative chaos}\label{sec_GMC}
Now that our field $X$ is random distribution, we need to tackle the problem of giving sense to the terms $ \mu e^{\gamma X}$ and $\mu_{\partial} e^{ \frac{\gamma}{2} X}$ as the exponential of a distribution is ill-defined. This can be done using the theory Gaussian multiplicative chaos which was introduced by Kahane in \cite{Kah} and developed by others, see \cite{review, Houches}. The results from these papers allow us to claim: 
\begin{proposition}\label{def_GMC}
Let $\gamma \in (0,2)$ and let $g$ be a metric on $\Omega$. We introduce for $\epsilon >0$ the random measures:
\begin{align*}
&M_{\gamma, g, \epsilon}(dx) = \epsilon^{\frac{\gamma^2}{2}} e^{ \gamma X_{g, \epsilon} (x) } d \lambda_g(x), \\
&M^{\partial}_{\gamma, g, \epsilon}(dx) =\epsilon^{\frac{\gamma^2}{4}} e^{ \frac{\gamma}{2} X_{g, \epsilon} (x) } d \lambda_{\partial g}(x).
\end{align*}
We then define the random measures $ M_{\gamma, g}(dx) $ on $\Omega$ and $M^{\partial}_{\gamma, g}(dx) $ on $\partial \Omega$ as the following limits in probability
\begin{align*}
&M_{\gamma, g}(dx)  :=\lim_{ \epsilon \rightarrow 0} M_{\gamma, g, \epsilon}(dx) = \lim_{ \epsilon \rightarrow 0}  e^{\gamma X_{\epsilon}(x) - \frac{\gamma^2}{2} \mathbb{E}[ X_{\epsilon}(x)^2]} g(x)^{ \frac{\gamma^2}{4}} g_P(x)^{\frac{\gamma^2}{2}} e^{\frac{\gamma^2}{2}h(x)} d\lambda_{g}(x)  \\
&M^{\partial}_{\gamma, g}(dx) :=  \lim_{ \epsilon \rightarrow 0} M^{\partial}_{\gamma,  g, \epsilon}(dx)  = \lim_{ \epsilon \rightarrow 0} e^{\frac{\gamma}{2} X_{\epsilon}(x) - \frac{\gamma^2}{8} \mathbb{E}[ X_{\epsilon}(x)^2]} g(x)^{\frac{\gamma^2}{8}} e^{\frac{\gamma^2}{8}h_{\partial}(x)} d\lambda_{ \partial g}(x)
\end{align*}
in the sense of weak convergence of measures respectively over $\Omega$ and $ \partial \Omega$.
\end{proposition}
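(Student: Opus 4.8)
The statement is an application of the general theory of subcritical Gaussian multiplicative chaos to the log-correlated field $X$, the only genuinely new inputs being the verification that the Neumann Green's function $G$ carries the correct logarithmic singularities and the identification of the deterministic prefactors supplied by Proposition \ref{circlegreen}. The plan is to treat the bulk and boundary measures separately, reducing each to a convergence theorem available in \cite{Kah, review, Houches}, and then to read off the multiplicative corrections by Wick-renormalizing.

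For the bulk measure I would first record from \eqref{green} that on the diagonal the only singular contribution to $G(z,z')$ is $\ln\frac{1}{|z-z'|}$, every other term being continuous on a neighborhood of an arbitrary compact $K\subset\Omega$. Thus $X$ restricted to $K$ is log-correlated with unit coefficient, so that the subcriticality condition $\gamma^2<4$ is exactly $\gamma<2$, and the standard convergence theorem for circle-average regularizations applies. To recover $M_{\gamma,g,\epsilon}$ I would write
$$M_{\gamma,g,\epsilon}(dx) = e^{\gamma X_{g,\epsilon}(x)-\frac{\gamma^2}{2}\mathbb{E}[X_{g,\epsilon}(x)^2]}\cdot \exp\Big(\tfrac{\gamma^2}{2}\big(\mathbb{E}[X_{g,\epsilon}(x)^2]+\ln\epsilon\big)\Big)\,d\lambda_g(x),$$
the first factor being a Wick-normalized chaos whose convergence (against compactly supported test functions, in probability) is the cited theorem. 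By part 1) of Proposition \ref{circlegreen} the deterministic factor converges uniformly on $K$ to $g(x)^{\gamma^2/4}g_P(x)^{\gamma^2/2}e^{\frac{\gamma^2}{2}h(x)}$, and multiplying a vaguely convergent sequence of measures by a uniformly convergent continuous density preserves convergence, which yields the claimed limit.

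For the boundary measure the mechanism is identical but one dimension lower. The key observation is that for $z,z'$ on the same boundary circle the image term $\ln\frac{1}{|1-z\overline{z'}|}$ (resp. its $\tau^2$ counterpart) degenerates to $\ln\frac{1}{|z-z'|}$, so that the trace of $X$ on $\partial\Omega$ is log-correlated with coefficient $2$; consequently $\frac{\gamma}{2}X|_{\partial\Omega}$ is a one-dimensional chaos field whose subcriticality condition $(\gamma/2)^2\cdot 2<2$ again reads $\gamma<2$. Applying the one-dimensional convergence theorem and matching normalizations through part 2) of Proposition \ref{circlegreen} (now with the correction $2\ln\epsilon$) produces $M^{\partial}_{\gamma,g}$ with the stated density $g(x)^{\gamma^2/8}e^{\frac{\gamma^2}{8}h_{\partial}(x)}$.

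The step I expect to be most delicate is the interface between the metric-dependent regularization $X_{g,\epsilon}$ and the flat circle average $X_\epsilon$ that appears in the displayed limits: one must check that replacing the Euclidean radius $\epsilon$ by the metric radius $\epsilon\,g(x)^{-1/2}$, and for boundary points using the arc $A_\epsilon(x)\cap\Omega$, does not alter the limiting measure. I would handle this by a uniform second-moment (or martingale) estimate showing that the two regularized fields differ by a quantity negligible in the relevant sense, exactly as in the proofs of the analogous statements in \cite{Sphere, Disk}; the continuity of $\varphi$ up to $\partial\Omega$ makes the radius change a bounded smooth perturbation. Everything else follows routinely once Proposition \ref{circlegreen} is in hand.
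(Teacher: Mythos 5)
Your proposal is correct and follows essentially the same route as the paper, which offers no detailed proof but simply invokes the standard subcritical GMC convergence theory of \cite{Kah, review, Houches} after the regularization asymptotics of Proposition \ref{circlegreen} are established. Your Wick-renormalization bookkeeping, the identification of the doubled logarithmic coefficient on $\partial\Omega$ (giving the $2\ln\epsilon$ correction and the subcriticality condition $\gamma^2/2<2$), and the second-moment comparison between $X_{g,\epsilon}$ and the flat circle average are exactly the details the paper leaves implicit by deferring to \cite{Sphere, Disk}.
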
   
In order to prove the Weyl anomaly we will need to write how these measures behave under a conformal change of metric. More precisely for $g' = e^{\varphi} g$ we have: 
\begin{align}\label{shift_GMC}
&M_{\gamma, g'}(dx) = e^{(1 + \frac{\gamma^2}{4}) \varphi(x) } M_{\gamma, g}(dx), \\
&M^{\partial}_{\gamma, g'}(dx) = e^{\frac{1}{2}(1 + \frac{\gamma^2}{4}) \varphi(x) } M^{\partial}_{\gamma, g}(dx). \nonumber
\end{align}
To have a finite partition function (see  section \ref{sec_def_p}), we must show that these measures give an almost surely finite mass to the annulus and to its boundary. Here we write the results in the flat metric $g =dx^2$. We will use the notations $M_{\gamma,dx^2}$ and $M^{\partial}_{\gamma,dx^2}$ for the bulk and boundary GMC measures in the flat metric. For the boundary measure the expectation of the total mass of $\partial \Omega$ is finite:
$$ \mathbb{E}[M^{\partial}_{\gamma, dx^2}(\partial \Omega) ] = \int_{\partial \Omega} e^{\frac{\gamma^2}{8} h_{\partial}(x) } d\lambda_{\partial}(x) < + \infty.  $$
On the other hand for the bulk measure on $\Omega$, this is not straightforward at all as for instance the expectation
is infinite as soon as $\gamma^2 \geq 2$ because of the divergence of $g_P(x)$ on the boundary:
$$  \mathbb{E} [M_{\gamma, dx^2}(\Omega)] = \int_{\Omega} g_P(x)^{\frac{\gamma^2}{2}} e^{\frac{\gamma^2}{2} h(x) } d\lambda(x) = + \infty \: \: \text{when} \: \:  \gamma^2 \geq 2. $$
But as it is shown in \cite{Disk}, the random variable $M_{\gamma, dx^2}(\Omega)$ is almost surely finite for all values of $\gamma \in (0,2)$. Therefore we have:
  
  \begin{proposition}\label{fin_mass}
For $\gamma \in (0,2)$ the following quantities are almost surely finite:
$$ M_{\gamma, g}(\Omega) \: \text{  and  }  \:  M^{\partial}_{\gamma, g}(\partial \Omega). $$
\end{proposition}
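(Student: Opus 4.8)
The plan is to reduce both statements to the flat metric $dx^2$ and then to invoke, for the boundary measure, the finiteness of the expectation recorded just above the statement, and for the bulk measure, the almost sure finiteness already established in \cite{Disk}. First I would carry out the reduction. Since $g = e^{\varphi}dx^2$ with $\varphi$ continuous on the compact closure $\overline{\Omega}$, it is bounded, say $|\varphi| \leq C$. The Weyl covariance relations \eqref{shift_GMC} give the pointwise identities
\begin{equation*}
M_{\gamma, g}(dx) = e^{(1 + \frac{\gamma^2}{4})\varphi(x)}M_{\gamma, dx^2}(dx), \qquad M^{\partial}_{\gamma, g}(dx) = e^{\frac{1}{2}(1 + \frac{\gamma^2}{4})\varphi(x)}M^{\partial}_{\gamma, dx^2}(dx),
\end{equation*}
so that $M_{\gamma, g}(\Omega) \leq e^{(1 + \frac{\gamma^2}{4})C}M_{\gamma, dx^2}(\Omega)$ and likewise on the boundary; hence it suffices to treat $g = dx^2$. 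For the boundary measure this already finishes the argument: the computation preceding the statement gives $\mathbb{E}[M^{\partial}_{\gamma, dx^2}(\partial\Omega)] = \int_{\partial\Omega}e^{\frac{\gamma^2}{8}h_{\partial}(x)}d\lambda_{\partial}(x)$, which is finite because $h_{\partial}$ is continuous on the compact set $\partial\Omega$, and a nonnegative random variable with finite expectation is almost surely finite.

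The bulk measure is where the real difficulty lies, and I would isolate it by writing $\Omega = \Omega_\delta \cup C_1 \cup C_\tau$, where $\Omega_\delta = \{1 + \delta \leq |z| \leq \tau - \delta\}$ is a compact middle piece and $C_1, C_\tau$ are thin collars around the two boundary circles. On $\Omega_\delta$ the factor $g_P$ and the function $h$ are bounded, so $\mathbb{E}[M_{\gamma, dx^2}(\Omega_\delta)] < +\infty$ and the mass there is almost surely finite. On a collar, say $C_1 = \{1 < |z| < 1 + \delta\}$, the first moment of the mass diverges as soon as $\gamma^2 \geq 2$, because $g_P(x)^{\gamma^2/2} \sim \mathrm{dist}(x, \partial\Omega_1)^{-\gamma^2/2}$ is then not integrable in the normal direction. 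One therefore cannot argue through the expectation and must instead show that for a suitable $s \in (0,1)$ the fractional moment $\mathbb{E}[M_{\gamma, dx^2}(C_1)^s]$ is finite, which forces almost sure finiteness. The gain of passing to $s<1$ is precisely that it sees the typical behaviour of the lognormal weight rather than its mean: near the boundary the field has large variance, so the renormalized density $e^{\gamma X_{\epsilon} - \frac{\gamma^2}{2}\mathbb{E}[X_{\epsilon}^2]}$ is typically exponentially small, and $\mathbb{E}[(\,\cdot\,)^s]$ with $s<1$ captures this decay and can beat the $g_P^{\gamma^2/2}$ blow-up. Concretely I would decompose $C_1$ into dyadic shells $S_k = \{1 + 2^{-k-1} \leq |z| \leq 1 + 2^{-k}\}$, estimate $\mathbb{E}[M_{\gamma, dx^2}(S_k)^s]$ by comparing, through Kahane's convexity inequality, with an exactly scale invariant log-correlated field while accounting for the precise correlation structure of the field near the boundary (the Neumann reflection enhances the correlations at scales larger than the distance to the boundary, which is exactly what improves the estimate for concave $s<1$), and then sum the resulting series over $k$. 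This is precisely the estimate carried out in \cite{Disk}, which I would invoke directly rather than reproduce.

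The main obstacle is this collar estimate in the regime $\gamma^2 \geq 2$. In that range the mean mass near the boundary is infinite, so every argument based on first moments breaks down and one is forced into the fractional moment analysis; the delicate point is to choose $s \in (0,1)$ and to run the shell-by-shell comparison so that the deterministic blow-up $g_P^{\gamma^2/2}$ is dominated by the GMC scaling — a naive decomposition into independent boxes together with subadditivity is too lossy and fails here, so one genuinely needs the enhanced near-boundary correlations. Everything else, namely the reduction to the flat metric and the boundary measure, is immediate.
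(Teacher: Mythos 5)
Your proposal is correct and takes essentially the same approach as the paper: the boundary mass is dispatched via the finite first moment $\mathbb{E}[M^{\partial}_{\gamma, dx^2}(\partial \Omega)] < +\infty$, while the bulk mass---whose first moment diverges for $\gamma^2 \geq 2$ because of $g_P$---is almost surely finite by citing \cite{Disk}, exactly as the paper does. Your explicit reduction to the flat metric via \eqref{shift_GMC} and your sketch of the fractional-moment mechanism underlying \cite{Disk} are consistent elaborations of what the paper leaves implicit.
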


Finally we state here two very useful tools that we will need for sections \ref{LQFT} and \ref{LQG}. We start with the Girsanov transform (also called the Cameron-Martin formula) that we will always use in the following way:

\begin{proposition}\label{girsanov} (Girsanov transform)
Consider the Gaussian free field $X$ on $\Omega$ and let $f: \Omega \mapsto \mathbb{R}$ be a continuous function. We introduce the random variable $Y= \int_{\Omega} f(z) X(z) d \lambda(z) $. Then for any suitable functional $F$ we have:
$$ \mathbb{E}[ F((X(z))_{z \in \Omega}) e^{Y}] = e^{ \frac{\mathbb{E}[Y^2]}{2}} \mathbb{E}[ F(( X(z) + \mathbb{E}[X(z)Y] )_{z \in \Omega}) ]. $$
The same result holds if we define $Y$ with an integral over the boundary $\partial \Omega$.    
\end{proposition}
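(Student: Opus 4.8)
The plan is to recognize this as the infinite-dimensional Cameron–Martin formula for the Gaussian measure induced by $X$, and to prove it by reducing to a determining class of exponential test functionals, on which both sides can be computed explicitly from the joint Gaussianity of the field and of $Y$.

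First I would record the basic facts about $Y$. Since $Y = \int_\Omega f(z) X(z)\, d\lambda(z)$ is a linear functional of the GFF, it is a centered Gaussian variable with variance $\mathbb{E}[Y^2] = \int_\Omega \int_\Omega f(z) f(z') G(z,z')\, d\lambda(z) d\lambda(z')$, which is finite because $f$ is continuous and $G$ has only a logarithmic singularity. The function $z \mapsto \mathbb{E}[X(z)Y] = \int_\Omega G(z,z') f(z')\, d\lambda(z')$ is continuous on $\Omega$, so the shifted field $X(z) + \mathbb{E}[X(z)Y]$ makes sense as a distribution plus a continuous deterministic function. Strictly speaking, because $X$ lives in $H^{-1}(\Omega)$, I would define $Y$ as the almost sure $L^2$ limit of $\int_\Omega f X_\epsilon\, d\lambda$ along the circle-average regularization, and all the Gaussian computations below are justified by first working at the level of $X_\epsilon$ and then passing to the limit.

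Next I would verify the identity on the class of functionals $F(X) = \exp\big(\int_\Omega h(z) X(z)\, d\lambda(z)\big)$ for continuous $h$. Writing $Z_h := \int_\Omega h X\, d\lambda$, both $Z_h$ and $Y$ are jointly centered Gaussian. For the left-hand side,
\begin{align*}
\mathbb{E}[F(X) e^Y] = \mathbb{E}[e^{Z_h + Y}] = \exp\Big( \tfrac12 \mathbb{E}[Z_h^2] + \mathbb{E}[Z_h Y] + \tfrac12 \mathbb{E}[Y^2] \Big),
\end{align*}
using $\mathbb{E}[e^W] = e^{\mathbb{E}[W^2]/2}$ for a centered Gaussian $W$. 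For the right-hand side, the deterministic shift factors out of $F$, and the crucial cancellation is
\begin{align*}
\int_\Omega h(z)\, \mathbb{E}[X(z) Y]\, d\lambda(z) = \mathbb{E}\Big[ \Big(\int_\Omega h X\, d\lambda\Big) Y \Big] = \mathbb{E}[Z_h Y],
\end{align*}
so that $e^{\mathbb{E}[Y^2]/2}\, \mathbb{E}[F(X + \mathbb{E}[X(\cdot)Y])] = e^{\mathbb{E}[Y^2]/2}\, e^{\mathbb{E}[Z_h Y]}\, \mathbb{E}[e^{Z_h}]$, which equals the left-hand side after inserting $\mathbb{E}[e^{Z_h}] = e^{\mathbb{E}[Z_h^2]/2}$. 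Equivalently, one may prove the same identity at the finite-dimensional level by truncating $X$ to $X_N = \sum_{i=1}^N x_i e_i / \sqrt{\lambda_i}$, where $Y$ becomes a linear combination of the independent standard Gaussians $x_i$ and the statement reduces to the elementary change of variables $x_i \mapsto x_i + a_i$ obtained by completing the square in $e^{-\frac12 \sum x_i^2 + \sum a_i x_i}$.

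Finally I would extend from this class to all suitable functionals $F$. The exponential (or, replacing $h$ by $ih$, the imaginary-exponential) functionals separate points and generate the $\sigma$-algebra of the field, so a standard monotone-class / density argument upgrades the identity first to bounded continuous functionals of finitely many linear pairings $(\int h_1 X, \dots, \int h_k X)$, and then to general $F$. The statement for $Y$ defined by a boundary integral is identical, with $d\lambda$ replaced by $d\lambda_\partial$ and $G$ restricted accordingly. The main technical obstacle is not the Gaussian algebra, which is routine, but the two points of rigor flagged above: giving a clean meaning to the pairing $Y$ and to the shifted field when $X$ is only a distribution, and making the density argument precise for whatever class of ``suitable'' functionals is ultimately needed in Sections \ref{LQFT} and \ref{LQG}; both are handled by regularizing $X$ and passing to the limit.
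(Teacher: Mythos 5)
Your proof is correct. The paper itself states this proposition without proof, treating it as a standard tool (the Cameron--Martin formula), so there is no paper argument to compare against; your route --- computing both sides on exponential functionals $e^{Z_h}$ via the Gaussian moment generating function, observing the cancellation $\int h(z)\,\mathbb{E}[X(z)Y]\,d\lambda(z) = \mathbb{E}[Z_h Y]$, and extending to general functionals by a characteristic-functional / monotone-class argument with the regularization $X_\epsilon$ handling the distributional nature of $X$ --- is precisely the standard proof the authors are implicitly invoking, and it fills the gap appropriately.
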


We will also need Kahane's convexity inequalities:
\begin{proposition}\label{kahane}(Convexity inequalities, see \cite{Kah})
Let $(Y(z))_{z\in \Omega}$ and $(Z(z))_{z\in \Omega}$ be continuous centered Gaussian fields such that
$$ \mathbb{E}[Y(z)Y(z')] \leq \mathbb{E}[Z(z) Z(z')].$$
Then for all convex (resp. concave) functions $F: \mathbb{R}_+ \rightarrow \mathbb{R}$ with at most polynomial growth at infinity
\begin{equation}\label{ineq}
 \mathbb{E}\left[F\left(\int_{\Omega} e^{Y(z) - \frac{\mathbb{E}[Y(z)^2]}{2}} d\lambda(z)   \right)\right] \leq (\text{resp.} \geq) \text{  } \mathbb{E}\left[F\left(\int_{\Omega} e^{Z(z) - \frac{\mathbb{E}[Z(z)^2]}{2}} d\lambda(z)   \right)\right].
\end{equation}
By using our regularization procedure we can apply this result to the case where $Y$ and $Z$ are Gaussian free fields. We also have the same result if we replace the integral over $\Omega$ by an integral over the boundary $\partial \Omega$.
\end{proposition}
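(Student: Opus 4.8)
The plan is to prove the inequality by a Gaussian interpolation between the two fields, reducing the statement to a sign computation for the derivative of a one-parameter family of expectations. Realizing $Y$ and $Z$ on a common probability space as \emph{independent} centered Gaussian fields, I would introduce for $t \in [0,1]$ the interpolating field $W_t(z) = \sqrt{t}\,Y(z) + \sqrt{1-t}\,Z(z)$, whose pointwise variance is $\sigma_t(z)^2 := t\,\mathbb{E}[Y(z)^2] + (1-t)\,\mathbb{E}[Z(z)^2]$, and set
\[ I_t := \int_{\Omega} e^{W_t(z) - \frac{1}{2}\sigma_t(z)^2}\, d\lambda(z), \qquad \phi(t) := \mathbb{E}[F(I_t)]. \]
Then $\phi(1)$ and $\phi(0)$ are exactly the two sides of \eqref{ineq}, so it suffices to show $\phi'(t) \leq 0$ for convex $F$ (and $\phi'(t) \geq 0$ for concave $F$).

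Differentiating under the expectation gives $\phi'(t) = \mathbb{E}[F'(I_t)\,\partial_t I_t]$, where $\partial_t I_t$ produces three contributions: one proportional to $\frac{1}{2\sqrt{t}}\int_{\Omega} Y(z) K_t(z)\, d\lambda(z)$, one to $-\frac{1}{2\sqrt{1-t}}\int_{\Omega} Z(z) K_t(z)\, d\lambda(z)$, and one from differentiating the variance correction, where $K_t(z) := e^{W_t(z) - \frac{1}{2}\sigma_t(z)^2}$. The heart of the argument is to apply Gaussian integration by parts (Stein's lemma) to the two terms carrying a bare factor $Y(z)$ or $Z(z)$. Since the only $Y$-dependence of $F'(I_t) K_t(z)$ enters through $W_t = \sqrt{t}\,Y + \sqrt{1-t}\,Z$, the functional derivative $\frac{\delta}{\delta Y(w)}$ brings down a factor $\sqrt{t}$, which precisely cancels the $\frac{1}{\sqrt{t}}$ singularity at the endpoint and yields two pieces: a diagonal piece $\frac{1}{2}\int_{\Omega} \mathbb{E}[Y(z)^2]\,\mathbb{E}[F'(I_t) K_t(z)]\, d\lambda$ and an off-diagonal piece involving $F''$. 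The diagonal pieces coming from $Y$ and from $Z$ cancel exactly against the variance-correction term, so that after simplification
\[ \phi'(t) = \frac{1}{2}\int_{\Omega}\!\!\int_{\Omega} \big(\mathbb{E}[Y(z)Y(w)] - \mathbb{E}[Z(z)Z(w)]\big)\,\mathbb{E}\big[F''(I_t)\,K_t(z)K_t(w)\big]\, d\lambda(z)\, d\lambda(w). \]

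From here the conclusion is immediate: by hypothesis the covariance difference is $\leq 0$ pointwise, $K_t(z)K_t(w) \geq 0$, and $F'' \geq 0$ for convex $F$, so $\phi'(t) \leq 0$; the concave case simply reverses the sign of $F''$. The main obstacle is making the Gaussian integration by parts rigorous, since $F$ is only assumed convex with polynomial growth rather than $C^2$. I would therefore first smooth $F$ to a $C^2$ function and then discretize $\Omega$ into finitely many points, replacing $I_t$ by a Riemann sum so that the integration by parts becomes the elementary finite-dimensional Stein formula $\mathbb{E}[g_i H(g)] = \sum_j \mathbb{E}[g_i g_j]\,\mathbb{E}[\partial_j H]$. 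The polynomial growth of $F$ (hence of $F'$ and $F''$), together with the finiteness of all exponential moments of the continuous, bounded-variance fields $Y, Z$, justifies differentiating under the expectation and passing to the limits, first removing the discretization via continuity of the fields, then the smoothing of $F$, and finally invoking the continuity of $\phi$ on $[0,1]$ to integrate $\phi' \leq 0$ across the degenerate endpoints. The extension to the case where $Y, Z$ are distribution-valued Gaussian free fields then follows by applying the continuous-field inequality to their circle-average regularizations $X_{g,\epsilon}$ and letting $\epsilon \to 0$, as indicated after the statement.
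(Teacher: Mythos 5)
Your proposal is correct: the paper itself gives no proof of this proposition, quoting it directly from Kahane \cite{Kah}, and your Gaussian interpolation argument (interpolating field $\sqrt{t}\,Y + \sqrt{1-t}\,Z$, integration by parts producing the diagonal cancellation against the variance correction, and the off-diagonal $F''$ term carrying the covariance difference) is precisely the classical proof found in that reference and in the surveys the paper cites. The sign bookkeeping, the reduction to finite dimensions via discretization and smoothing of $F$, and the passage to the GFF case via the circle-average regularization are all handled correctly, so nothing is missing.
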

  
\section{Liouville quantum field theory}\label{LQFT}

  \subsection{Defining the partition function}\label{sec_def_p}
We are now ready to give the expression of the partition function for a metric $g= e^{\varphi} dx^2$. To write the regularized partition function we must add the renormalization in $\epsilon$ of Proposition \ref{def_GMC}  whenever there is an exponential of the field $X_{g, \epsilon}$. More explicitly\footnote{The definition we give is slightly different than the one in \cite{Sphere} or \cite{Disk}. Instead of adding a factor $\frac{Q}{2} \ln g$ to the field $X$ we include the dependence in $g$ in the regularization procedure of section \ref{sec_circle}. This leads to the framework of \cite{gaw} but it is completely equivalent to the framework of \cite{Sphere, Disk}.}
  \begin{align}\label{partition} 
   &\Pi_{\gamma,\mu,  \mu_{\partial}} ^{ (z_i, \alpha_i)_i, (s_j, \beta_j)_j}  (\epsilon, g, F) =
  \mathcal{Z}_{GFF}(g) \int_{\mathbb{R}} \mathbb{E} [F( X_{g , \epsilon} + c) \prod_{i=1}^n  \epsilon^{\frac{\alpha_i^2}{2}}  e ^{ \alpha_i ( X_{g , \epsilon} + c )(z_i)}  \prod_{j=1}^{n'}  \epsilon ^{\frac{\beta_j^2}{4}}  e ^{ \frac{\beta_j}{2} ( X_{g , \epsilon} + c )(s_j)}\nonumber \\
  &\times 
  \exp (- \frac{Q}{4 \pi} \int_{\Omega} R_g(c + X_{g , \epsilon}) d \lambda _g 
  - \mu e^{\gamma c} M_{\gamma, g, \epsilon}(\Omega) - \frac{Q}{2 \pi} \int_{\partial \Omega} K_g(c + X_{g , \epsilon}) d \lambda _{\partial g} 
  - \mu_{\partial} e^{\frac{\gamma}{2} c}  M^{\partial}_{\gamma, g, \epsilon}( \partial \Omega) )] dc
  \end{align}
Our parameters $\gamma$, $Q$, $\mu$, $\mu_{\partial}$ satisfy $\gamma \in (0,2)$, $Q = \frac{\gamma}{2} + \frac{2}{\gamma}$, $\mu \geq 0$, $\mu_{\partial} \geq 0$, and $ \mu + \mu_{\partial} >0$. The term $\mathcal{Z}_{GFF}(g)$ is the partition function of the GFF coming from \eqref{partition_GFF_def}. It satisfies \eqref{weyl_boun}, see for instance \cite{dubedat}:
\begin{equation}\label{shift_partition_GFF}
 \mathcal{Z}_{GFF}(e^{\varphi} dx^2) = e^{ \frac{1}{96 \pi} (  \int_{\Omega} |\partial \varphi|^2 d\lambda + 4 \int_{\partial \Omega} K \varphi d \lambda_{\partial} ) } \mathcal{Z}_{GFF}(dx^2).
\end{equation}
The value of  $\mathcal{Z}_{GFF}(dx^2)$ has no importance for the construction of LQFT but it will play a role in the section on LQG as it depends on the outer radius $\tau$ of our annulus, see section \ref{sec_ghost}. The main goal of what follows is to prove that we can let $\epsilon$ go to $0$ and obtain a finite non zero limit for the partition function:
\begin{equation}
\Pi_{\gamma, \mu, \mu_{\partial}} ^{ (z_i, \alpha_i)_i, (s_j, \beta_j)_j}  (g, F) = 
  \lim_{\epsilon\rightarrow 0} \Pi_{\gamma, \mu , \mu_{\partial}} ^{ (z_i, \alpha_i)_i, (s_j, \beta_j)_j}  (\epsilon, g, F).
\end{equation}
  Existence and non-triviality of this limit will depend on the Seiberg bounds:
  \begin{align}
  \sum_{i=1}^n &\alpha_i + \sum_{j=1}^{n'} \frac{\beta_j}{2} > 0 \label{sei1}  \\
  &\forall i, \: \alpha_i < Q \label{sei2} \\
  &\forall j, \: \beta_j < Q \label{sei3}
  \end{align}
\begin{theorem}\label{th_existence}
We have the following alternatives:
 \begin{enumerate}
 \item If $\mu>0$ and $\mu_{\partial} > 0$, then the partition function converges and is non trivial if and only if \eqref{sei1} + \eqref{sei2} + \eqref{sei3} hold.
 \item If $\mu>0$ and $\mu_{\partial} = 0$ \textit{(resp. if $\mu=0$ and $\mu_{\partial} > 0$)}, then the partition function converges and is non trivial if and only if \eqref{sei1} + \eqref{sei2} hold \textit{(resp. if \eqref{sei1} + \eqref{sei3} hold)}.
 \item In all other cases, the partition function is worth $0$ or $+ \infty$.
 \end{enumerate}

\end{theorem}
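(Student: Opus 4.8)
The strategy is to integrate out the zero-mode $c$ explicitly, reducing the convergence question to the integrability of a Gamma-type integral, and then to handle the $\epsilon \to 0$ limit of the remaining GMC expression separately. First I would fix $\epsilon > 0$ and, using Proposition \ref{def_GMC} together with the Girsanov transform (Proposition \ref{girsanov}), absorb the insertion terms $\prod_i \epsilon^{\alpha_i^2/2} e^{\alpha_i(X_{g,\epsilon}+c)(z_i)}$ and $\prod_j \epsilon^{\beta_j^2/4} e^{\frac{\beta_j}{2}(X_{g,\epsilon}+c)(s_j)}$ into a shift of the field. Each insertion $e^{\alpha_i X_{g,\epsilon}(z_i)}$ produces, via Girsanov, a deterministic shift of $X_{g,\epsilon}(\cdot)$ by $\alpha_i G_g(z_i,\cdot)$ (and analogously $\frac{\beta_j}{2} G_g(s_j,\cdot)$ on the boundary), while the $c$-dependence of the insertions collects into a factor $e^{(\sum_i \alpha_i + \sum_j \beta_j/2)\, c}$. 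The curvature terms $\frac{Q}{4\pi}\int_\Omega R_g\, c\, d\lambda_g + \frac{Q}{2\pi}\int_{\partial\Omega} K_g\, c\, d\lambda_{\partial g}$ contribute, by Gauss--Bonnet \eqref{GB}, a coefficient proportional to $Q\chi(\Omega)=0$, so the linear-in-$c$ curvature term vanishes for the annulus. Writing $s := \sum_i \alpha_i + \sum_j \beta_j/2$, the $c$-integral then takes the schematic form $\int_{\mathbb{R}} e^{s c}\exp\!\big(-\mu e^{\gamma c} A_\epsilon - \mu_\partial e^{\frac{\gamma}{2}c} B_\epsilon\big)\,dc$, where $A_\epsilon, B_\epsilon$ are the (shifted) regularized bulk and boundary GMC masses.

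Next I would analyze this $c$-integral. As $c \to -\infty$ the integrand behaves like $e^{sc}$, which is integrable near $-\infty$ precisely when $s > 0$; this is exactly where \eqref{sei1} enters, and its failure produces the divergence (or a vanishing/trivial answer) of case 3. As $c \to +\infty$ the double exponentials $e^{\gamma c}$ and $e^{\frac{\gamma}{2}c}$ force rapid decay whenever the corresponding mass is strictly positive — so when both $\mu>0$ and $\mu_\partial>0$ the integral converges at $+\infty$ unconditionally. After the substitution $u = \mu e^{\gamma c} A_\epsilon$ (or the analogous two-constant version) the integral becomes a Gamma-type integral that can be bounded by $\Gamma(s/\gamma)$ times (negative) moments of $A_\epsilon$ and $B_\epsilon$; one sees immediately that finiteness requires $s>0$ and that $\mu+\mu_\partial>0$ guarantees decay at $+\infty$. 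This reduces the whole problem to controlling, as $\epsilon\to 0$, the suitable moments of the regularized GMC masses with the Girsanov shifts $H_\epsilon(x) := \sum_i \alpha_i G_g(z_i,x) + \sum_j \frac{\beta_j}{2}G_g(s_j,x)$ inserted.

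The bounds \eqref{sei2} and \eqref{sei3} govern precisely the integrability of these shifted masses near the insertion points. Near a bulk insertion $z_i$, the shift contributes a factor $|x-z_i|^{-\gamma\alpha_i}$ to the density of $A_\epsilon$ (since $G_g$ has a $-\log|x-z_i|$ singularity and the density already carries the GMC renormalization), and the limiting mass $\int_\Omega |x-z_i|^{-\gamma\alpha_i}\cdots\, M_{\gamma,g}(dx)$ is finite near $z_i$ if and only if $\gamma\alpha_i < 2$, i.e. $\alpha_i < \frac{2}{\gamma}$; combined with the logarithmic divergence of the GMC, the sharp threshold becomes $\alpha_i < Q$. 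The boundary insertions are handled identically using $M^\partial_{\gamma,g}$ and the one-dimensional integrability exponent, yielding $\beta_j < Q$. I would establish convergence in probability of $A_\epsilon \to A$ and $B_\epsilon \to B$ (nontrivial, positive a.s.\ by Proposition \ref{fin_mass}) using the GMC theory and the uniform control afforded by Kahane's inequalities (Proposition \ref{kahane}), then pass to the limit inside the $c$-integral by dominated convergence, justified by the Gamma-integral bounds.

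The main obstacle is the interchange of the $\epsilon\to 0$ limit with the $c$-integral and with expectation: one must produce $\epsilon$-uniform integrable bounds on the shifted masses $A_\epsilon,B_\epsilon$ near the threshold Seiberg bounds, where the relevant moments barely exist. Here Kahane's convexity inequalities are the key tool, comparing the shifted GFF to a reference field to extract uniform moment bounds; the delicate endpoint analysis near $\alpha_i = Q$ and $\beta_j = Q$ (where the mass develops a non-integrable singularity and the partition function blows up) is what pins down the ``if and only if'' in cases 1 and 2 and forces the trivial $0$ or $+\infty$ behavior of case 3.
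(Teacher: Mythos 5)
Your overall route --- a Girsanov reduction of the insertions producing the factor $e^{(\sum_i\alpha_i+\sum_j\beta_j/2)c}$ and a drift $H$, analysis of the zero-mode integral to extract \eqref{sei1} and the role of $\mu,\mu_\partial$, then local integrability of the shifted GMC masses at the insertions for \eqref{sei2}--\eqref{sei3} --- is exactly the paper's proof. However, two of your steps, as stated, are wrong and need repair.

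The main one concerns the thresholds \eqref{sei2}--\eqref{sei3}, which are the heart of the theorem. You assert that $\int |x-z_i|^{-\gamma\alpha_i}M_{\gamma,g}(dx)$ is finite near $z_i$ ``if and only if $\gamma\alpha_i<2$, i.e. $\alpha_i<2/\gamma$,'' and then that the sharp threshold ``becomes $\alpha_i<Q$.'' These two statements are incompatible (note $Q=\tfrac{2}{\gamma}+\tfrac{\gamma}{2}>\tfrac{2}{\gamma}$): the first is the criterion for Lebesgue measure, not for GMC, and no ``logarithmic divergence'' correction turns one into the other. The true mechanism is the almost sure multifractal scaling of the chaos at a fixed point, $M_{\gamma,dx^2}(B(z_i,r))=r^{\gamma Q+o(1)}$ as $r\to 0$ (note $\gamma Q=2+\tfrac{\gamma^2}{2}$), which yields finiteness a.s. if and only if $\alpha_i<Q$; this is precisely the nontrivial input that the paper imports from \cite{Sphere} for the bulk and from \cite{Disk} for the boundary (where, in addition, a boundary insertion $(s_j,\beta_j)$ forces control of \emph{both} the bulk mass and the boundary mass near $s_j$, a point your plan omits). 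Your argument as written does not establish the threshold.

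Second, your triviality directions are reversed: if some $\alpha_i\geq Q$ (with $\mu>0$) the shifted bulk mass is a.s. $+\infty$, so $\exp(-\mu e^{\gamma c}\cdot\infty)=0$ and the partition function equals $0$ --- it does not ``blow up''; the value $+\infty$ in case 3 arises only from the failure of \eqref{sei1}, since for fixed $\epsilon$ the $c$-integral then diverges at $-\infty$. Finally, a smaller omission: in the flat metric $K\neq 0$ on $\partial\Omega$, so the Gaussian term $-\frac{Q}{2\pi}\int_{\partial\Omega}K\,X_\epsilon\,d\lambda_\partial$ must also be Girsanov-shifted; it contributes the $-Q\ln|x|$ part of the paper's $H$ together with the global factor $e^{\frac{Q^2}{2}\ln\tau}$. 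On the annulus this piece is bounded, so it does not alter the convergence criteria, but your drift $H_\epsilon$ as written is not the correct one.
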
    

The arguments we will use to prove this theorem come from \cite{Sphere, Disk}. We will start by studying the flat metric, meaning that $R = 0$, $K = -1$  on $\partial \Omega_1$, $K = \frac{1}{\tau}$ on $ \partial \Omega_{\tau}$, and $ \ln g = 0$ (the generalization to all metrics is given by the Weyl anomaly, see section \ref{sec_weyl}). We are going to apply the Girsanov transform of Proposition \ref{girsanov} to the insertion points and to the curvature term. The curvature term shifts $X(z)$ by
\begin{equation}
- \frac{Q}{2 \pi} \int_{\partial \Omega} K(x) G(x,z) d \lambda_{\partial}(x) = -\frac{Q}{\tau +1 } ( \tau \ln(\vert z \vert /\tau) +  \ln\vert z \vert)  
\end{equation}
and we also get a global factor $ \exp(\frac{Q^2}{2} \ln \tau)$. For the insertions we use Proposition \ref{circlegreen} to write on $\Omega$,
  \begin{align}\label{pre_res2}
  \epsilon^{\frac{\alpha_i^2}{2}}  e ^{ \alpha_i  X_{ \epsilon} (z_i)}
   &=   e ^{ \alpha_i  X_{ \epsilon}(z_i) - \frac{\alpha_i^2}{2} \mathbb{E}[X_{ \epsilon}^2(z_i)]   }
       e ^{ \frac{\alpha_i^2}{2} \mathbb{E}[X_{ \epsilon}^2(z_i)] +\frac{\alpha_i^2}{2} \ln \epsilon } \\
   &= e ^{ \alpha_i  X_{ \epsilon}(z_i) - \frac{\alpha_i^2}{2} \mathbb{E}[X_{ \epsilon}^2(z_i)]   }
       e^{ \frac{\alpha_i^2}{2} ( \ln g_P(z_i) + h(z_i) ) } (1 + o(1)), \nonumber
\end{align}  
and on the boundary $\partial \Omega$,
\begin{align}\label{pre_res3}
  \epsilon^{\frac{\beta_j^2}{4}}  e ^{ \frac{\beta_j}{2}  X_{ \epsilon} (s_j)}
   &=   e ^{ \frac{\beta_j}{2}  X_{ \epsilon}(s_j) - \frac{\beta_j^2}{8} \mathbb{E}[X_{ \epsilon}^2(s_j)]   }
       e ^{ \frac{\beta_j^2}{8} \mathbb{E}[X_{ \epsilon}^2(s_j)] +\frac{\beta_j^2}{4} \ln \epsilon } \\
   &= e ^{ \frac{\beta_j}{2}  X_{ \epsilon}(s_j) - \frac{\beta_j^2}{8} \mathbb{E}[X_{ \epsilon}^2(s_j)]  }
     e^{ \frac{\beta_j^2}{8} h_{\partial}(s_j)   } (1 + o(1)), \nonumber
\end{align}
where in both cases the $o(1)$ is deterministic. From this we can easily apply the Girsanov transform to the insertions and get:
\begin{small}
\begin{align}\label{partition_reg}
 \Pi_{\gamma, \mu, \mu_{\partial}} ^{ (z_i, \alpha_i)_i, (s_j, \beta_j)_j}  (dx^2, F, \epsilon) =
     & \mathcal{Z}_{GFF}(dx^2) (\prod_{i=1}^n  g_P(z_i)^{\frac{\alpha_i^2}{2}} ) e^{ C(z,s)} \int_{\mathbb{R}} e^{(\sum_i \alpha_i + \sum_j \frac{\beta_j}{2}) c} \: \mathbb{E} [ F( X_{\epsilon} + H_{\epsilon} + c )  \\
   &\times  \exp( -\mu e^{\gamma c} \int_{\Omega} e^{\gamma H_{\epsilon}(x)}  M_{\gamma,dx^2,\epsilon}(dx) - \mu_{\partial} e^{\frac{\gamma c}{2}} \int_{\partial \Omega} e^{\frac{\gamma }{2}H_{\epsilon}(x)} M^{\partial}_{\gamma,dx^2,\epsilon}(dx)   )] dc. \nonumber
 \end{align} 
\end{small}
 This leads to the following limit for the partition function:
 \begin{small}
 \begin{align}\label{partition_2}
 \Pi_{\gamma, \mu, \mu_{\partial}} ^{ (z_i, \alpha_i)_i, (s_j, \beta_j)_j}  (dx^2, F) =
     &\mathcal{Z}_{GFF}(dx^2) (\prod_{i=1}^n  g_P(z_i)^{\frac{\alpha_i^2}{2}} ) e^{ C(z,s)} \int_{\mathbb{R}} e^{(\sum_i \alpha_i + \sum_j \frac{\beta_j}{2}) c}\: \mathbb{E} [ F( X + H + c )    \\
   &\times  \exp( -\mu e^{\gamma c} \int_{\Omega} e^{\gamma H(x)}  M_{\gamma,dx^2}(dx) - \mu_{\partial} e^{\frac{\gamma c}{2}} \int_{\partial \Omega} e^{\frac{\gamma }{2}H(x)} M^{\partial}_{\gamma,dx^2}(dx)   )] dc.  \nonumber
 \end{align} 
 \end{small}
In the above expressions we have introduced the notations:
\begin{align}
H(x) &= \sum_{i=1}^n \alpha_i G(x,z_i) + \sum_{j=1}^{n'} \frac{\beta_j}{2} G(x,s_j) - Q \ln \vert x \vert  \\ 
H_{\epsilon}(x) &= \frac{1}{l_{\epsilon}(x)} \int_{A_{\epsilon}(x)} H(x + s) ds \\ 
 C(z,s) &= \sum_{i<i'} \alpha_i \alpha_{i'} G(z_i,z_{i'}) + \sum_{j<j'} \frac{\beta_j \beta_{j'}}{4} G(s_j,s_{j'})
  + \sum_{i,j} \frac{\alpha_i \beta_{j}}{2} G(z_i,s_{j}) + \sum_{i=1}^n \frac{\alpha_i^2}{2} h(z_i) \\ 
  &+ \sum_{j=1}^{n'} \frac{\beta_j^2}{8} h_{\partial}(s_j)  
  + \frac{Q^2}{2} \ln \tau  - \sum_{i=1}^n Q \alpha_i \ln \vert z_i \vert - \sum_{j=1}^{n'} \frac{Q \beta_j}{2} \ln \vert s_j \vert  \nonumber
\end{align}
Let us now explain why the Seiberg bounds \eqref{sei1}-\eqref{sei3} are required to get a non-diverging non-zero limit for \eqref{partition_reg}. These results are found again in \cite{Sphere, Disk}.

The first inequality $ \sum_i \alpha_i + \sum_j \frac{\beta_j}{2} > 0 $ controls the $c \rightarrow - \infty $ divergence of the integral. It is required for the regularized partition function \eqref{partition_reg} to exist. Let:
\begin{align*}
&Z_{0,\epsilon} = \epsilon^{\frac{\gamma^2}{2}} \int_{\Omega} e^{\gamma(X_{\epsilon} + H_{\epsilon})} d\lambda, \\
&Z_{0, \epsilon}^{\partial} = \epsilon^{\frac{\gamma^2}{4}} \int_{\partial  \Omega} e^{\frac{\gamma}{2}(X_{\epsilon} + H_{\epsilon})} d\lambda_{\partial}.
\end{align*}
We see that $|H_{\epsilon}| \leq C_{\epsilon}$ where $C_{\epsilon}$ is a constant depending only on $\epsilon$. Hence we know that $\mathbb{E} [Z_{0, \epsilon}] < \infty$ and thus $Z_{0, \epsilon} < \infty $ almost surely. The same thing holds for $Z_{0,\epsilon}^{\partial}$. Therefore, we can find $A>0$ such that $\mathbb{P}(Z_{0, \epsilon} \leq A , Z_{0, \epsilon}^{\partial} \leq A) >0$ which gives
  $$ \Pi_{\gamma, \mu, \mu_{\partial}} ^{ (z_i, \alpha_i)_i, (s_j, \beta_j)_j}  (dx^2, 1, \epsilon) \geq  \tilde{C} \int_{-\infty}^0 e^{(\sum_i \alpha_i + \sum_j \frac {\beta_j}{2})c} e^{ - \mu e^{\gamma c} A - \mu_{\partial} e^{\frac{\gamma}{2} c} A  } \: \mathbb{P}(Z_{0, \epsilon} \leq A , Z_{0, \epsilon}^{\partial} \leq A) dc $$
for some constant $\tilde{C} >0$. This last integral diverges if the condition $ \sum_i \alpha_i + \sum_j \frac{\beta_j}{2} > 0 $ fails to hold. This gives the bounds \eqref{sei1}.
 
The next thing we need to check is that the following integrals are finite a.s. (otherwise the partition function will be worth $0$):
$$ \int_{\Omega} e^{ \gamma H(x) } M_{\gamma, dx^2}(dx) \: \: \text{and} \: \int_{\partial \Omega} e^{  \frac{\gamma}{2} H(x) } M^{\partial}_{\gamma, dx^2}(dx).  $$
Proposition \ref{fin_mass} tells us that the integrals without $H$ are finite a.s. This means we can restrict ourselves to looking at what happens around the insertions: this will give the conditions \eqref{sei2} and \eqref{sei3}. Indeed for a bulk insertion $(z_i, \alpha_i)$, $e^{\gamma H(z)}$ behaves like $\frac{1}{\vert z -z_i\vert^{\alpha_i \gamma}}$ around $z_i$. For $B(z_i,r)$ a small ball around $z_i$ of radius $r>0$, the results of \cite{Sphere} tell us that:
 $$\int_ {B(z_i,r)}   \frac{1}{\vert x -z_i\vert^{\alpha_i \gamma}} M_{\gamma,dx^2}(dx) < + \infty \: \: a.s. \: \Longleftrightarrow \: \alpha_i < Q.$$
In the same way we can look at a boundary insertion $(s_j, \beta_j)$ and we have following \cite{Disk}:
$$\int_ {B(s_j,r) \cap \Omega }   \frac{1}{\vert x -s_j\vert^{\beta_j \gamma}} M_{\gamma,dx^2}(dx) < + \infty \, \, \text{and} \, \,  \int_ {B(s_j,r) \cap \partial \Omega  }  \frac{1}{\vert x -s_j\vert^{\frac{\beta_j \gamma}{2}}} M^{\partial}_{\gamma,dx^2}(dx) < + \infty  \, \,   \text{a.s.}  \,  \Longleftrightarrow \: \beta_j < Q.$$
Therefore we have all the conditions of Theorem \ref{th_existence}.

  \subsection{Weyl anomaly}\label{sec_weyl}
  In the previous section we constructed the partition function in the case where the background metric was the flat metric. Here we look at a metric $g =e^{\varphi} dx^2$ and our goal is to find a link between $ \Pi_{\gamma, \mu, \mu_{\partial}} ^{ (z_i, \alpha_i)_i, (s_j, \beta_j)_j}  (g, F) $ and $\Pi_{\gamma, \mu, \mu_{\partial}} ^{ (z_i, \alpha_i)_i, (s_j, \beta_j)_j}  ( dx^2, F)$. We claim that:

\begin{theorem}\label{WEYL} (Weyl anomaly)
 Given a metric $ g = e^{ \varphi} dx^2$ we have,\footnote{Our Weyl anomaly coincides with the one of \cite{gaw} but is slightly different from the one of \cite{Sphere, Disk}. This is because we have chosen to include the metric dependence on $g$ in the regularization procedure of section \ref{sec_circle}. }
 \begin{small}
 $$ \ln \frac{\Pi_{\gamma, \mu, \mu_{\partial}} ^{ (z_i, \alpha_i)_i, (s_j, \beta_j)_j}  ( g, F)}{\Pi_{\gamma, \mu,  \mu_{\partial}} ^{ (z_i, \alpha_i)_i, (s_j, \beta_j)_j}  ( dx^2,  F( \cdot - \frac{Q}{2} \varphi))} = -\sum_{i=1}^n \Delta_{\alpha_i} \varphi(z_i)  - \sum_{j=1}^{n'}  \frac{1}{2}\Delta_{\beta_j} \varphi(s_j) + \frac{1 + 6 Q^2}{96 \pi} \left( \int_{\Omega} | \partial \varphi|^2 d\lambda + 4 \int_{\partial \Omega} K \varphi  d \lambda_{\partial}  \right),$$
 \end{small}
where we have set $\Delta_{\alpha_i} = \frac{\alpha_i}{2}( Q -\frac{\alpha_i}{2})$ and $\Delta_{\beta_j} = \frac{\beta_j}{2}( Q -\frac{\beta_j}{2})$. 
 \end{theorem}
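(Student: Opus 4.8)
The plan is to start from the regularized partition function \eqref{partition} written in the metric $g = e^{\varphi} dx^2$, reduce every $\varphi$-dependent piece to the flat metric by means of the transformation rules already collected, and check that the accumulated factors assemble into the claimed anomaly. First I would factor out $\mathcal{Z}_{GFF}(g)$ using \eqref{shift_partition_GFF}, which immediately produces the term $\frac{1}{96\pi}\big(\int_{\Omega}|\partial\varphi|^2 d\lambda + 4\int_{\partial\Omega}K\varphi\, d\lambda_{\partial}\big)$. Next I would rewrite the two curvature terms: since $R=0$ in the flat metric, \eqref{change_R} and \eqref{change_K} give $R_g\, d\lambda_g = -\Delta\varphi\, d\lambda$ and $K_g\, d\lambda_{\partial g} = (K + \tfrac12\partial_n\varphi)\,d\lambda_{\partial}$, so the full curvature contribution splits as $T_g = T_{dx^2} + Y$, where $T_{dx^2} = -\tfrac{Q}{2\pi}\int_{\partial\Omega}K X\,d\lambda_{\partial}$ is the flat curvature term already handled in \eqref{partition_2} and $Y = \tfrac{Q}{4\pi}\big(\int_{\Omega}\Delta\varphi\,X\,d\lambda - \int_{\partial\Omega}\partial_n\varphi\,X\,d\lambda_{\partial}\big)$ is the genuinely new linear functional of $X$. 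The part of the curvature term multiplying the zero mode $c$ drops out thanks to Gauss--Bonnet \eqref{GB}, exactly as in the flat case.

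The core of the argument is then a single application of the Girsanov transform (Proposition \ref{girsanov}) to $e^{Y}$, carried out in parallel with the flat computation. Using the Green identity \eqref{gr_g} with $f=\varphi$, the covariance $\mathbb{E}[X(z)Y]$ equals $-\tfrac{Q}{2}(\varphi(z) - m_{\partial}(\varphi))$, so the extra field shift is $\delta = -\tfrac{Q}{2}(\varphi - m_{\partial}(\varphi))$; this is precisely what turns $F$ into $F(\cdot - \tfrac{Q}{2}\varphi)$, up to the constant $\tfrac{Q}{2}m_{\partial}(\varphi)$. The self-energy $\tfrac12\mathbb{E}[Y^2]$ is evaluated with \eqref{GR} (taking $\psi = \varphi - m_{\partial}(\varphi)$) and gives $\tfrac{Q^2}{16\pi}\int_{\Omega}|\partial\varphi|^2 d\lambda$, while the cross term $\mathbb{E}[T_{dx^2}Y]$ is computed again from \eqref{gr_g} and equals $\tfrac{Q^2}{4\pi}\int_{\partial\Omega}K\varphi\,d\lambda_{\partial}$ once one uses $\int_{\partial\Omega}K\,d\lambda_{\partial} = 0$ (Gauss--Bonnet in the flat metric). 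Adding these to the $\mathcal{Z}_{GFF}$ contribution and writing $\tfrac{Q^2}{16\pi} = \tfrac{6Q^2}{96\pi}$ reconstitutes the full coefficient $\tfrac{1+6Q^2}{96\pi}$ in front of $\int_{\Omega}|\partial\varphi|^2 d\lambda + 4\int_{\partial\Omega}K\varphi\,d\lambda_{\partial}$.

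For the insertions I would combine two effects. The renormalization constants of Proposition \ref{circlegreen} carry an extra $\tfrac12\ln g$ in the bulk and $\ln g$ on the boundary compared with the flat metric, producing the factors $e^{\frac{\alpha_i^2}{4}\varphi(z_i)}$ and $e^{\frac{\beta_j^2}{8}\varphi(s_j)}$. The remaining cross term $\mathbb{E}\big[Y\big(\sum_i\alpha_i X(z_i) + \sum_j\tfrac{\beta_j}{2}X(s_j)\big)\big]$ contributes $-\tfrac{Q\alpha_i}{2}\varphi(z_i)$ and $-\tfrac{Q\beta_j}{4}\varphi(s_j)$, plus a constant proportional to $m_{\partial}(\varphi)$. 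Since $\Delta_{\alpha_i} = \tfrac{Q\alpha_i}{2} - \tfrac{\alpha_i^2}{4}$ and $\tfrac12\Delta_{\beta_j} = \tfrac{Q\beta_j}{4} - \tfrac{\beta_j^2}{8}$, these two effects combine into exactly $-\Delta_{\alpha_i}\varphi(z_i)$ and $-\tfrac12\Delta_{\beta_j}\varphi(s_j)$.

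The last point to settle is the GMC measures together with the bookkeeping of the constants involving $m_{\partial}(\varphi)$. Applying the field shift $H + \delta$ to $M_{\gamma,g,\epsilon}$ and $M^{\partial}_{\gamma,g,\epsilon}$ and passing to the limit, the conformal covariance \eqref{shift_GMC} multiplies the measures by $e^{(1+\frac{\gamma^2}{4})\varphi}$ and $e^{\frac12(1+\frac{\gamma^2}{4})\varphi}$ respectively, while $e^{\gamma\delta}$ and $e^{\frac{\gamma}{2}\delta}$ contribute $e^{-\frac{\gamma Q}{2}\varphi}$ and half of it, up to constants in $m_{\partial}(\varphi)$. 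Here the key algebraic identity $1 + \tfrac{\gamma^2}{4} = \tfrac{\gamma Q}{2}$, equivalent to $Q = \tfrac2\gamma + \tfrac\gamma2$, makes the $\varphi$-dependence cancel exactly, so the GMC terms reduce to their flat counterparts from \eqref{partition_2}. Finally I would change variables $c \mapsto c + \tfrac{Q}{2}m_{\partial}(\varphi)$ in the zero-mode integral: this simultaneously absorbs the constant $\tfrac{Q}{2}m_{\partial}(\varphi)$ left over in the argument of $F$, the constants coming out of the GMC terms, and, through the factor $e^{(\sum_i\alpha_i + \sum_j\frac{\beta_j}{2})c}$, the constant produced by the insertion cross term, so that all $m_{\partial}(\varphi)$ contributions cancel. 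The main obstacle is not any single identity but the careful bookkeeping of these constant terms together with the justification of the interchange of the $\epsilon\to 0$ limit with the Girsanov shifts; in particular that $X_{g,\epsilon}$ and the flat circle average define the same limiting chaos, and that the circle averages tested against the smooth functions $\Delta\varphi$, $\partial_n\varphi$ and $K$ converge to $X$.
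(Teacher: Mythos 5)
Your proof is correct, and it follows the paper's strategy in all essentials: factoring out $\mathcal{Z}_{GFF}$ through \eqref{shift_partition_GFF}, converting the curvature terms with \eqref{change_R}--\eqref{change_K}, a Girsanov transform producing the field shift $-\tfrac{Q}{2}(\varphi - m_{\partial}(\varphi))$ via \eqref{gr_g}, the combination of the regularization factors of Proposition \ref{circlegreen} with the insertion shifts into $-\Delta_{\alpha_i}\varphi(z_i)$ and $-\tfrac{1}{2}\Delta_{\beta_j}\varphi(s_j)$, the cancellation of the GMC $\varphi$-dependence through $1+\tfrac{\gamma^2}{4}=\tfrac{\gamma Q}{2}$, and the absorption of all $m_{\partial}(\varphi)$ constants by translating the zero mode $c$. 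The one place where you genuinely depart from the paper is the organization of the Girsanov step: you transform only the $\varphi$-dependent part $Y$ of the curvature term, keeping the flat part $T_{dx^2}=-\tfrac{Q}{2\pi}\int_{\partial\Omega}K X\,d\lambda_{\partial}$ untouched so that it matches the flat partition function verbatim, and you evaluate the two needed quantities abstractly, namely $\tfrac{1}{2}\mathbb{E}[Y^2]=\tfrac{Q^2}{16\pi}\int_{\Omega}|\partial\varphi|^2\,d\lambda$ from \eqref{GR} and $\mathbb{E}[T_{dx^2}Y]=\tfrac{Q^2}{4\pi}\int_{\partial\Omega}K\varphi\,d\lambda_{\partial}$ from \eqref{gr_g} together with $\int_{\partial\Omega}K\,d\lambda_{\partial}=0$. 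The paper instead applies Girsanov to the full curvature term $Y+T_{dx^2}$, which forces it to compute a five-term double-integral variance containing $\tau$-dependent pieces (the factor $Q^2\ln\tau$ and the explicit shift $Q\,\tfrac{\ln(1/|x|)+\tau\ln(\tau/|x|)}{\tau+1}$ obtained from \eqref{der_g}) that must then cancel against identical terms already present in the flat-metric formula \eqref{partition_2}. Your bookkeeping buys a shorter computation that never requires the explicit Green's function evaluations and makes the cancellation structure transparent; the paper's version has the minor merit of exhibiting along the way the global factor $\tfrac{Q^2}{2}\ln\tau$ that is reused in the LQG section. Your added remarks are also sound: the vanishing of the zero-mode curvature term by Gauss--Bonnet \eqref{GB} is a detail the paper uses implicitly, and the interchange of the $\epsilon\to 0$ limit with the Girsanov shifts is indeed the point requiring care --- as in the paper, the transform should be applied at fixed $\epsilon$ to the regularized quantities \eqref{pre_res2}--\eqref{pre_res3}, with the deterministic $o(1)$ terms controlled before passing to the limit via Proposition \ref{def_GMC}.
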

\begin{proof}
We need to factor out all the terms containing the metric $g$ in \eqref{partition}. We start by applying the Girsanov transform to the exponential term:
\begin{align*}
&\exp \left(- \frac{Q}{4 \pi} \int_{\Omega} R_g X_{g , \epsilon} d \lambda _g - \frac{Q}{2 \pi} \int_{\partial \Omega} K_g X_{g , \epsilon} d \lambda _{\partial g} \right)  \\
 &=\frac{Q}{4 \pi} \int_{\Omega} \Delta \varphi X_{g , \epsilon} d \lambda  - \frac{Q}{2 \pi} \int_{\partial \Omega} (K + \frac{1}{2}\partial_n \varphi) X_{g , \epsilon} d \lambda _{\partial }.
\end{align*}
Here we have used \eqref{change_R} and \eqref{change_K}. Using \eqref{gr_g} this shifts the field $X(x)$ by:
  \begin{align*}
  &\frac{Q}{4 \pi} \int_{\Omega} \Delta \varphi(y) G(x,y) d \lambda (y) - \frac{Q}{2 \pi} \int_{\partial \Omega} (K(y) + \frac{1}{2} \partial_n \varphi(y)) G(x,y) d \lambda _{\partial }(y) \\
 & = - \frac{Q}{2 } ( \varphi -  m_{\partial} ( \varphi)   ) 
  -  \frac{Q}{2 \pi} \int_{\partial \Omega} K(y) G(x,y) d \lambda _{\partial }(y) \\
 & = - \frac{Q}{2 } ( \varphi (x) -  m_{\partial } ( \varphi)   ) + Q \frac{ \ln (1/|x|) + \tau \ln (\tau/|x|)}{\tau + 1}. 
  \end{align*}
The term $m_{\partial } ( \varphi)$ does not depend on $x$ and can simply be removed by making a shift on $c$. The last term  $Q \frac{ \ln (1/|x|) + \tau \ln (\tau/|x|)}{\tau + 1}$ does not depend on the metric $g$ so it will simply cancel out when we take the ratio of the two partition functions. Therefore the only important shift we get is $-\frac{Q}{2} \varphi$. This changes  $F$ into $F(\cdot - \frac{Q}{2} \varphi)$ and using the formulas on Gaussian multiplicative chaos \eqref{shift_GMC} we see that $M_{\gamma, g, \epsilon}$ becomes $M_{\gamma, dx^2, \epsilon}$. Now we look at global factors that appear in front of our partition function. The shift on the insertions gives a term:
$$ \prod_{i=1}^n e^{- \frac{\alpha_i Q}{2} \varphi(z_i) } \prod_{j=1}^{n'} e^{- \frac{\beta_j Q}{4} \varphi(s_j) }. $$
The metric dependence of our regularization procedure applied to the insertions gives a term:
$$ \prod_{i=1}^n e^{ \frac{\alpha_i^2}{4} \varphi(z_i) } \prod_{j=1}^{n'} e^{ \frac{\beta_j^2}{8} \varphi(s_j) }. $$
We combine the two terms obtained above and get:
$$  \prod_{i=1}^n e^{ -\Delta_{\alpha_i} \varphi(z_i) } \prod_{j=1}^{n'} e^{ -\frac{1}{2}\Delta_{\beta_j} \varphi(s_j) }. $$
We now compute the global factor coming from the Girsanov transform. The variance of the curvature term is given by:
\begin{small}
\begin{align*}
&\frac{Q^2}{16 \pi^2} \int_{\Omega}\int_{\Omega} \Delta \varphi(x) \Delta \varphi(y) G(x,y) d\lambda(x) d\lambda(y) 
  + \frac{Q^2}{16 \pi^2} \int_{\partial\Omega}\int_{\partial\Omega} \partial_n \varphi(x) \partial_n \varphi(y) G(x,y) d\lambda_{\partial}(x) d\lambda_{\partial}(y) \\
  &- \frac{Q^2}{8 \pi^2}  \int_{\Omega}\int_{\partial\Omega} \Delta \varphi(x) \partial_n \varphi(y) G(x,y) d\lambda(x) d\lambda_{\partial}(y) + \frac{Q^2}{4 \pi^2} \int_{\partial\Omega}\int_{\partial\Omega} K(x) (K(y) + \partial_n \varphi(y))  G(x,y) d\lambda_{\partial}(x) d\lambda_{\partial}(y) \\
  &- \frac{Q^2}{4 \pi^2} \int_{\Omega}\int_{\partial\Omega} \Delta \varphi(x) K(y) G(x,y) d\lambda(x) d\lambda_{\partial}(y) \\
  &=  Q^2 \ln \tau + \frac{Q^2}{8 \pi} \int_{\Omega} |\partial \varphi|^2 d\lambda  + \frac{Q^2}{2 \pi} \int_{ \partial \Omega } \partial_n \varphi(x)  \frac{\ln |x| + \tau \ln(|x|/\tau)}{\tau +1}  d \lambda_{\partial}(x) - \frac{Q^2}{2 \pi} \int_{\Omega} \Delta \varphi(x) \frac{ \ln|x| + \tau \ln(|x|/\tau)}{\tau +1} d\lambda(x) 
\end{align*}
\end{small}
We apply the Green-Riemann formula \eqref{GR} on the last term, noticing that $\Delta \ln \vert x\vert =0$ and that $\partial_n ( \frac{\ln\vert x \vert + \tau \ln \vert x/ \tau \vert }{\tau +1} ) = K$:
$$- \frac{Q^2}{2 \pi} \int_{\Omega} \Delta \varphi(x) \frac{ \ln|x| + \tau \ln(|x|/\tau)}{\tau +1} d\lambda(x) = - \frac{Q^2}{2 \pi} \int_{ \partial \Omega } \partial_n \varphi(x)  \frac{\ln |x| + \tau \ln(|x|/\tau)}{\tau +1}  d \lambda_{\partial}(x) + \frac{Q^2}{2 \pi} \int_{ \partial \Omega } \varphi K d \lambda. $$
  Due to this Girsanov transform, we will have the exponential of the following expression in front of our partition function:
  $$ \frac{Q^2}{2} \ln \tau +  \frac{Q^2}{4 \pi} \int_{\partial \Omega} \varphi(x) K(x) d\lambda(x)  + \frac{Q^2}{16 \pi} \int_{\Omega} | \partial \varphi(x)|^2 d\lambda(x). $$
We must also add the term coming from $\mathcal{Z}_{GFF}(g)$ \eqref{shift_partition_GFF}:
  $$\exp\left(\frac{1}{96 \pi} (  \int_{\Omega} |\partial \varphi|^2 d\lambda + 4 \int_{\partial \Omega}  K \varphi d \lambda_{\partial} ) \right).$$
Putting everything together and taking the ratio of the partition functions we arrive at the Weyl anomaly.
 \end{proof}
The above result can be easily generalized to the case of two conformally equivalent metrics: 
 \begin{corollary}
 Given two metrics $g,g'$ linked by $g' = e^{\varphi} g$, we have:
 \begin{align*}
 &\ln \frac{\Pi_{\gamma, \mu, \mu_{\partial}} ^{ (z_i, \alpha_i)_i, (s_j, \beta_j)_j}  ( g', F)}{\Pi_{\gamma, \mu,  \mu_{\partial}} ^{ (z_i, \alpha_i)_i, (s_j, \beta_j)_j}  ( g,  F( \cdot - \frac{Q}{2} \varphi))}  \\
 &= -\sum_{i=1}^n \Delta_{\alpha_i} \varphi(z_i)  - \sum_{j=1}^{n'}  \frac{1}{2}\Delta_{\beta_j} \varphi(s_j) + \frac{1 + 6 Q^2}{96 \pi} \left( \int_{\Omega} | \partial^g \varphi|^2 d\lambda_g + 4 \int_{\partial \Omega} K_g \varphi  d \lambda_{\partial g} + 2 \int_{\Omega} R_g \varphi d\lambda_g \right).
 \end{align*}
 \end{corollary}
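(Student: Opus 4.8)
The plan is to reduce the statement to the case already treated in Theorem \ref{WEYL} via the uniformization theorem. Since every metric on $\Omega$ is conformal to the flat metric, I would write $g = e^{\sigma} dx^2$ for some smooth $\sigma$, so that $g' = e^{\varphi} g = e^{\varphi + \sigma} dx^2$. Theorem \ref{WEYL} already compares each of $\Pi(g',\cdot)$ and $\Pi(g,\cdot)$ to the flat partition function $\Pi(dx^2,\cdot)$, so the corollary should follow by dividing these two comparisons in such a way that the flat reference cancels. The only subtlety is to track the functional argument correctly so that the two flat-metric terms coincide.

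Concretely, I would first apply Theorem \ref{WEYL} to $g' = e^{\varphi+\sigma}dx^2$, which expresses $\ln\frac{\Pi(g',F)}{\Pi(dx^2,\,F(\cdot-\frac{Q}{2}(\varphi+\sigma)))}$ in terms of the Weyl factor $\varphi+\sigma$. Next I would apply Theorem \ref{WEYL} to $g = e^{\sigma}dx^2$, but with the functional $\tilde F := F(\cdot-\frac{Q}{2}\varphi)$ in place of $F$; the point of this choice is that the resulting flat reference functional is $\tilde F(\cdot-\frac{Q}{2}\sigma) = F(\cdot-\frac{Q}{2}(\varphi+\sigma))$, exactly matching the denominator of the first identity. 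Subtracting the two logarithmic identities, the flat partition function cancels and the left-hand side becomes precisely $\ln\frac{\Pi(g',F)}{\Pi(g,\,F(\cdot-\frac{Q}{2}\varphi))}$. The insertion contributions telescope at once, since each $\Delta_{\alpha_i}$ is a constant and evaluation is linear: $-\Delta_{\alpha_i}(\varphi+\sigma)(z_i)+\Delta_{\alpha_i}\sigma(z_i) = -\Delta_{\alpha_i}\varphi(z_i)$, and likewise $-\frac12\Delta_{\beta_j}\varphi(s_j)$ for the boundary insertions.

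It then remains to identify the anomaly term. After the subtraction it reads, up to the universal prefactor $\frac{1+6Q^2}{96\pi}$,
\[
\int_\Omega |\partial\varphi|^2\,d\lambda + 2\int_\Omega \partial\varphi\cdot\partial\sigma\,d\lambda + 4\int_{\partial\Omega} K\varphi\,d\lambda_\partial,
\]
where the cross term arises from expanding $|\partial(\varphi+\sigma)|^2-|\partial\sigma|^2$. I would match this with the covariant expression using three ingredients. First, the Dirichlet energy is conformally invariant in two dimensions, giving $\int_\Omega |\partial^g\varphi|^2\,d\lambda_g = \int_\Omega |\partial\varphi|^2\,d\lambda$. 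Second, the curvature transformation rules \eqref{change_R} and \eqref{change_K} applied to $g=e^{\sigma}dx^2$ (recall $R=0$ in the flat metric) yield $R_g\,d\lambda_g = -\Delta\sigma\,d\lambda$ and $K_g\,d\lambda_{\partial g} = (K+\tfrac12\partial_n\sigma)\,d\lambda_\partial$, so that $2\int_\Omega R_g\varphi\,d\lambda_g + 4\int_{\partial\Omega}K_g\varphi\,d\lambda_{\partial g}$ becomes $-2\int_\Omega \varphi\Delta\sigma\,d\lambda + 4\int_{\partial\Omega}K\varphi\,d\lambda_\partial + 2\int_{\partial\Omega}\varphi\,\partial_n\sigma\,d\lambda_\partial$. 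Third, the Green--Riemann formula \eqref{GR} in the flat metric gives exactly $\int_\Omega \partial\varphi\cdot\partial\sigma\,d\lambda = \int_{\partial\Omega}\varphi\,\partial_n\sigma\,d\lambda_\partial - \int_\Omega \varphi\Delta\sigma\,d\lambda$, which reconciles the cross term with the bulk-curvature and extra boundary terms. Assembling these gives the covariant anomaly claimed.

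The computation is essentially bookkeeping, and the one place requiring care is the functional argument in the second application of Theorem \ref{WEYL}: choosing $\tilde F = F(\cdot-\frac{Q}{2}\varphi)$ is precisely what makes the two flat references agree so that they cancel upon subtraction, and getting this shift wrong would leave a spurious discrepancy between the denominators. Everything else follows from conformal invariance of the Dirichlet energy together with a single application of Green--Riemann.
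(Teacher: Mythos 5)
Your proposal is correct and follows essentially the same route as the paper: both apply Theorem \ref{WEYL} to $g'=e^{\varphi+\sigma}dx^2$ and to $g=e^{\sigma}dx^2$ with the shifted functional so that the flat references cancel, and then identify the covariant anomaly via the expansion $|\partial(\varphi+\sigma)|^2-|\partial\sigma|^2=|\partial\varphi|^2+2\,\partial\varphi\cdot\partial\sigma$, the Green--Riemann formula \eqref{GR}, and the transformation rules \eqref{change_R}--\eqref{change_K}. Your write-up is in fact slightly more careful than the paper's, which leaves the bookkeeping of the functional argument $F(\cdot-\frac{Q}{2}\varphi)$ implicit and records only the final chain of identities for the anomaly term.
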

 \begin{proof}
 In this case we compute:
 \begin{small}
 \begin{align*}
 &\frac{1 + 6 Q^2}{96 \pi} \left(  4 \int_{\partial \Omega} K(x)  \varphi (x) d \lambda_{\partial}(x)  +  \int_{\Omega} | \partial \ln g'(x)|^2 d\lambda(x) - \int_{\Omega} | \partial \ln g(x)|^2 d\lambda(x) \right) \\
  &=  \frac{1 + 6 Q^2}{96 \pi} \left( 4 \int_{\partial \Omega} K(x)  \varphi (x) d \lambda_{\partial}(x)  +  \int_{\Omega} | \partial \varphi(x)|^2 d\lambda(x) + 2 \int_{\Omega} \partial \ln g(x) \cdot \partial\varphi(x) d\lambda(x) \right) \\
  &=  \frac{1 + 6 Q^2}{96 \pi} \left( 4 \int_{\partial \Omega} K(x)  \varphi (x) d \lambda_{\partial}(x)  +  \int_{\Omega} | \partial \varphi(x)|^2 d\lambda(x) - 2 \int_{\Omega} \Delta \ln g(x)  \varphi(x) d\lambda(x) + 2 \int_{\partial \Omega} \partial_n \ln g(x) \varphi(x) d\lambda(x) \right) \\
  &=  \frac{1 + 6 Q^2}{96 \pi} \left( \int_{\Omega} | \partial^g \varphi(x)|^2 d\lambda_g(x) + 4 \int_{\partial \Omega} K_g(x)  \varphi (x) d \lambda_{\partial g}(x)  + 2 \int_{\Omega} R_g(x)  \varphi(x) d\lambda_g(x) \right).
 \end{align*}
 \end{small}
 \end{proof}

  \subsection{KPZ formula}\label{sec_KPZ}
The KPZ formula was first introduced in \cite{KPZ} and was studied in a more probabilistic setting in \cite{DuSh}. In our case it consists in finding how the partition function behaves under the action of a conformal automorphism $\psi$ of our annulus meaning that we would like to give a relationship between $ \Pi_{\gamma, \mu, \mu_{\partial}} ^{ (\psi(z_i), \alpha_i)_i, (\psi(s_j), \beta_j)_j}  (dx^2, F) $ and $ \Pi_{\gamma, \mu, \mu_{\partial}} ^{ (z_i, \alpha_i)_i, (s_j, \beta_j)_j}  (dx^2, F) $. The proof we give is much simpler than the one in \cite{Sphere} or \cite{Disk} as we will just apply Theorem \ref{WEYL}. For this we start by  writing the invariance of our partition function under the change of coordinate $z \mapsto \psi(z)$:
  $$  \Pi_{\gamma, \mu, \mu_{\partial}} ^{ (\psi(z_i), \alpha_i)_i, (\psi(s_j), \beta_j)_j}  ( \vert \psi' \vert^2 dx^2, F( \cdot \circ \psi)) = \Pi_{\gamma, \mu, \mu_{\partial}} ^{ (z_i, \alpha_i)_i, (s_j, \beta_j)_j}  (dx^2, F). $$
This means that we can apply the Weyl anomaly formula with $\varphi = 2 \ln \vert \psi' \vert$. We compute:
$$ \int_{\Omega} | \partial \varphi|^2 d\lambda + 4 \int_{\partial \Omega} K \varphi  d \lambda_{\partial} = -\int_{\Omega} \varphi \Delta \varphi d\lambda + \int_{\partial \Omega} \partial_n \varphi  d \lambda_{\partial}  + 4
 \int_{\partial \Omega} K \varphi  d \lambda_{\partial}. $$
We recall that $\psi$ is either a rotation, the inversion $ z \mapsto \frac{\tau}{z}$, or the composition of both. Since the case of the rotation is trivial will assume that $\psi(z) = \frac{\tau}{z}$. We then have $\psi'(z) = - \frac{\tau}{z^2}$ thus $ \varphi(z)= 2 \ln \vert \frac{\tau}{z^2} \vert $. From this we get $\Delta \varphi =0$ and $ 4K + \partial_n \varphi =0$. Therefore the Weyl anomaly gives:
\begin{align*}
  &\Pi_{\gamma, \mu, \mu_{\partial}} ^{ (\psi(z_i), \alpha_i)_i, (\psi(s_j), \beta_j)_j}  ( \vert \psi' \vert^2 dx^2, F( \cdot \circ \psi)) = \nonumber \\
&\prod_{i=1}^n \vert \psi'(z_i) \vert^{2 \Delta_{\alpha_i}} \prod_{j=1}^{n'} \vert \psi'(s_j) \vert^{ \Delta_{\beta_j}} \Pi_{\gamma, \mu, \mu_{\partial}} ^{ (\psi(z_i), \alpha_i)_i, (\psi(s_j), \beta_j)_j}  (dx^2, F( \cdot \circ \psi + Q \ln \vert \psi' \vert )). 
\end{align*}
We arrive at:
\begin{theorem}\label{KPZ} (KPZ formula)
For any conformal automorphism $\psi$ of the annulus $\Omega$, the following holds:  
  $$ \Pi_{\gamma, \mu, \mu_{\partial}} ^{ (\psi(z_i), \alpha_i)_i, (\psi(s_j), \beta_j)_j}  (dx^2, F(\cdot \circ \psi + Q \ln \vert \psi' \vert)) = \prod_{i=1}^n |\psi'(z_i)|^{-2 \Delta_{\alpha_i}} \prod_{j=1}^{n'} |\psi'(s_j)|^{ -\Delta_{\beta_j}} \Pi_{\gamma, \mu, \mu_{\partial}} ^{ (z_i, \alpha_i)_i, (s_j, \beta_j)_j}  (dx^2, F). $$
  \end{theorem}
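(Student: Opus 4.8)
The plan is to obtain the KPZ formula as a direct consequence of the Weyl anomaly of Theorem~\ref{WEYL}, avoiding the explicit Gaussian free field computation of \cite{Sphere, Disk}. I would combine two inputs: the invariance of the partition function under the change of coordinate $z \mapsto \psi(z)$, and the Weyl anomaly, which undoes the conformal factor $|\psi'|^2$ created by that change of coordinate and brings us back to the flat metric.

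First I would record the reparametrization identity
$$ \Pi_{\gamma, \mu, \mu_{\partial}}^{(\psi(z_i), \alpha_i)_i, (\psi(s_j), \beta_j)_j}(|\psi'|^2 dx^2, F(\cdot \circ \psi)) = \Pi_{\gamma, \mu, \mu_{\partial}}^{(z_i, \alpha_i)_i, (s_j, \beta_j)_j}(dx^2, F). $$
Because $\psi$ is a conformal automorphism of $\Omega$, the pullback by $\psi$ of the flat metric is exactly $|\psi'|^2 dx^2$, and every object entering the definition \eqref{partition} -- the Green's function $G$, the bulk and boundary Gaussian multiplicative chaos measures of Proposition~\ref{def_GMC}, the circle-average regularization, and the curvature terms -- is geometric and therefore transforms covariantly under this relabeling of coordinates. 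The one point to watch is that the inversion swaps the two boundary circles $\partial \Omega_1$ and $\partial \Omega_\tau$, so the boundary curvatures and boundary insertion data must be matched consistently; this is handled by the transformation rules \eqref{change_R}--\eqref{change_K}.

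Next I would apply Theorem~\ref{WEYL} with $\varphi = 2\ln|\psi'|$ to the left-hand side, converting $\Pi(|\psi'|^2 dx^2, \cdot)$ into $\Pi(dx^2, \cdot)$. The automorphisms of $\Omega$ are rotations $z \mapsto e^{i\theta}z$, for which $|\psi'| = 1$ and the anomaly is trivial, and the inversion $z \mapsto \tau/z$ up to such a rotation (which leaves $|\psi'|$ unchanged); it therefore suffices to treat $\psi(z) = \tau/z$, so that $\varphi(z) = 2\ln\tau - 4\ln|z|$. The substantive step is that the entire anomaly integral in Theorem~\ref{WEYL} vanishes: indeed $\ln|z|$ is harmonic on $\Omega$, giving $\Delta\varphi = 0$, and a direct computation on each boundary circle (using $K = -1$ on $\partial\Omega_1$, $K = 1/\tau$ on $\partial\Omega_\tau$, and the outward normal convention) yields the pointwise identity $\partial_n\varphi + 4K = 0$; via \eqref{GR} these two facts force $\int_\Omega |\partial\varphi|^2 d\lambda + 4\int_{\partial\Omega} K\varphi\, d\lambda_\partial$ to collapse to zero. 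Only the insertion factors $\prod_i e^{-\Delta_{\alpha_i}\varphi(\psi(z_i))}\prod_j e^{-\frac12\Delta_{\beta_j}\varphi(\psi(s_j))}$ and the functional shift by $-\frac{Q}{2}\varphi$ then survive.

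Finally I would reassemble the two relations and solve for the target partition function. The main place to be careful -- more a bookkeeping subtlety than a deep obstacle -- is that the anomaly produces its factors and shift through $\varphi$ evaluated at the transported points $\psi(z_i)$ and composed as $\varphi\circ\psi$, whereas Theorem~\ref{KPZ} is phrased in terms of $|\psi'(z_i)|$ and $Q\ln|\psi'|$. These are reconciled by the involution property $\psi \circ \psi = \mathrm{id}$, equivalently $|\psi'(\psi(z))| = 1/|\psi'(z)|$ and $\varphi \circ \psi = -\varphi$: this turns $\varphi(\psi(z_i))$ into $-2\ln|\psi'(z_i)|$, producing the factor $|\psi'(z_i)|^{2\Delta_{\alpha_i}}$, and identifies the shifted functional as $F(\cdot \circ \psi + Q\ln|\psi'|)$. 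Substituting the reparametrization identity and rearranging then yields the stated formula. I expect the vanishing of the anomaly integral to be the genuinely informative step, with everything else reducing to conformal covariance and algebra.
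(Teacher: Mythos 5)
Your proposal is correct and takes essentially the same route as the paper: the reparametrization identity for the pullback metric $|\psi'|^2 dx^2$, then Theorem \ref{WEYL} with $\varphi = 2\ln|\psi'|$, reduction to the inversion $z \mapsto \tau/z$, and the vanishing of the anomaly integral via $\Delta\varphi = 0$ and $4K + \partial_n\varphi = 0$ combined with \eqref{GR}. The only (cosmetic) difference is that you spell out the involution bookkeeping $\varphi\circ\psi = -\varphi$, $|\psi'(\psi(z))| = 1/|\psi'(z)|$, which the paper uses implicitly in its final display.
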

We can also give a similar formula for a conformal change of domain. Let $D$ be a (strict) domain of $\mathbb{C}$ with a smooth boundary and conformally equivalent to our annulus and let $ \psi : D \mapsto \Omega$ be a conformal map. We choose insertion points $(z_i, \alpha_i)$ in $D$ and $(s_j, \beta_j)$ in $\partial D$. We can define in the same way as \eqref{partition} the partition function of LQFT on the domain $D$:
\begin{align}\label{partition_D} 
   &\Pi_{\gamma,\mu,  \mu_{\partial}} ^{ (z_i, \alpha_i)_i, (s_j, \beta_j)_j}  (D, \tilde{g}, F) = \lim_{ \epsilon \rightarrow 0}
  \mathcal{Z}_{GFF}(\tilde{g}) \int_{\mathbb{R}} dc \: \mathbb{E} [F( X_{\tilde{g} , \epsilon} + c) \prod_{i=1}^n  \epsilon^{\frac{\alpha_i^2}{2}}  e ^{ \alpha_i ( X_{\tilde{g} , \epsilon} + c )(z_i)}  \prod_{j=1}^{n'}  \epsilon ^{\frac{\beta_j^2}{4}}  e ^{ \frac{\beta_j}{2} ( X_{\tilde{g} , \epsilon} + c )(s_j)}\nonumber \\
  &\times 
  \exp (- \frac{Q}{4 \pi} \int_{D} R_{\tilde{g}}(c + X_{\tilde{g} , \epsilon}) d \lambda _{\tilde{g}} 
  - \mu e^{\gamma c} M_{\gamma, \tilde{g}, \epsilon}(D) - \frac{Q}{2 \pi} \int_{\partial D} K_{\tilde{g}}(c + X_{\tilde{g} , \epsilon}) d \lambda _{\partial \tilde{g}} 
  - \mu_{\partial} e^{\frac{\gamma}{2} c}  M^{\partial}_{\gamma, \tilde{g}, \epsilon}( \partial D) )] 
  \end{align}
Here $\tilde{g}$ denotes a metric on $D$ and $X_{\tilde{g}}$ is the GFF on $D$ with Neumann boundary conditions and with vanishing mean over $\partial D$ in the metric $\tilde{g}$. We state the following result:
\begin{proposition} (Conformal change of domain)
Let $D$ be a domain of $\mathbb{C}$ with a smooth boundary and conformally equivalent to our annulus $\Omega$. Let $\psi : D \mapsto \Omega$ be a conformal map between $D$ and $\Omega$ and let $g_{\psi} = \vert \psi' \vert^2 g(\psi) $ be the pull-back of a metric $g$ on $\Omega$ by $\psi$. Then we have:
$$   \Pi_{\gamma, \mu, \mu_{\partial}} ^{ (z_i, \alpha_i)_i, (s_j, \beta_j)_j}  (D, g_{\psi}, F)  = \prod_{i=1}^n |\psi'(z_i)|^{2 \Delta_{\alpha_i}} \prod_{j=1}^{n'} |\psi'(s_j)|^{ \Delta_{\beta_j}} \Pi_{\gamma, \mu, \mu_{\partial}} ^{ (\psi(z_i), \alpha_i)_i, (\psi(s_j), \beta_j)_j}  (\Omega, g, F(\cdot \circ \psi + Q \ln \vert \psi' \vert)). $$ 
\end{proposition}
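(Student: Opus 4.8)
The plan is to mimic the proof of the KPZ formula (Theorem~\ref{KPZ}), where the transformation law under a conformal map was obtained by combining the invariance of the path integral under a change of coordinates with the Weyl anomaly (Theorem~\ref{WEYL}). Here $\psi$ maps a \emph{different} domain $D$ onto $\Omega$, but the mechanism is the same: the change of coordinates moves the insertions from $z_i,s_j$ to $\psi(z_i),\psi(s_j)$ and turns the metric $g_\psi$ on $D$ into the metric $g$ on $\Omega$, while the Weyl anomaly produces the conformal weights $\Delta_{\alpha_i},\Delta_{\beta_j}$ together with the shift of the field by $Q\ln|\psi'|$.

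First I would record the change of coordinates. Because $g_\psi=|\psi'|^2 g(\psi)$ is the pull-back of $g$, the map $\psi:(D,g_\psi)\to(\Omega,g)$ is an isometry, so every geometric ingredient of \eqref{partition_D} transports through $\psi$: the Green's function and the Gaussian free field pull back, the bulk and boundary curvature integrals $\int R\,X\,d\lambda$ and $\int K\,X\,d\lambda_\partial$ are coordinate invariant, and the GMC measures transform by \eqref{shift_GMC}. The metric was built into the circle-average regularization of Section~\ref{sec_circle} precisely for this step: a circle of radius $\epsilon\,g_\psi(z)^{-1/2}$ around $z$ is carried by $\psi$ to a circle of radius $\epsilon\,g(\psi(z))^{-1/2}$ around $\psi(z)$, so that $X_{g_\psi,\epsilon}(z)$ and $X_{g,\epsilon}(\psi(z))$ match at the same value of $\epsilon$. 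The only objects that do \emph{not} simply follow the isometry are the metric-independent renormalizations $\epsilon^{\alpha_i^2/2},\epsilon^{\beta_j^2/4}$ attached to the vertex operators; feeding them through the self-energy asymptotics of Proposition~\ref{circlegreen}, where $\tfrac12\ln g_\psi(z_i)=\ln|\psi'(z_i)|+\tfrac12\ln g(\psi(z_i))$, is what starts producing the factors $|\psi'(z_i)|^{2\Delta_{\alpha_i}}$ and $|\psi'(s_j)|^{\Delta_{\beta_j}}$.

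Secondly I would bring in the Weyl anomaly to reconcile the two flat reference structures. Writing $g=e^{\varphi}dx^2$, so that $g_\psi=e^{\varphi\circ\psi}(|\psi'|^2dx^2)$, the metric on $D$ is a Weyl rescaling of the pulled-back flat metric and the metric on $\Omega$ a Weyl rescaling of $dx^2$. Applying Theorem~\ref{WEYL} (and its corollary) on both sides and using the conformal invariance of the Dirichlet energy, $\int_D|\partial(\varphi\circ\psi)|^2d\lambda=\int_\Omega|\partial\varphi|^2d\lambda$, the anomaly contributions coming from $\varphi$ cancel in the ratio. The genuinely new contribution is the Weyl factor $2\ln|\psi'|$ relating $|\psi'|^2dx^2$ to $dx^2$ on $D$; since $\ln|\psi'|$ is harmonic ($\Delta\ln|\psi'|=0$), its anomaly collapses to boundary curvature terms handled through \eqref{change_K} and \eqref{green_g}, exactly as in the proof of Theorem~\ref{KPZ}. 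The deterministic field shift $-\tfrac{Q}{2}\varphi$ supplied by the anomaly, together with the curvature-term shift obtained from the Girsanov transform (Proposition~\ref{girsanov}), should then recombine into the replacement of $F$ by $F(\cdot\circ\psi+Q\ln|\psi'|)$.

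The hard part will be the bookkeeping of this second step. I expect the delicate point to be checking that the several deterministic shifts of the field — from the change of coordinates and from the two Weyl factors $\varphi$ and $\varphi\circ\psi$ — recombine into exactly $+Q\ln|\psi'|$ (this uses the chain rule $|(\chi\circ\psi)'|=|\chi'\circ\psi|\,|\psi'|$ for the derivative under composition), and that the curvature, boundary and self-energy contributions assemble with the correct signs into $\prod_{i}|\psi'(z_i)|^{2\Delta_{\alpha_i}}\prod_{j}|\psi'(s_j)|^{\Delta_{\beta_j}}$. A second, domain-specific difficulty — absent from Theorem~\ref{KPZ}, where $\psi$ is an automorphism of $\Omega$ — is to justify rigorously the change-of-coordinates identity of the first step for the general-domain partition function \eqref{partition_D}, that is, that the Neumann Green's function and the auxiliary functions $g_P,h,h_\partial$ attached to $D$ are the $\psi$-transports of those on $\Omega$ up to the explicit conformal factors.
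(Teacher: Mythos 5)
Your overall instinct (coordinate invariance plus the Weyl anomaly, as in Theorem \ref{KPZ}) is the natural one, but the mechanism you assign to the conformal weights is wrong, and this is fatal to the plan. In your first step you correctly note that $\psi:(D,g_\psi)\to(\Omega,g)$ is an isometry, that the intrinsically defined Neumann GFFs correspond ($X_{g_\psi}\overset{law}{=}X_g\circ\psi$, the free--boundary GFF modulo constants being conformally invariant, with matching zero--mean normalizations), and that the metric--dependent regularization was designed precisely so that the circle of radius $\epsilon\,g_\psi(z)^{-1/2}$ at $z$ is carried to the circle of radius $\epsilon\,g(\psi(z))^{-1/2}$ at $\psi(z)$, i.e.\ $X_{g_\psi,\epsilon}(z)=X_{g,\epsilon}(\psi(z))+o(1)$. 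But these observations contradict your next claim: if the regularized fields agree at the \emph{same} $\epsilon$, then so do the renormalized insertions $\epsilon^{\alpha_i^2/2}e^{\alpha_i X_{\cdot,\epsilon}}$, so no factor $|\psi'(z_i)|^{2\Delta_{\alpha_i}}$ can be produced there. Indeed, matching variances forces $\tfrac12\ln g_\psi(z_i)+\ln g_P^D(z_i)+h^D(z_i)=\tfrac12\ln g(\psi(z_i))+\ln g_P(\psi(z_i))+h(\psi(z_i))$: the $\ln|\psi'(z_i)|$ you extract from $\tfrac12\ln g_\psi$ is exactly absorbed by the transformation of the regular part $\ln g_P+h$ of the Green's function, and Proposition \ref{circlegreen} yields nothing. (Note also that variance bookkeeping can only ever produce exponents proportional to $\alpha_i^2$; the linear piece $\alpha_iQ$ of $2\Delta_{\alpha_i}=\alpha_iQ-\alpha_i^2/2$ arises, in the proof of Theorem \ref{WEYL}, from the Girsanov shift of the curvature term, and a genuine isometry produces no such shift.) What your step 1 actually proves, term by term in \eqref{partition_D}, is the weight--free identity $\Pi_{\gamma,\mu,\mu_\partial}^{(z_i,\alpha_i),(s_j,\beta_j)}(D,g_\psi,F)=\Pi_{\gamma,\mu,\mu_\partial}^{(\psi(z_i),\alpha_i),(\psi(s_j),\beta_j)}(\Omega,g,F(\cdot\circ\psi))$ --- the exact analogue of the first display in the proof of Theorem \ref{KPZ} --- and after it you are already on $(\Omega,g)$: there are no ``two flat reference structures'' left for your step 2 to reconcile.

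The deeper point, which explains why no bookkeeping in step 2 can terminate at the displayed formula: in this paper's conventions (metric dependence placed in the regularization, \`a la \cite{gaw}), the conformal weights sit in the Weyl anomaly and conformal changes of coordinates are weight--free; the displayed formula is the statement valid in the conventions of \cite{Disk} (field shifted by $\tfrac{Q}{2}\ln g$, metric--independent regularization), where the opposite is true. Read literally in the present framework it is inconsistent with the paper's own theorems. Take $D=\Omega$, $g=dx^2$, $\psi(z)=\tau/z$, a single boundary insertion and $F=1$ (so all functional shifts are invisible): the proposition gives $\Pi^{(s_1,\beta_1)}(\Omega,|\psi'|^2dx^2,1)=|\psi'(s_1)|^{\Delta_{\beta_1}}\,\Pi^{(\psi(s_1),\beta_1)}(\Omega,dx^2,1)$, which combined with Theorem \ref{KPZ} forces $\Pi^{(s_1,\beta_1)}(\Omega,|\psi'|^2dx^2,1)=\Pi^{(s_1,\beta_1)}(\Omega,dx^2,1)$; but Theorem \ref{WEYL} with $\varphi=2\ln|\psi'|$ (whose anomaly functional vanishes since $\Delta\varphi=0$ and $4K+\partial_n\varphi=0$) gives instead $\Pi^{(s_1,\beta_1)}(\Omega,|\psi'|^2dx^2,1)=|\psi'(s_1)|^{-\Delta_{\beta_1}}\Pi^{(s_1,\beta_1)}(\Omega,dx^2,1)$, a contradiction since $|\psi'(s_1)|\in\{\tau,\tau^{-1}\}\neq1$. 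One sees the same failure by direct computation for the dilation $D=\tau^{-1}\Omega$, $\psi(z)=\tau z$, $g_\psi=\tau^2dx^2$, where every term of \eqref{partition_D} coincides exactly (at every $\epsilon$) with the corresponding term of \eqref{partition} on $(\Omega,dx^2)$, so the two sides of the proposition differ by $\tau^{\sum_i2\Delta_{\alpha_i}+\sum_j\Delta_{\beta_j}}$. So the statement you were asked to prove should, in this framework, be restated in the weight--free form above (which your step 1 essentially establishes, modulo the routine control of the $O(\epsilon^2)$ distortion of circles under $\psi$ and of the Green's function correspondence for general $D$); attempting to reach the weighted form inside this framework cannot succeed without introducing an error.
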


\subsection{Liouville field and measure at fixed $\tau$}\label{liouville_field}
Using our partition function we can now give the definition of the Liouville field $\phi_{\tau}$, the bulk Liouville measure $Z_{\tau}$, and the boundary Liouville measure $Z_{\tau}^{\partial}$.\footnote{Here we write the subscript $\tau$ to emphasize the fact that we are working at fixed $\tau$, i.e. on the annulus $\Omega$ of radii $1$ and $\tau$. In section \ref{LQG} we will integrate over $\tau \in (1, \infty)$ and define the general Liouville measures $Z$ and $Z^{\partial}$.}  These objects can be defined once the Seiberg bounds \eqref{sei1}-\eqref{sei3} are satisfied. Formally, $\phi_{\tau}$ is the log-conformal factor of the formal random metric $e^{\gamma \phi_{\tau}} g $ conformally equivalent to $g$.\footnote{The formal random metric $e^{\gamma \phi_{\tau}} g $ has been constructed rigorously by Miller-Sheffield for the special value $\gamma = \sqrt{\frac{8}{3}}$ corresponding to uniform planar maps, see for instance  \cite{Miller}.} The Liouville measure is a random measure that can be seen as the volume form of this formal metric tensor whereas the Liouville boundary measure corresponds to the line element along the boundary. More rigorously, given a measured space $E$, we denote $R(E)$ the space of Radon measures on $E$ equipped with the topology of weak convergence. For any metric $g$, the joint law of $(\phi_{\tau}, Z_{\tau}, Z_{\tau}^{\partial})$ is defined for continuous bounded functionals $F$ on $H^{-1}(\Omega)\times R(\Omega) \times R(\partial \Omega )$ by:
\begin{align*}
 &\mathbb{E}_{\gamma, \mu, \mu_{\partial},g} ^{ (z_i, \alpha_i)_i, (s_j, \beta_j)_j} [F(\phi_{\tau},Z_{\tau},Z_{\tau}^{\partial})] =
    (\Pi_{\gamma, \mu,  \mu_{\partial}} ^{ (z_i, \alpha_i)_i, (s_j, \beta_j)_j}  (g, 1)) ^{-1} \mathcal{Z}_{GFF}(g) \\
    &\times \lim_{\epsilon \rightarrow 0}  \int_{\mathbb{R}} \mathbb{E} [F( X_{g , \epsilon} + c, \epsilon^{\frac{\gamma^2}{2}} e^{\gamma(X_{g, \epsilon} + c )} d \lambda_g,  \epsilon^{\frac{\gamma^2}{4}} e^{\frac{\gamma}{2}(X_{g, \epsilon} + c )} d \lambda_{\partial g} )  \prod_i  \epsilon^{\frac{\alpha_i^2}{2}}  e ^{ \alpha_i ( X_{g , \epsilon} + c )(z_i)}  \prod_j  \epsilon^{\frac{\beta_i^2}{4}}  e ^{ \frac{\beta_j}{2} ( X_{g , \epsilon} + c)(s_j)}  \\
&\times 
  \exp (- \frac{Q}{4 \pi} \int_{\Omega} R_g(c + X_{g , \epsilon}) d \lambda _g 
  - \mu e^{\gamma c} M_{\gamma,g,\epsilon}(\Omega) - \frac{Q}{2 \pi} \int_{\partial \Omega} K_g(c + X_{g , \epsilon}) d \lambda _{\partial g} 
  - \mu_{\partial} e^{\frac{\gamma}{2} c} M^{\partial}_{\gamma, g,\epsilon}(\partial\Omega))]dc.    
\end{align*}

We can write a more compact expression for the joint law of the bulk and boundary measures in the case where $g =dx^2$. We introduce the notations,
\begin{align*}
& Z_0(dx) = e^{\gamma H(x)} M_{\gamma,dx^2}(dx),  \\
& Z_0^{\partial}(dx) = e^{\frac{\gamma}{2} H(x)} M^{\partial}_{\gamma, dx^2}(dx),
\end{align*}
and the ratio $ R = \frac{Z_0(\Omega)}{Z_0^{\partial}(\partial \Omega)^2}$. Then we get for the law of the bulk and boundary measures in the flat metric:
\begin{align*}
  &\mathbb{E}_{\gamma, \mu, \mu_{\partial},dx^2} ^{ (z_i, \alpha_i)_i, (s_j, \beta_j)_j} [F(Z_{\tau},Z_{\tau}^{\partial})]   \\
 & = \frac{1}{\mathcal{Z}} \int_{\mathbb{R}} e^{(\sum_i \alpha_i + \frac{1}{2} \sum_j \beta_j )c} \mathbb{E} [F(e^{\gamma c} Z_0, e^{ \frac{\gamma}{2}c}Z_0^{\partial})  \exp ( - \mu e^{\gamma c} Z_0(\Omega) - \mu_{\partial} e^{ \frac{\gamma}{2}c} Z_0^{\partial}(\partial \Omega))] dc \\
 &=  \frac{1}{\mathcal{Z}} \int_0^{\infty} y^{ \frac{2}{\gamma} (\sum_i \alpha_i + \frac{1}{2} \sum_j \beta_j ) -1} 
   \mathbb{E} [F(y^2 R \frac{Z_0}{Z_0(\Omega)}, y \frac{Z_0^{\partial}}{Z_0^{\partial}(\partial \Omega)}) \exp ( - \mu y^2 R - \mu_{\partial} y) Z_0^{\partial}(\partial \Omega)^{ -\frac{2}{\gamma} (\sum_i \alpha_i + \frac{1}{2} \sum_j \beta_j ) }  ] dy.
\end{align*}
The normalization constant $ \mathcal{Z} $ is computed by choosing $ F=1 $.
We can get an even simpler expression if we choose $\mu_{\partial} =0$, (a similar expression holds for $\mu=0$):
\begin{corollary}\label{mu_zero} Assume $\mu_{\partial} =0$. The joint law of the bulk/boundary measures are given by
\begin{align*}
&\mathbb{E}_{\gamma,  \mu, \mu_{\partial}=0, dx^2}^{(z_i,\alpha_i)_i,(s_j,\beta_j)_j}[F(Z_{\tau},Z_{\tau}^{\partial})] = \\
 &\frac{1}{\mathcal{Z}}  \int_0^{\infty} u^{ \frac{1}{\gamma} (\sum_i \alpha_i + \frac{1}{2} \sum_j \beta_j ) -1}  \mathbb{E} [F(u \frac{Z_0}{Z_0(\Omega)}, u^{1/2} \frac{Z_0^{\partial}}{Z_0( \Omega)^{1/2}})  Z_0( \Omega)^{ -\frac{1}{\gamma} (\sum_i \alpha_i + \frac{1}{2} \sum_j \beta_j ) }  ] e^{- \mu u} du
\end{align*}
where $\mathcal{Z}$ is the correct normalization constant in order to get a probability measure. In particular, the law of the volume of $\Omega$ follows a Gamma law with parameters $(\frac{ \sum_i \alpha_i +  \frac{1}{2}\sum_j \beta_j }{\gamma}, \mu)$.
\end{corollary}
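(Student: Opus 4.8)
The plan is to derive the corollary directly from the flat-metric joint law displayed immediately above, specialized to $\mu_{\partial}=0$, by a single change of variables in the $c$-integral. Write $s=\sum_i\alpha_i+\tfrac{1}{2}\sum_j\beta_j$ for the total insertion weight, which is strictly positive by the Seiberg bound \eqref{sei1}. Setting $\mu_{\partial}=0$ removes the boundary term in the exponential, so the starting point is
\begin{equation*}
\mathbb{E}[F(Z_{\tau},Z_{\tau}^{\partial})]=\frac{1}{\mathcal{Z}}\int_{\mathbb{R}}e^{sc}\,\mathbb{E}\!\left[F\!\left(e^{\gamma c}Z_0,\,e^{\frac{\gamma}{2}c}Z_0^{\partial}\right)\exp\!\left(-\mu e^{\gamma c}Z_0(\Omega)\right)\right]dc.
\end{equation*}

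First I would invoke Fubini to exchange the $c$-integral with the expectation; this is legitimate because $F$ is bounded and the resulting integrand is, up to the bounded factor $\|F\|_{\infty}$, exactly the positive integrand whose finiteness is the content of Theorem \ref{th_existence} under \eqref{sei1}--\eqref{sei3}. Working pathwise for a fixed realization of $X$ (hence of the measures $Z_0$, $Z_0^{\partial}$ and the total bulk mass $Z_0(\Omega)$), I would then substitute $u=e^{\gamma c}Z_0(\Omega)$, that is $c=\tfrac{1}{\gamma}\ln\big(u/Z_0(\Omega)\big)$ and $dc=\tfrac{du}{\gamma u}$. Tracking each factor: $e^{sc}=(u/Z_0(\Omega))^{s/\gamma}$; the bulk argument becomes $e^{\gamma c}Z_0=u\,Z_0/Z_0(\Omega)$; the boundary argument becomes $e^{\frac{\gamma}{2}c}Z_0^{\partial}=u^{1/2}\,Z_0^{\partial}/Z_0(\Omega)^{1/2}$; and the damping factor becomes $e^{-\mu u}$. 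These are precisely the arguments appearing in the statement.

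Combining the Jacobian with the prefactor gives $(u/Z_0(\Omega))^{s/\gamma}\cdot\tfrac{1}{\gamma u}=\tfrac{1}{\gamma}\,u^{s/\gamma-1}\,Z_0(\Omega)^{-s/\gamma}$, so applying Fubini once more to pull the $u$-integral back outside and absorbing the constant $\gamma$ into $\mathcal{Z}$ yields exactly the claimed identity, with exponent $s/\gamma=\tfrac{1}{\gamma}(\sum_i\alpha_i+\tfrac{1}{2}\sum_j\beta_j)$. For the Gamma-law statement I would then specialize to functionals of the total bulk volume alone, $F(Z_{\tau},Z_{\tau}^{\partial})=G(Z_{\tau}(\Omega))$: since the first argument $u\,Z_0/Z_0(\Omega)$ has total mass exactly $u$, the functional reduces to the deterministic $G(u)$, which factors out of the expectation, leaving the constant $\mathbb{E}[Z_0(\Omega)^{-s/\gamma}]$ inside. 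Hence the density of $Z_{\tau}(\Omega)$ is proportional to $u^{s/\gamma-1}e^{-\mu u}$ on $(0,\infty)$, i.e. a Gamma law with shape $s/\gamma$ and rate $\mu$, as claimed.

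The only genuine point requiring care -- the main obstacle -- is the finiteness of the negative moment $\mathbb{E}[Z_0(\Omega)^{-s/\gamma}]$, needed both to make the normalization well defined and to license the Fubini interchanges. This follows from the standard fact that the total mass of a Gaussian multiplicative chaos measure (here $Z_0(\Omega)$, which controls a positive GMC mass up to the deterministic insertion factor $e^{\gamma H}$) admits finite moments of all negative orders; alternatively it is already implicit in the non-degeneracy half of Theorem \ref{th_existence}. Everything else is a bookkeeping change of variables.
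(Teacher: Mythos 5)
Your proposal is correct and is essentially the paper's own (implicit) argument: the paper states the corollary without a written proof, but it is plainly meant to follow from the same Girsanov-plus-change-of-variables scheme used for the displayed joint law, with your substitution $u = e^{\gamma c}Z_0(\Omega)$ being the bulk-mass analogue of the paper's substitution $y = e^{\frac{\gamma}{2} c}Z_0^{\partial}(\partial\Omega)$. Your bookkeeping (Jacobian, exponents, absorption of $1/\gamma$ into $\mathcal{Z}$), the reduction to $G(u)$ for the Gamma law, and the justification of the finite negative moment $\mathbb{E}[Z_0(\Omega)^{-s/\gamma}]$ are all sound.
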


\section{Liouville quantum gravity}\label{LQG}
\subsection{Convergence of the partition function}
We are now ready to construct the full theory of Liouville quantum gravity (LQG). Our goal is to integrate over $\tau$ the partition function of LQFT constructed in section \ref{LQFT} with an appropriate measure. As explained in section \ref{heuristics} of the appendix, the correct partition function of LQG is given by,
\begin{equation}
\mathcal{Z}_{LQG} = \int_1^{\infty} d \tau \mathcal{Z}_{Ghost}( \tau, dx^2) \mathcal{Z}_{Matter}(\tau, dx^2) \mathcal{Z}_{LQFT}(\tau, dx^2),
\end{equation}
where we have following expressions for the different partition functions,\footnote{In this section we modify very slightly the notation for the partition function of LQFT. Here we always choose $F=1$ and remove it from the notation but we include $\tau$ to show explicitly the dependence on $\tau$.}
\begin{align*}
&\mathcal{Z}_{GFF}( \tau, dx^2) =  \tau^{1/12} \frac{1}{ \vert \eta(\tau) \vert} \\
&\mathcal{Z}_{Matter}(\tau, dx^2) = (\mathcal{Z}_{GFF})^{c_m} =  \tau^{c_m/12} \frac{1}{\vert \eta (\tau) \vert^{c_m} } \\
&\mathcal{Z}_{Ghost}(\tau, dx^2)  = \frac{1}{\tau} \tau^{-\frac{13}{6}} \vert \eta (\tau) \vert ^2 \\
&\mathcal{Z}_{LQFT}(\tau, dx^2) = \int_{\partial \Omega} d \lambda_{\partial}(s) \Pi^{ ( s, \gamma ) }_{\gamma, \mu, \mu_{\partial}} (\tau, dx^2),
\end{align*} 
with
\begin{equation}
\eta (\tau) = \tau^{-1/12} \prod_{n=1}^{\infty} (1 - \tau^{-2n} ) .
\end{equation}
The real constant $c_m$ is the central charge of the matter fields. It is link to $Q$ by the relation written in the appendix \eqref{scaling}:
\begin{equation}\label{central_charge}
c_m - 25 + 6 Q^2 =0.
\end{equation}
Let us explain the expression for $\mathcal{Z}_{LQFT}$. It corresponds to the partition function of LQFT that we have constructed in section \ref{LQFT} with one insertion point on the boundary (this is the minimal condition required for the Seiberg bounds (\ref{sei1})-(\ref{sei3}) to be satisfied). We choose an insertion point on the boundary $s \in \partial \Omega$ with weight $\gamma$ and we integrate the position of $s$ over the whole boundary. This is the correct choice to get a link with random planar maps, see section \ref{sec_maps}. From all of this we get:
\begin{align}
\mathcal{Z}_{LQG} &= \int_{\partial \Omega}  d \lambda_{\partial}(s)  \int_1^{\infty} d \tau \mathcal{Z}_{Ghost}( \tau, dx^2) \mathcal{Z}_{Matter}(\tau, dx^2) \Pi_{\gamma, \mu, \mu_{\partial}}^{ ( s, \gamma ) } (\tau, dx^2) \\
&= \int_{\partial \Omega}  d \lambda_{\partial}(s)  \int_1^{\infty} \frac{d \tau}{\tau} \tau^{ \frac{c_m -25}{12}} \vert \eta ( \tau ) \vert ^{1 -c_m} \mathcal{Z}_{GFF}^{-1} \Pi_{\gamma, \mu, \mu_{\partial}}^{ ( s, \gamma ) } (\tau, dx^2). \nonumber
\end{align}
The integral over the position of the boundary insertion can be easily simplified using the KPZ formula of section \ref{sec_KPZ}. For any $s \in \partial \Omega $, we can write $ s = \psi(1)$ for some conformal automorphism $\psi$. A computation gives:
\begin{equation}
\int_{\partial \Omega}  d \lambda_{\partial}(s) \Pi_{\gamma, \mu, \mu_{\partial}}^{(s,\gamma)}( \tau, dx^2)  = 2 \pi (1 + \tau^{1 - \frac{\gamma}{2} ( Q - \frac{\gamma}{2}) }) \Pi_{\gamma, \mu, \mu_{\partial}}^{(1,\gamma)} ( \tau, dx^2) = 4 \pi \Pi_{\gamma, \mu, \mu_{\partial}}^{(1,\gamma)} ( \tau, dx^2).
\end{equation}
In the following we will drop the irrelevant factor $4 \pi$. Our goal is now to prove:
\begin{theorem}
Let us assume that $\mu_{\partial} > 0$. Then the partition function of Liouville quantum gravity given by
$$ \mathcal{Z}_{LQG} = \int_1^{\infty} \frac{d \tau}{\tau} \tau^{ \frac{c_m -25}{12}} \vert \eta ( \tau ) \vert ^{1 -c_m} \mathcal{Z}_{GFF}^{-1} \Pi_{\gamma, \mu, \mu_{\partial}}^{ ( 1, \gamma ) } (\tau, dx^2)  $$
is finite for any value of $\gamma \in (0,2)$.
\end{theorem}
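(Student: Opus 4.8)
The plan is to make the $\tau$-dependence of the integrand completely explicit and reduce the whole statement to one negative moment of the boundary Liouville mass. First I would insert $c_m=25-6Q^2$ from \eqref{central_charge} together with the formulas for $\eta$ and $\mathcal{Z}_{GFF}$, and use the explicit form \eqref{partition_2} of $\Pi^{(1,\gamma)}_{\gamma,\mu,\mu_\partial}(\tau,dx^2)$ with $F=1$, one boundary insertion $(1,\gamma)$ and no bulk insertion. Writing $C=\frac{\gamma^2}{8}h_\partial(1)+\frac{Q^2}{2}\ln\tau$ for the associated constant, one gets $\mathcal{Z}_{GFF}^{-1}\Pi^{(1,\gamma)}=e^{C}U_0(\tau)$, where $U_0(\tau)=\int_{\mathbb{R}}e^{\frac{\gamma}{2}c}\,\mathbb{E}[\exp(-\mu e^{\gamma c}A_\tau-\mu_\partial e^{\frac{\gamma}{2}c}B_\tau)]\,dc$ with $A_\tau=\int_\Omega e^{\gamma H}M_{\gamma,dx^2}(dx)$ and $B_\tau=\int_{\partial\Omega}e^{\frac{\gamma}{2}H}M^{\partial}_{\gamma,dx^2}(dx)$. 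Since $\frac{c_m-25}{12}=-\frac{Q^2}{2}$, the factor $\tau^{(c_m-25)/12}$ cancels the $\tau^{Q^2/2}$ hidden in $C$, and the integrand collapses to $I(\tau)=\frac{1}{\tau}\,|\eta(\tau)|^{6(Q^2-4)}\,e^{\frac{\gamma^2}{8}h_\partial(1)}\,U_0(\tau)$, noting that $1-c_m=6(Q^2-4)>0$ because $Q>2$ on $(0,2)$.

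The hypothesis $\mu_\partial>0$ enters next. The substitution $y=e^{\gamma c/2}$ turns $U_0$ into $\frac{2}{\gamma}\int_0^\infty\mathbb{E}[\exp(-\mu y^2A_\tau-\mu_\partial y B_\tau)]\,dy$, and bounding the bulk exponential by $1$ gives $U_0(\tau)\le\frac{2}{\gamma\mu_\partial}\mathbb{E}[B_\tau^{-1}]$; when $\mu=0$ (allowed here) this is an equality, so the argument must run through the boundary term alone. Everything thus reduces to the product $e^{\frac{\gamma^2}{8}h_\partial(1)}\mathbb{E}[B_\tau^{-1}]$. I would lower bound $B_\tau$ by its restriction to the inner circle $\partial\Omega_1$, on which $h_\partial$ is constant and so can be pulled out of the chaos through Proposition \ref{def_GMC}; this exactly cancels the prefactor, leaving $e^{\frac{\gamma^2}{8}h_\partial(1)}\mathbb{E}[B_\tau^{-1}]\le\frac{2}{\gamma\mu_\partial}\mathbb{E}[(\int_{\partial\Omega_1}e^{\frac{\gamma}{2}H}\hat N(dx))^{-1}]$, with $\hat N$ the Wick-renormalised boundary chaos. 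For each fixed $\tau$ this negative moment is finite: the Seiberg bound $\gamma<Q$ controls the $|x-1|^{-\gamma^2/2}$ singularity of $e^{\gamma H/2}$ at the insertion, and Kahane's inequality (Proposition \ref{kahane}) reduces it to a reference GMC with finite negative moments of all orders.

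It then remains to make this bound quantitative in $\tau$ and combine it with the asymptotics of $|\eta(\tau)|$. At the large end, as $\tau\to\infty$ one has $|\eta(\tau)|\sim\tau^{-1/12}$, hence $|\eta|^{6(Q^2-4)}\sim\tau^{-(Q^2-4)/2}$; by \eqref{green} the Green function on $\partial\Omega_1$ tends to $\ln\tau+R(\cdot)$, so the powers of $\tau$ entering $h_\partial(1)$, the weight $e^{\gamma H/2}$ and the resulting common-mode Gaussian compensate each other and the negative moment stays bounded. Thus $I(\tau)\lesssim\tau^{-1-(Q^2-4)/2}$, integrable at $+\infty$ since $Q>2$; this end is routine.

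The hard part will be the degenerate end $\tau\to1^+$, where the annulus becomes thin. The modular asymptotics of $\eta$ give super-exponential decay $|\eta(\tau)|^{6(Q^2-4)}\sim\exp(-\tfrac{(Q^2-4)\pi^2}{2\ln\tau})$, but the insertion prefactor blows up, $e^{\frac{\gamma^2}{8}h_\partial(1)}\sim\exp(\tfrac{\gamma^2\pi^2}{24\ln\tau})$, because the restriction of the Green function to $\partial\Omega_1$ develops a large smooth component, of covariance $\tfrac{2}{\ln\tau}\sum_{n\ge1}\tfrac{\cos n(\theta-\phi)}{n^2}$ to leading order. A purely convex (Jensen-type) estimate of the negative moment only returns the factor $\exp(\tfrac{\gamma^2\pi^2}{24\ln\tau})$ and is not sharp enough for $\gamma$ near $2$. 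The delicate point is that this large smooth field $V$ has zero average over the circle, so $\int e^{\frac{\gamma}{2}\sqrt{s}\,V}\,d\theta$ (with $s=\tfrac{2}{\ln\tau}$) is typically large and its inverse moment is governed by the small-ball probability of $V$; since $V$ is Brownian-bridge-like ($C^{1/2-}$), this should produce an additional Gaussian gain of order $\exp(-c\gamma^2/\ln\tau)$ which, together with the decay of $\eta$, must be shown to dominate the $h_\partial$ blow-up for every $\gamma\in(0,2)$. Making this competition rigorous — presumably again via a Kahane comparison isolating the smooth mode and a sharp small-ball/negative-moment estimate — is the technical heart of the argument; once integrability near $\tau=1$ is secured, combining it with the large-$\tau$ bound yields $\mathcal{Z}_{LQG}<\infty$.
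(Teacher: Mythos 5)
Your reduction is sound and matches the paper's up to the point you flag yourself: the substitution $y=e^{\gamma c/2}$, the bound $U_0\le\frac{2}{\gamma\mu_\partial}\mathbb{E}[B_\tau^{-1}]$, the restriction to $\partial\Omega_1$, the exact cancellation of $\tau^{Q^2/2}$ against $\tau^{(c_m-25)/12}$, and the three-way compensation at $\tau\to\infty$ (insertion weight vs.\ $h_\partial(1)$ vs.\ the common Gaussian mode) are all exactly the paper's steps. The genuine gap is at $\tau\to1^+$, which you leave open, and the mechanism you propose there is both unjustified and misdirected. Your claim that a small-ball estimate for the zero-average, Brownian-bridge-like mode $V$ "should produce an additional Gaussian gain of order $\exp(-c\gamma^2/\ln\tau)$" is not what small-ball analysis gives: with $\mathbb{P}(\sup V\le u)\approx e^{-c/u^2}$, optimizing $e^{-\lambda u-c/u^2}$ at $\lambda\sim\sqrt{s}$, $s\sim 1/\ln\tau$, produces only a stretched-exponential factor $e^{-c\,s^{1/3}}$, negligible at the scale $e^{c/\ln\tau}$ where the competition with $|\eta|^{1-c_m}$ takes place. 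Moreover Jensen's zero-average argument is unavailable because the reference measure is the random, non-uniform $Z$-chaos with the insertion singularity, and controlling the interaction between the bridge maximum and the multifractal chaos mass is precisely the hard point you do not address.

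The idea you are missing is that the cancellation you already used at $\tau\to\infty$ also works at $\tau\to1$, once localized; no small-ball input is needed. The paper restricts the lower bound for $B_\tau$ to a neighbourhood $V(\epsilon)$ of the insertion point $1$ on which $|G_Y(x,y)-G_Y(1,1)|\le\epsilon\,G_Y(1,1)$, where $G_Y$ is the smooth part of the Green's function on $\partial\Omega_1$ and $G_Y(1,1)\sim\frac{\pi^2}{6(\tau-1)}$. On $V(\epsilon)$ the insertion weight $e^{\frac{\gamma}{2}H}$ contributes the deterministic factor $e^{\frac{\gamma^2}{4}(1-\epsilon)G_Y(1,1)}$, and Kahane's inequality (Proposition \ref{kahane}) replaces $Y$ on $V(\epsilon)$ by the single Gaussian $\sqrt{1+\epsilon}\,Y(1)$, whose negative exponential moment costs $e^{\frac{\gamma^2}{8}(1+\epsilon)G_Y(1,1)}$. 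Together with the $\frac{\gamma^2}{8}G_Y(1,1)$ coming from $h_\partial(1)$, the exponents sum to $\frac{\gamma^2}{8}(1+\epsilon)+\frac{\gamma^2}{8}-\frac{\gamma^2}{4}(1-\epsilon)=\frac{3\epsilon\gamma^2}{8}$ times $G_Y(1,1)$: the leading order cancels identically for every $\gamma\in(0,2)$, leaving only an $O(\epsilon)/(\tau-1)$ exponent, which is killed by $|\eta(\tau)|^{1-c_m}\le \tau^{-(1-c_m)/12}e^{-(1-c_m)/(\tau^2-1)}$ upon choosing $\epsilon$ small; the remaining divergences (the $\ln\frac{1}{|\tau^2-1|^2}$ part of $G_Z$, the shrinking of $V(\epsilon)$) are only polynomial. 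This is why the theorem holds on all of $(0,2)$, whereas your Jensen-type bound, as you correctly note, would fail for $\gamma^2>6-2\sqrt{3}$: the rescue comes from the boundary insertion itself, not from finer properties of the smooth Gaussian mode.
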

\begin{proof}
Let us first note that thanks to relation \eqref{central_charge} $\gamma < 2$ is equivalent to $c_m <1$. We will look separately at the behaviour of our integral for $ \tau \rightarrow \infty$ and for $ \tau \rightarrow 1$. We write $\mathcal{Z}_{LQG} = \mathcal{Z}_{LQG}^1 + \mathcal{Z}_{LQG}^{\infty} $ with:
\begin{align*}
&\mathcal{Z}_{LQG}^1  = \int_1^2 \frac{d \tau}{\tau} \tau^{ \frac{c_m -25}{12}} \vert \eta ( \tau ) \vert ^{1 -c_m} \mathcal{Z}_{GFF}^{-1} \Pi_{\gamma, \mu, \mu_{\partial}}^{ ( 1, \gamma ) } (\tau, dx^2), \\
&\mathcal{Z}_{LQG}^\infty  = \int_2^{\infty} \frac{d \tau}{\tau} \tau^{ \frac{c_m -25}{12}} \vert \eta ( \tau ) \vert ^{1 -c_m} \mathcal{Z}_{GFF}^{-1} \Pi_{\gamma, \mu, \mu_{\partial}}^{ ( 1, \gamma ) } (\tau, dx^2).
\end{align*}
Throughout this proof we will use the abuse of notation $ e^{ \gamma X(x) } d \lambda(x) $ and $ e^{ \frac{\gamma}{2} X(x)} d \lambda_{\partial}(x) $ for $M_{\gamma, dx^2}(dx)$ and $M^{\partial}_{\gamma, dx^2}(dx)$. Using \eqref{partition_2} we can write,
$$ \frac{\Pi_{\gamma, \mu, \mu_{\partial}}^{ ( 1, \gamma ) } (\tau, dx^2)}{\mathcal{Z}_{GFF}(\tau, dx^2)} = e^{C(1)} \int_{\mathbb{R}} e^{\frac{\gamma c}{2} } \mathbb{E} [ \exp ( - \mu e^{\gamma c} \int_{\Omega} e^{\gamma H + \gamma X} d \lambda - \mu_{\partial} e^{\frac{\gamma c}{2} } \int_{ \partial \Omega } e^{ \gamma H /2 + \gamma X/2 } d \lambda_{\partial} )] dc, $$ 
with:
\begin{align*}
&H(x) = \frac{\gamma}{2} G(x,s) - Q \ln \vert x \vert, \\
&C(s) = \frac{\gamma^2}{8} h_{\partial}(s) + \frac{Q^2}{2} \ln \tau  - \frac{Q \gamma}{2} \ln \vert s \vert.
\end{align*}
Therefore
$$ C(1) = \frac{\gamma^2}{8} h_{\partial}(1) + \frac{Q^2}{2} \ln \tau  = \frac{Q^2}{2} \ln \tau+  \frac{\gamma^2}{8} \left( \frac{\tau^2 \ln \tau}{(1+\tau)^2} +  \sum_{n=1}^{\infty} \frac{\tau^{2n}+1}{n \tau^{2n} (\tau^{2n}-1)} + \ln \frac{\tau^4}{(\tau^2-1)^2}\right) + \tilde{C}  $$
where $\tilde{C}>0$ is some constant.
We restrict ourselves to the integral over the inner boundary $\partial \Omega_1$ and we use the following upper bound:
\begin{align*}
\frac{\Pi_{\gamma, \mu, \mu_{\partial}}^{(1, \gamma)}(\tau, dx^2)}{\mathcal{Z}_{GFF} (\tau, dx^2)} &\leq e^{C(1)} \int_{\mathbb{R}} e^{\frac{\gamma c}{2} } \mathbb{E} [ \exp (  - \mu_{\partial} e^{\frac{\gamma c}{2} } \int_{ \partial \Omega_1 } e^{ \gamma H /2 + \gamma X/2 } d \lambda_{\partial} )] dc \\
&\leq e^{C(1)} \frac{2}{\gamma} \frac{1}{\mu_{\partial}}  \mathbb{E} \left[ \frac{1}{ \int_{ \partial \Omega_1} e^{ \gamma H/2 + \gamma X/2} d \lambda_{\partial}} \right]. 
\end{align*}
We will decompose our process $X$ into two parts $Y$ and $Z$ according to the following decomposition of the Green's function, written for $z=e^{i \theta} $, $z' = e^{i \theta'}$:
\begin{align*}
&G_Y(z,z') = g_0(1,1) +2\sum_{n=1}^{\infty} {g}_n(1,1) \cos n(\theta - \theta') = \frac{\tau^2 \ln \tau}{ (1+ \tau)^2} + \sum_{n=1}^{+\infty} \frac{\tau^{2n}+1}{ n \tau^{2n}(\tau^{2n} -1) } \cos n (\theta - \theta'), \\
&G_Z(z,z') = \ln \frac{|\tau^4 z^2 z'^2|}{|1 - z \overline{z'}| |\tau^2 - z \overline{z'}||z - z'| |\tau^2 z -  z'|} = \ln \frac{\vert \tau^4 \vert }{\vert e^{i \theta} - e^{i \theta'} \vert^2 \vert \tau^2 e^{i \theta} - e^{i \theta'} \vert^2  }.
\end{align*}\\
\textbf{Step 1: $\tau \rightarrow \infty$} \\
We start by looking at the behaviour of our integral when $\tau \rightarrow \infty$. We see a diverging term $\frac{\tau^2 \ln \tau}{(1+\tau)^2}$ in our correlation function $G_Y$ but this is simply a constant which means it corresponds to a constant Gaussian variable independent of everything else. We can remove it simply by making a shift on $c$ in \eqref{partition_2}. This also removes the corresponding term in $C(1)$. For the remaining part of $G_Y$, we have uniformly on $\partial \Omega_1$:
$$\sum_{n=1}^{+\infty} \frac{\tau^{2n}+1}{ n \tau^{2n}(\tau^{2n} -1) } \cos n (\theta - \theta') \underset{\tau \rightarrow \infty} {\rightarrow} 0.  $$
For $G_Z$ we have in the same way, uniformly on $\partial \Omega_1$:
$$ \ln \frac{\vert \tau^4 \vert }{ \vert \tau^2 e^{i \theta} - e^{i \theta'} \vert^2  }\underset{\tau \rightarrow \infty} {\rightarrow}  0.  $$
This means that by using Kahane's inequalities \eqref{ineq} we can bound these terms that converge to $0$ and we are left only with $ \ln \frac{1}{ \vert z -z' \vert^2}$ in our covariance function. We call $\hat{Z}$ a Gaussian process of covariance $ \ln \frac{1}{ \vert z -z' \vert^2}$. Similarly for $C(1)$, as $\tau \rightarrow  \infty$ the only terms that remain are $\frac{Q^2}{2} \ln \tau$ and the constant $\tilde{C}>0$. Concerning $\eta(\tau)$ we bound it in this case by $\frac{1}{\tau^{12}}$.  Combining everything we get, for some $C_1>0$:
$$ \mathcal{Z}_{LQG}^{\infty}  \leq C_1 \int_2^{\infty} d \tau \tau^{ \frac{c_m -25 +6Q^2}{12}} \frac{1}{\tau^{1 + \frac{1-c_m}{12}}}   \mathbb{E} \left[ \frac{1}{ \int_{ \partial \Omega_1} e^{ \frac{\gamma}{2} \hat{Z}(x)} \vert x -1 \vert^{-\frac{\gamma^2}{2}}   d \lambda_{\partial}(x)} \right].   $$
The theory of Gaussian multiplicative chaos \cite{review} tells us that the above expectation $\mathbb{E}[\cdot]$ is finite and it is clearly independent of $\tau$. Using the relation \eqref{central_charge} we can simplify the powers of $\tau$: we are left with $\tau^{-1 - \frac{1-c_m}{12}} $  which is integrable for $c_m<1$. Therefore the integral converges for $\tau \rightarrow \infty$.\\

\noindent
\textbf{Step 2: $\tau \rightarrow 1$}\\
In this case we have to be more careful as $G_Y$ has a divergence in $\tau$ that depends on $z$. Indeed the sum $\sum_{n=1}^{+\infty} \frac{\tau^{2n}+1}{ n \tau^{2n}(\tau^{2n} -1) }$ diverges as $\frac{\pi^2}{6} \frac{1}{\tau -1}$ as $\tau \rightarrow 1$. Therefore instead of integrating over the full inner boundary $\partial \Omega_1$, we will restrict ourselves to a small open neighbourhood $V(\epsilon)$ of $1$ such that for $ x, y \in V(\epsilon)$, $\vert G_Y(x,y) - G_Y(1,1) \vert \leq  \epsilon \; G_Y(1,1) $ for a fixed $\epsilon>0$. For this bound to hold the size of $V(\epsilon) $ needs to scale as  $ \ln \tau $ with $ \tau $.
We then get:
\begin{small}
\begin{align*}
 \mathbb{E}&\left[ \left(\int_{\partial \Omega_1} e^{ \frac{\gamma}{2} (Y(x) +  Z(x)) + \frac{\gamma^2}{4}(G_Y(x,1) + G_Z(x,1) )  } d\lambda_{\partial}(x)\right)^{-1}\right] \\
 &\leq \mathbb{E}\left[ \left(\int_{V(\epsilon)} e^{ \frac{\gamma}{2} (Y(x) + Z(x)) + \frac{\gamma^2}{4}(G_Y(x,1) + G_Z(x,1) )  } d\lambda_{\partial}(x) \right)^{-1} \right] \\
 &\leq  e^{- \frac{\gamma^2}{4} (1 - \epsilon) G_Y(1,1) } \mathbb{E}\left[ \left( \int_{V(\epsilon)} e^{ \frac{\gamma}{2} (Y(x) + Z(x)) +  \frac{\gamma^2}{4}G_Z(x,1)   } d\lambda_{\partial}(x) \right)^{-1} \right] \\ 
 &\leq  e^{- \frac{\gamma^2}{4} (1 - \epsilon) G_Y(1,1) + \frac{\gamma^2}{8} (1 + \epsilon)  G_Y(1,1) } \mathbb{E}\left[ \left( \int_{V(\epsilon)} e^{  \frac{\gamma}{2} Z(x) +  \frac{\gamma^2}{4}G_Z(x,1)   } d\lambda_{\partial}(x) \right)^{-1} \right].
\end{align*}
\end{small}
In the last inequality we have used Kahane's inequalities \eqref{ineq} to replace the $e^{\frac{\gamma}{2} Y(x)}$ in the denominator by  $e^{\frac{\gamma}{2} \sqrt{1 + \epsilon} Y(1) }$. For the covariance $G_Z$, the log term $\ln \frac{1}{\vert \tau^2 e^{i \theta} - e^{i \theta'} \vert^2 }$ also diverges as $\tau \rightarrow 1$. We will simply bound it by $\ln \frac{1}{\vert \tau^2 -1 \vert^2 }$. Using \eqref{ineq} this gives a polynomial divergence which is irrelevant compared to the exponential terms. Finally there is a last exponential divergence coming from the constant $C(1)$ which is of the order of $ e^{\frac{\gamma^2}{8}   G_Y(1,1) } $.  Calling again $\hat{Z}$ a Gaussian process of covariance $\ln \frac{1}{\vert z - z' \vert^2}$, we get for some $C_2>0$:
$$\mathcal{Z}_{LQG}^{1} \leq C_2 \int_1^2 d \tau \vert \eta(\tau) \vert^{1 -c_m} e^{ \frac{ 3 \epsilon \gamma^2}{8} G_Y(1,1) }  \mathbb{E} \left[ \left( \int_{ V(\epsilon) } e^{ \frac{\gamma}{2} \hat{Z}(x)} \vert x -1 \vert^{-\frac{\gamma^2}{2}}   d \lambda_{\partial}(x) \right)^{-1} \right].    $$
To finish we use the following bound on $ \eta(\tau) $:
$$  \eta (\tau) = \tau^{-1/12} \prod_{n=1}^{\infty} (1 - \tau^{-2n} ) \leq  \tau^{-1/12} e^{ - \frac{1}{\tau^2 -1} }. $$
Therefore we get for some $C_3>0$:
$$\mathcal{Z}_{LQG}^{1} \leq C_3 \int_1^2 d \tau  e^{ (\frac{\pi^2 \gamma^2}{16} \epsilon -(\frac{1-c_m}{2})) \frac{1}{\tau-1}}   \mathbb{E} \left[ \left( \int_{ V(\epsilon) } e^{ \frac{\gamma}{2} \hat{Z}(x)} \vert x -1 \vert^{-\frac{\gamma^2}{2}}   d \lambda_{\partial}(x) \right)^{-1} \right].    $$
$V(\epsilon) $ scales as  $ \ln \tau $ with $ \tau $ so the above expectation $\mathbb{E}[\cdot]$ produces a logarithmic divergence which is again negligible compared to the exponential terms (see for instance \cite{Houches}). Since $c_m<1$, this quantity is bounded if we choose $\epsilon$ small enough. Therefore $\mathcal{Z}_{LQG} < + \infty$.
\end{proof}

\subsection{Joint law of the Liouville measures and of the random modulus}
Our goal in this section is to give the joint law of the bulk Liouville measure $Z$, the boundary Liouville measure $Z^{\partial}$, and of the random $\tau$\footnote{In the literature the modulus of an annulus of radii $1$ and $\tau$ is given by $\frac{1}{2 \pi} \ln \tau  $ but here we choose to write everything with our parameter $\tau$.}. Because the annulus $\Omega = \{ z , 1 < |z| < \tau \} $ depends explicitly on $\tau$, when we perform an integration on $\tau$ this poses a problem to have a fix domain on which to define $Z$ and $Z^{\partial}$. To solve this problem we will work on a fixed annulus $\hat{\Omega} = \{ z , 1 < |z| < 2 \} $ which is simply the annulus of radii $1$ and $2$. We then introduce the map $f_{\tau} : \hat{\Omega} \mapsto \Omega $ with expression given in polar coordinates $z = r e^{i \theta}$ by:
\begin{equation}\label{scaling_f}
f_{\tau}(r e^{i \theta}) = ( (\tau -1)(r-1) +1 ) e^{i \theta}.
\end{equation}
Of course $f_{\tau}$ will not be a conformal map. For any Borel set $A \subset \hat{\Omega}$, we will write $f_{\tau}(A)$ for its image under $f_{\tau}$ (and similarly for subsets of $\partial \hat{\Omega}$). Just like in the previous subsection we will work with one insertion point of weight $\gamma$ placed at the point $1$ on the inner boundary of $\hat{\Omega}$ (but this choice is irrelevant thanks to the KPZ relation).
We are now ready to give the expression of the joint law of $(Z, Z^{\partial}, \tau)$. For any Borel sets $A \subset \hat{\Omega}$, $B \subset \partial \hat{\Omega}$ and for any continuous bounded functionals $F$ we have:
\begin{align*}
\mathbb{E}^{\gamma}_{\mu, \mu_{\partial}}[F(Z(A), Z^{\partial}(B), \tau) ] &= \frac{1}{\mathcal{Z}} \int_1^{\infty} d \tau \tau^{ \frac{c_m -25}{12} -1} \vert \eta ( \tau ) \vert ^{1 -c_m} \\
&\times \int_0^{\infty} dy \mathbb{E} [F(y^2 R \frac{Z_0(f_{\tau}(A))}{Z_0(\Omega)}, y \frac{Z_0^{\partial}(f_{\tau}(B))}{Z_0^{\partial}(\partial \Omega)}, \tau) \exp ( - \mu y^2 R - \mu_{\partial} y) Z_0^{\partial}(\partial \Omega)^{ -1 }  ]. 
\end{align*}

In this expression and in the following ones the normalization constant $\mathcal{Z}$ is always computed by choosing $F=1$. The definitions of $Z_0$, $Z_0^{\partial}$ and $R$ are given in section \ref{liouville_field}. By applying the above formula we can give the joint law of the total inner boundary, total outer boundary and of the random $\tau$:
\begin{align*}
\mathbb{E}^{\gamma}_{\mu, \mu_{\partial}}[F(Z^{\partial}(\partial \hat{\Omega}_1), Z^{\partial}(\partial \hat{\Omega}_{2}), \tau) ] &= \frac{1}{\mathcal{Z}} \int_1^{\infty} d \tau \tau^{ \frac{c_m -25}{12} -1} \vert \eta ( \tau ) \vert ^{1 -c_m} \\
& \times  \int_0^{\infty} dy \mathbb{E} [F(y \frac{Z_0^{\partial}(\partial \Omega_1)}{Z_0^{\partial}(\partial \Omega)}, y \frac{Z_0^{\partial}(\partial \Omega_{\tau})}{Z_0^{\partial}(\partial \Omega)}, \tau) \exp ( - \mu y^2 R - \mu_{\partial} y) Z_0^{\partial}(\partial \Omega)^{ -1 }  ].
\end{align*}
Simplifying the above we get the law of $\tau$:
$$ \mathbb{E}^{\gamma}_{\mu, \mu_{\partial}}[F( \tau) ] = \frac{1}{\mathcal{Z}} \int_1^{\infty} d \tau \tau^{ \frac{c_m -25}{12} -1} \vert \eta ( \tau ) \vert ^{1 -c_m}  \int_0^{\infty} dy \mathbb{E} [F(\tau) \exp ( - \mu y^2 R - \mu_{\partial} y) Z_0^{\partial}(\partial \Omega)^{ -1 }  ]. $$
We can also give a joint law conditioned on the total boundary length $Z^{\partial}(\partial \hat{\Omega})$, for some $L>0$:
\begin{align*}
 &\mathbb{E}^{\gamma}_{\mu, \mu_{\partial}}[F(Z(A), Z^{\partial}(B), \tau)\vert  Z^{\partial}(\partial \hat{\Omega}) = L] = \\
 &\frac{1}{\mathcal{Z}} \int_1^{\infty} d \tau \tau^{ \frac{c_m -25}{12} -1} \vert \eta ( \tau ) \vert ^{1 -c_m}   \mathbb{E} [F(L^2 R \frac{Z_0(f_{\tau}(A))}{Z_0(\Omega)}, L \frac{Z_0^{\partial}(f_{\tau}(B))}{Z_0^{\partial}(\partial \Omega)}, \tau) \exp ( - \mu L^2 R - \mu_{\partial} L) Z_0^{\partial}(\partial \Omega)^{ -1 }  ]. 
\end{align*}
If we were to condition on $\tau$ we would recover the formulas of section \ref{liouville_field}.

\subsection{Conjectured link with random planar maps}\label{sec_maps}

The theory of Liouville quantum gravity that we have constructed is the conjectured limit of random planar maps potentially weighted by a certain statistical physics model. Let us try to state a precise mathematical conjecture. We call $Q_{n,p}$ the set of quadrangulations with the topology of an annulus, $n$ inner faces, a perimeter of $2p$ and with one marked point on the boundary. The growth of $Q_{n,p}$ is expected to be of the order $ \mu_c^n  \mu_{\partial c}^{2p}$ times a polynomial term. The constants $\mu_c$ and $\mu_{\partial c} $ are non universal and they depend on the discrete model chosen (the values change for instance for triangulations). Now we fix an $a>0$ and introduce $ \overline{\mu} = \mu_c + a^2 \mu $, $\overline{\mu}_{\partial} = \mu_{ \partial c} + a \mu_{\partial}$. We can give a conformal structure to each $Q \in Q_{n,p} $ and conformally map it to an annulus of radii $1$ and $\tau$ for some unique $\tau$ where the marked point of $Q$ is mapped to $1$. Then by using the function $f_{\tau}$ \eqref{scaling_f} we can map $Q$ to the reference annulus $\hat{\Omega}$ defined in the previous subsection. By using this mapping, for each $Q \in Q_{n,p} $ we define the bulk and boundary measures $\nu_{Q,a}$ and $\nu^{\partial}_{Q,a}$ on $\hat{\Omega}$ by giving a volume $a^2$ to each face of $Q$ and a length $a$ to each edge on the boundary. We now define the random measures $ (\nu_a, \nu_a^{\partial}) $ by the relation, for all suitable functionals $F$:
$$ \mathbb{E}^a [ F(\nu_a, \nu_a^{\partial})] := \frac{1}{\mathcal{Z}_a} \sum_{N,p} e^{- \overline{\mu} N} e^{- \overline{\mu}_{\partial} 2p} \sum_{Q \in Q_{N,p}} F(\nu_{Q,a}, \nu_{ Q,a}^{\partial}).$$
The constant $\mathcal{Z}_a$ is again the normalization constant. We now state:
\begin{conjecture} The limit in law when $ a\rightarrow 0$ of $ (\nu_a, \nu_a^{\partial}) $ exists in the space of Radon measures equipped with the topology of weak convergence and is given by the measures $Z$ and $Z^{\partial}$ with $\gamma =\sqrt{\frac{8}{3}}$ and with appropriate cosmological constants $\mu$ and $\mu_{\partial}$. More precisely we have
$$ \lim_{a \rightarrow 0} \mathbb{E}^a [ F(\nu_a, \nu_a^{\partial})] = \mathbb{E}^{\sqrt{\frac{8}{3}}}_{ \mu, \mu_{\partial} }[ F( Z, Z^{\partial} )] $$
for a certain value of $\mu$ and $\mu_{\partial}$.
\end{conjecture}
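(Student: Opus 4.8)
The plan is to follow the now-standard three-step program for scaling limits of random planar maps towards Liouville quantum gravity, specialized here to the annular topology and to the pure-gravity value $\gamma = \sqrt{8/3}$, for which relation \eqref{central_charge} gives $c_m = 0$. The three ingredients are: (i) asymptotic enumeration of $Q_{n,p}$ together with identification of the critical weights $\mu_c$ and $\mu_{\partial c}$; (ii) convergence of the underlying decorated random surface; and (iii) transfer of this convergence through the conformal embedding to the measures $(\nu_a, \nu_a^{\partial})$ living on the fixed reference annulus $\hat{\Omega}$.

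First I would establish the combinatorial asymptotics. Using a bijective encoding of marked annular quadrangulations (a two-boundary version of the Bouttier--Di Francesco--Guitter labelled-mobile bijection, or an analytic Tutte-type recursion for maps with two boundaries), one extracts the growth $\mu_c^n \mu_{\partial c}^{2p}$ together with its polynomial correction. The purpose of this step is to verify that, under the critical scaling $\overline{\mu} = \mu_c + a^2 \mu$ and $\overline{\mu}_{\partial} = \mu_{\partial c} + a \mu_{\partial}$, the resulting partition function reproduces exactly the modular density $\tau^{\frac{c_m-25}{12}-1}\vert \eta(\tau)\vert^{1-c_m}$ and the Gamma-type law for the total volume found in Corollary \ref{mu_zero}. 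Matching these exponents is what fixes the dictionary between the discrete and continuum cosmological constants, and hence selects the value of $(\mu,\mu_{\partial})$ predicted by the conjecture.

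Next I would prove convergence of the abstract random surface. At $\gamma = \sqrt{8/3}$ this can be attacked in two complementary ways whose outputs must ultimately be reconciled. On the metric side one shows, following the Le Gall--Miermont strategy applied to the two-boundary encoding, that the rescaled quadrangulation converges in the Gromov--Hausdorff--Prokhorov sense to a Brownian annulus carrying an area measure and two boundary-length measures. On the conformal side one uses the mating-of-trees encoding by a correlated pair of boundary-length and area processes, whose scaling limit identifies the joint law of the areas, the two boundary lengths, and the random modulus with the $\sqrt{8/3}$-LQG annulus decorated by an independent space-filling $\mathrm{SLE}_6$. The goal is a candidate limiting object whose joint law of $(Z(\Omega), Z^{\partial}(\partial\Omega_1), Z^{\partial}(\partial\Omega_{\tau}), \tau)$ coincides with the one read off from the formulas of the preceding subsection.

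The main obstacle, and the only genuinely open step, is (iii): upgrading convergence of the abstract metric-measure object to weak convergence of the \emph{conformally embedded} measures $(\nu_a, \nu_a^{\partial})$ on $\hat{\Omega}$. Concretely one must show that the discrete uniformizing modulus $\tau_Q$ converges in law to the continuum $\tau$ with the density above, and that the pushforward of the face- and edge-counting measures under the discrete uniformization composed with $f_{\tau}$ converges to $(Z, Z^{\partial})$. This requires controlling a discrete conformal embedding (circle packing, or the harmonic/Tutte embedding) and proving it approximates the Riemann map onto the annulus, together with the equivalence between the metric and the GFF-based conformal descriptions of $\sqrt{8/3}$-LQG, an equivalence established by Miller--Sheffield for the sphere and the disk but not, at present, for a surface with two boundaries and a nontrivial modulus. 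The difficulties specific to the annulus are that the modulus is an extra random parameter which must be shown to converge jointly with the measures, and that the embedding estimates must be made uniform across both boundaries; controlling these is where I expect essentially all of the work to lie.
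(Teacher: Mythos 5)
This statement is not proved in the paper at all: it is stated as a \emph{conjecture} (the section is titled ``Conjectured link with random planar maps''), so there is no proof of the paper's to compare yours against, and any submission claiming to prove it would have to be judged as new research, not as a reconstruction. Your proposal does not close the conjecture either, and you say so yourself: step (iii) is declared ``the only genuinely open step.'' In fact the gap is wider than that. Step (ii) is also open at the required level of generality: neither the Gromov--Hausdorff--Prokhorov convergence of two-boundary quadrangulations to a ``Brownian annulus,'' nor a mating-of-trees encoding producing the joint law of areas, boundary lengths and a random modulus, was available for surfaces of annular topology, and the Miller--Sheffield equivalence between the metric and conformal descriptions of $\sqrt{8/3}$-LQG that you invoke exists only for the sphere and the disk, as you note. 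So two of your three pillars are themselves unproven conjectures.

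There is also a structural problem with step (i). You propose to ``verify that, under the critical scaling, the resulting partition function reproduces exactly the modular density $\tau^{\frac{c_m-25}{12}-1}\vert \eta(\tau)\vert^{1-c_m}$'' and the Gamma law of Corollary \ref{mu_zero}, and to use this matching to fix $(\mu,\mu_{\partial})$. But the distribution of the uniformizing modulus $\tau_Q$ of a random annular quadrangulation is not a combinatorial quantity extractable from a Tutte recursion or a mobile bijection: it depends on the conformal structure of the embedded map, i.e.\ precisely on the object whose convergence is the content of step (iii). Asymptotic enumeration gives you the generating function in $(n,p)$, hence candidate critical weights $\mu_c,\mu_{\partial c}$ and volume/perimeter exponents, but matching the full modular density (which carries the ghost determinant $\frac{1}{\tau}\tau^{-13/6}\vert\eta(\tau)\vert^{2}$ and the matter factor, not just the LQFT part) is equivalent to the conjecture itself, not a preliminary consistency check. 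As written, your program is a reasonable roadmap of what a future proof would need, but each of its three steps contains an essential unproven ingredient, so it cannot be accepted as a proof of the statement.
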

If we couple the planar maps to a statistical physics model, a similar conjecture is expected to hold but for a different value of $\gamma$ (see \cite{Tori} for a discussion of this in the case of the torus).

\section{Appendix}

\subsection{Liouville quantum gravity and conformal field theory}\label{heuristics}
Let us give some heuristic ideas on the theory of 2D quantum gravity and on conformal field theory. For a two-dimensional surface $M$ let $\mathcal{M}$ be the set of (smooth) Riemannian metrics on $M$. The partition function of (euclidean) 2D quantum gravity on the surface $M$ is given by:
\begin{equation}
\mathcal{Z}_{LQG} = \int_{\mathcal{M}} Dg e^{- S_{EH}(g) - S_{Matter}(g) }.
\end{equation}
Here $S_{EH}$ is Einstein-Hilbert action with a cosmological constant $\mu_0 >0$ 
\begin{equation}
S_{EH}(g) = \int_M d\lambda_g (R_g + \mu_0)
\end{equation}
where $R_g$ stands for the scalar curvature of the metric $g$. $S_{Matter}$ is the action of the matter fields of the theory, we will discuss its expression below. $Dg$ is a formal uniform measure on the space $\mathcal{M}$ and a major difficulty is precisely to give sense to this measure. In two dimensions there are several simplifications. First of all, due to the Gauss-Bonnet theorem we have $\int_M d\lambda_g  R_g = 8\pi (1 -h)$ where $h$ is the genus of the surface $M$ meaning that this term can be factored out of the integration over $\mathcal{M}$. Next we have a particularly simple description of the space $\mathcal{M}$ in two-dimensions. It turns out that any $g \in \mathcal{M}$ can be written in the simple form $g = \psi(e^{\phi} \hat{g}_{\tau})$, where $\psi$ is diffeomorphism (a change of coordinates), $\phi: M \rightarrow \mathbb{R}$ is a function called the Weyl factor, and $\hat{g}_{\tau}$ are a set a reference metrics parametrized by $\tau$ which correspond to the moduli space of $M$ (the set of non-equivalent conformal structures on $M$). The description of this moduli space can be quite complicated for $M$ of arbitrary topology but it remains finite dimensional. With this result in mind it is natural to want to write the formal measure $Dg$ in the following way
\begin{equation}
Dg = D \psi D \phi D \tau J(\phi, \tau )
\end{equation}
where $J(\phi, \tau )$ is the infinite dimensional Jacobian of this change of variable. Considering that $S_{EH}$ and $S_{Matter}$ are independent of the choice of coordinates (like any relevant quantity in physics), and so is the Jacobian $J$, we can simply drop the integration $D \psi$ over the diffeomorphisms. $D \tau$ is the finite dimensional integration measure over the moduli space so it has a well defined expression. We now come to the difficult problem of giving the expression for $D\phi J( \phi, \tau)$. 
It is heuristically shown in the physics literature \cite{David}, \cite{DPhong}, \cite{DistKa} that this quantity can be expressed using the Liouville action. We introduce
\begin{equation}
\mathcal{Z}_{LQFT}(\hat{g}_{\tau}) =  \int_{\Sigma} D_{\hat{g}_{\tau}}X e^{- S_L(X, \hat{g}_{\tau})}
\end{equation}
where again $D_{\hat{g}_{\tau}} X$ is a formal uniform measure on the space $\Sigma$ of functions $X: M \rightarrow \mathbb{R}$. This partition function $\mathcal{Z}_{LQFT}$ is precisely (up to the insertion points) our partition function $\Pi_{\gamma,\mu,\mu_\partial}^{(z_i,\alpha_i), (s_j,\beta_j)}(\hat{g}_{\tau},1)$. In the same way it is possible to write the matter term as a partition function $\mathcal{Z}_{Matter}(\hat{g}_{\tau})$ and to write the formal Jacobian $J$ as what physicists call the ghost partition function $\mathcal{Z}_{Ghost}(\hat{g}_{\tau})$. The black box that we use is the following expression for $\mathcal{Z}_{LQG}$:  
\begin{hypothesis}
The formal ill-defined path integral $\int_{\mathcal{M}} Dg e^{- S_{EH}(g) - S_{Matter}(g) }$ can be expressed in the following way:
\begin{equation}
\mathcal{Z}_{LQG} = \int D\tau \mathcal{Z}_{Ghost}(\hat{g}_{\tau}) \mathcal{Z}_{Matter}(\hat{g}_{\tau}) \mathcal{Z}_{LQFT}(\hat{g}_{\tau}).
\end{equation}
The integral $\int D \tau$ is over the moduli space of the surface $M$.
Furthermore the parameters entering in the definitions of $\mathcal{Z}_{Matter}$ and $\mathcal{Z}_{LQFT}$ must be chosen in such a way that the value of $\mathcal{Z}_{LQG}$ is independent of the choice of reference metrics $\hat{g}_{\tau}$ (see below).
\end{hypothesis}
Making rigorous sense of this statement is an open problem, see for example \cite{DHoker}, \cite{Mavro}. With this new formulation we are now in the formalism of conformal field theory (CFT). Each of the partition functions $\mathcal{Z}_{Ghost}(g)$, $\mathcal{Z}_{Matter}(g)$ and $\mathcal{Z}_{LQFT}(g)$ (and also  $\mathcal{Z}_{GFF}(g)$ of \eqref{partition_GFF_def}) is expected to obey the following rule when we rescale the metric $g$ by some Weyl factor $e^{\sigma}$:
\begin{equation}
\mathcal{Z}_{CFT}(e^{\sigma} g)  = e^{ \frac{c}{96 \pi} \int_M \vert \partial^g \sigma \vert^2 d\lambda_g + 2 \int_M R_g \sigma d \lambda_g } \mathcal{Z}_{CFT}(g).
\end{equation}
This formula holds for a surface $M$ with no boundary, see \cite{gaw}. In the case of a surface with boundary $\Omega$ like our annulus, there is an extra boundary term:
\begin{equation}\label{weyl_boun}
 \mathcal{Z}_{CFT}(e^{\sigma} g)  = e^{ \frac{c}{96 \pi} \int_{\Omega} \vert \partial^g \sigma \vert^2 d\lambda_g + 2 \int_{\Omega} R_g \sigma d \lambda_g + 4 \int_{\partial \Omega} K_g \sigma d \lambda_{\partial g} } \mathcal{Z}_{CFT}(g). 
\end{equation}
The constant $c$ is the central charge of the conformal field theory. We have $c_{GFF} =1$, $c_{LQFT} = 1 + 6Q^2$ and $c_{Ghost} = -26$. The central charge of the matter $c_m$ can vary depending on what matter fields we choose. These formulas have an important consequence: since the whole partition function $\mathcal{Z}_{LQG}$ must be independent of our choice of the reference metrics $\hat{g}_{\tau}$, we ask $\mathcal{Z}_{LQG}$ to be invariant by Weyl rescaling (namely if we replace $\hat{g}_{\tau}$ by $ e^{\sigma} \hat{g}_{\tau}$ for some $e^{\sigma}$). This imposes the relation:
\begin{equation}\label{scaling}
c_m + c_{LQFT} + c_{Ghost} = 0 \Rightarrow c_m - 25 + 6 Q^2 = 0.
\end{equation}
This gives a relation between  $c_m$ and the parameter $Q = \frac{\gamma}{2} + \frac{2}{\gamma}$. Finally let us discuss the choice for $\mathcal{Z}_{Matter}$. The most common is to choose a power of the free field partition function
\begin{equation}
\mathcal{Z}_{Matter} = (\mathcal{Z}_{GFF})^{c_m}
\end{equation}
although other choices are also possible.

\subsection{Computation of $\mathcal{Z}_{GFF}$ and $\mathcal{Z}_{Ghost}$ for the annulus}\label{sec_ghost}
We now explain how to obtain the expressions of $\mathcal{Z}_{GFF}$ and $\mathcal{Z}_{Ghost}$ used in section 4. It turns out it is more convenient to compute these quantities if we represent our domain by a cylinder of unit circumference and of length $l$. In the complex plane our cylinder is a rectangle of sides of length $1$ and $l$ where we identify both sides of length $l$. We will then use the map $z \mapsto e^{-2 \pi i z}$ which maps this cylinder to our annulus with radii $1$ and $\tau$ given that we have the relation $\tau =  e^{2 \pi l}$. We will first compute $\mathcal{Z}_{GFF}$ on the cylinder, the eigenvalues of $- \Delta$ are given by
\begin{equation}
\lambda_{m,n} = (2 \pi m)^2 + (\frac{\pi n}{l})^2
\end{equation}
with $m \in \mathbb{Z}$ and $n \in \mathbb{N}$ for Neumann boundary conditions. The expression of $\mathcal{Z}_{GFF}$ is given in terms of the formal product $ \det{}' \Delta = \prod_{m,n}' \lambda_{m,n} = + \infty $ where the prime indicates that we omit the zero eigenvalue. The standard technique to give a finite value to this quantity is to use the zeta function regularization method. Similar computations can found in \cite{Sarnak}, \cite{Weis} and for a more general discussion of the method one can consult \cite{Bilal}. We  introduce:
\begin{equation}
\zeta(s) = \sum_{m,n}{}^{'} \frac{1}{\lambda_{m,n}^s}.
\end{equation}
This function is well defined for $ \Re(s)$ sufficiently large. We can then perform an analytic continuation to the complex plane and use the following relation:
\begin{equation}
 \det{}' \Delta = \exp \sum_{m,n}{}^{'} \ln \lambda_{m,n} = \exp - \frac{\partial}{\partial s} \zeta(s)\vert_{s=0}.
\end{equation}
With this method we obtain
\begin{equation}
 \det{}' \Delta = 2 l \eta(e^{2 \pi l })^2 
\end{equation}
with
\begin{equation}
\eta(x) = x^{-\frac{1}{12}} \prod_{n=1}^{\infty}(1-x^{-2n}).
\end{equation}
Let $A =l$ be the area of our cylinder. The partition function of the GFF is then given by \cite{Huber}:
\begin{equation}
\mathcal{Z}_{GFF}(l,dx^2) = \sqrt{A} \det{}' \Delta^{-1/2} = \frac{1}{\sqrt{2}} \frac{1}{\eta(e^{2\pi l})}.
\end{equation}
We must now find the link between $\mathcal{Z}_{GFF}(l,dx^2)$ and $\mathcal{Z}_{GFF}(\tau,dx^2)$. For this we will use the conformal anomaly formula \eqref{weyl_boun}. The cylinder equipped with the flat metric corresponds to the annulus with the metric $e^{\varphi} dx^2$ with $ \varphi = -2 \ln( 2 \pi \vert z \vert)$. To see this we introduce the coordinates $z =x +iy$ on the cylinder and  the coordinates $z' = r e^{i \theta}$ on the annulus. We have the relations $r = e^{2 \pi y} $, $ \theta = - 2 \pi x$ and the  relation between the metrics $dz'^2 = \vert \psi'(z) \vert^2 dz^2$ with $\vert \psi'(z) \vert^2 = 4 \pi^2 \vert e^{-2 \pi i z} \vert^2 = 4\pi^2 e^{4\pi y} = 4 \pi^2 r^2$. Therefore the cylinder with the flat metric corresponds to the annulus with the metric $\frac{1}{4 \pi^2 r^2} dz'^2$. To use the conformal anomaly formula we compute for $\varphi(z) = -2 \ln( 2 \pi \vert z \vert)$,
\begin{align*}
&\int_{\Omega} \vert \partial \varphi \vert^2 d \lambda = \int_{\theta = 0}^{2 \pi} \int_{r=1}^{\tau} \frac{4}{r^2} r dr d\theta = 8 \pi \ln \tau, \\
&\int_{\partial \Omega} 4 K \varphi d \lambda_{\partial} = -16 \pi \ln \tau.
\end{align*}
Therefore we get:
\begin{align*}
\mathcal{Z}_{GFF}(l, dx^2) = \mathcal{Z}_{GFF} ( \tau, e^{\varphi} dx^2) &= \exp \left( \frac{1}{96 \pi} \int_{\Omega} \vert \partial \varphi \vert^2 d \lambda  + \frac{1}{96 \pi}\int_{\partial \Omega} 4 K \varphi d \lambda_{\partial} \right) \mathcal{Z}_{GFF}(\tau, dx^2) \\
&= \tau^{-1/12} \mathcal{Z}_{GFF}(\tau, dx^2)
\end{align*}
which gives
\begin{equation}
\mathcal{Z}_{GFF}(\tau, dx^2) =  \tau^{1/12} \frac{1}{\eta(\tau)}.
\end{equation}
We have drop the irrelevant numerical factor $\frac{1}{\sqrt{2}} $. Now moving to the ghost partition function, we will start from the known result given in \cite{Martinec}:
\begin{equation}
\mathcal{Z}_{Ghost}(l, dx^2) = \vert \eta(e^{2 \pi l}) \vert^2.
\end{equation}
The ghost partition function has central charge $-26$ so the conformal anomaly gives a factor $\tau^{-\frac{13}{6}}$. In this case we also have to include a factor coming from the change of variable on the integration measure, namely:
\begin{equation}
 Dl = \frac{1}{2 \pi \tau } D \tau.
\end{equation}
Putting everything together we arrive at:
\begin{equation}
\mathcal{Z}_{Ghost}(\tau, dx^2) = \frac{1}{\tau} \tau^{-\frac{13}{6}}  \vert \eta(\tau) \vert^2.
\end{equation}

\subsection{Green's function on the annulus}\label{sec_green}
    In this section we give the detailed computation of the Green's function. Similar computations can be found in \cite{Mel}. We will work on the annulus of radii $a$ and $b$ with $a<b$. The case of interest to us is then obtained by choosing $a =1$ and $b= \tau$. We want to find the Green's function for the following problem
   $$ \left\{  \begin{array}{lcl} \Delta u(z) = - 2 \pi f(z) & \text{for} & z\in \Omega  \\ \frac{d u(r,\theta) }{dr} |_{r=a}  = \alpha  \int_{\Omega} f(r,\theta) r dr d\theta & \text{for} &  z\in \partial \Omega_a  \\ \frac{d u(r,\theta) }{dr} |_{r=b}  = \beta  \int_{\Omega} f(r,\theta) r dr d\theta & \text{for} &  z\in \partial \Omega_b \\  \int_{\partial \Omega} u d \lambda_{\partial} = 0 \end{array} \right. $$
    meaning that we want to have for all functions $f$
    $$ u(r,\theta) = \int_{\Omega} G(r,\theta, \rho, \phi) f(\rho, \phi) \rho d \rho d\phi. $$
Here $\alpha$ and $\beta$ are two real constants, we will discuss their values latter on. We work in polar coordinates $z = r e^{i \theta}$ and write the Fourier decompositions of $u$, $f$, and $G$:
\begin{align*}
&u(r,\theta) = \frac{u_0(r)}{2} +\sum_{n=1}^{\infty} (u_n^c(r)\cos(n \theta) + u_n^s(r) \sin(n \theta) ) \\
&f(r,\theta) = \frac{f_0(r)}{2} +\sum_{n=1}^{\infty} (f_n^c(r)\cos(n \theta) + f_n^s(r) \sin(n \theta) ) \\
&G(r, \theta, \rho, \phi) = g_0(r, \rho) + 2 \sum_{n=1}^{\infty} \tilde{g}_n(r, \rho) \cos n (\theta - \phi)
\end{align*}
These three functions are then linked by the relations:
\begin{align*}
&u_0(r) = 2 \pi \int g_0(r,\rho) f_0(\rho) \rho d \rho \\
&u_n^c(r) = 2 \pi \int \tilde{g}_n(r, \rho) f_n^c(\rho) \rho d \rho \\
&u_n^s(r) = 2 \pi \int \tilde{g}_n(r, \rho) f_n^s(\rho) \rho d \rho
\end{align*}
The Laplace equation reads for both the sine and cosine parts of $u_n$ and $f_n$ and for $ n \in \mathbb{N}$:
    $$ \frac{d}{dr}(r \frac{d u_n(r)}{dr}) - \frac{n^2}{r} u_n(r) = -2 \pi r f_n(r).$$
The general solutions of this equation read
\begin{align*}
&u_0(r) = 2 \pi \int_a^r \ln(x/r)f_0(x)x dx + C_0 \ln(r) + D_0 \\
&u_n(r) = \frac{\pi}{n} \int_a^r ((\frac{x}{r})^n - (\frac{r}{x})^n) f_n(x) x dx + C_n r^n + D_n r^{-n}, n \neq 0
\end{align*}
where $C_n$ and $D_n$ are constants that will be fixed by the boundary conditions. For $n \geq 1$, our boundary conditions require $\frac{d u_n(r)}{dr}$ to vanish on both boundaries. We compute:
    $$ \frac{d u_n(r)}{dr} = -\frac{\pi}{r} \int_a^r ( (  \frac{x}{r})^n + (  \frac{r}{x})^n ) f_n(x) x dx + C_n n r^{n-1} - D_n n r^{-n-1}. $$ 
    The constants $C_n$ and $D_n$ are then determined for $n>0$ by the equations:
    \begin{align*}
    &C_n n a^{n-1} - D_n n a^{-n-1} = 0 \\
     &\frac{\pi}{b} \int_a^b ( (  \frac{x}{b})^n + (  \frac{b}{x})^n ) f_n(x) x dx - C_n n b^{n-1} + D_n n b^{-n-1} = 0
    \end{align*}
We get
\begin{align*}
 C_n =  \frac{\pi b^{n}}{  n(  b^{2n} - a^{2n})} \int_a^b ( (\frac{x}{b})^n  + (\frac{b}{x})^n) f_n(x) x  dx \\
 D_n = \frac{\pi b^{n} a^{2n}}{  n(  b^{2n} - a^{2n})} \int_a^b ( (\frac{x}{b})^n  + (\frac{b}{x})^n) f_n(x) x  dx \\
\end{align*}
and the expression of $u_n$
\begin{small}
\begin{equation*}
u_n(r) = \frac{\pi}{n} \int_a^r ((\frac{x}{r})^n - (\frac{r}{x})^n) f_n(x) x dx + \frac{\pi b^{n} r^n}{n (b^{2n} - a^{2n} )} \int_a^b ((\frac{x}{b})^n + (\frac{b}{x})^n) f_n(x) x dx + \frac{\pi b^{n} a^{2n} r^{-n}}{n (b^{2n} - a^{2n} )} \int_a^b ((\frac{x}{b})^n + (\frac{b}{x})^n) f_n(x) x dx.
\end{equation*}
\end{small}
This leads to a contribution to the Green's function given by:
    $$ \tilde{g}_n(r,\rho) = \frac{r^{-n} \rho^{-n}}{2n (b^{2n} - a^{2n})} \left \{ \begin{array}{lcl} (b^{2n} + \rho^{2n})(r^{2n} + a^{2n}), & \text{for} & r \leq \rho \\ (b^{2n} + r^{2n})(\rho^{2n} + a^{2n}), & \text{for} &  r \geq \rho \end{array} \right. $$
    
    We will rewrite the Green's function to have one uniformly convergent series and one term corresponding to the divergence when $z = z'$. In order to perform this we write for $r < \rho$:
    $$ \tilde{g}_n(r,\rho) = \frac{r^{-n} \rho^{-n}}{2n (b^{2n} - a^{2n})} (b^{2n} + \rho^{2n})(r^{2n} + a^{2n})
    = \frac{r^{-n} \rho^{-n}}{2n } ( \frac{a^{2n}}{b^{2n}(b^{2n} - a^{2n})}   + \frac{1}{b^{2n}} ) (b^{2n} + \rho^{2n})(r^{2n} + a^{2n})$$
    We will introduce:
    $$ g_n(r,\rho) = \frac{r^{-n} \rho^{-n} a^{2n}}{2n b^{2n}(b^{2n} - a^{2n})} \left \{ \begin{array}{lcl} (b^{2n} + \rho^{2n})(r^{2n} + a^{2n}), & \text{for}& r \leq \rho \\ (b^{2n} + r^{2n})(\rho^{2n} + a^{2n}), & \text{for} &  r \geq \rho \end{array} \right. $$
    We see that the Green's function can be written in terms of an absolutely convergent sum of $ g_n$ plus a term we can compute explicitly using the following formula:
    $$ \sum_{n=1}^{\infty} \frac{x^n}{n} \cos n\theta =  - \frac{1}{2} \ln (1 - 2 x \cos \theta + x^2). $$
    The additional term is given by:
    $$ \sum_{n=1}^{\infty} \frac{1}{n b^{2n} r^{n} \rho^{n}}  (b^{2n} + \rho^{2n})(r^{2n} + a^{2n}) \cos n (\theta - \phi)
     = \ln \frac{|b^4 z^2 z'^2|}{|a^2 - z \overline{z'}| |b^2 - z \overline{z'}||z - z'| |b^2 z - a^2 z'|}. $$
    For $r > \rho$, the expression is obtained by exchanging $z$ and $z'$. This just changes the last term in the $\ln$ from $|b^2 z - a^2 z'|$ to $|a^2 z - b^2 z'|$.

    Now looking at the condition for $n=0$, we have
    $$ \frac{d u_0(r)}{dr} = -\frac{2 \pi}{r} \int_a^r f_0(x) x dx + \frac{C_0}{r} $$
    and the boundary conditions give:
    \begin{align*}
    &\frac{C_0}{a} = 2  \pi \alpha \int_a^b f_0(x) x dx \\
    &\frac{C_0}{b}  -\frac{2 \pi}{b} \int_a^b f_0(x) x dx =  
    2 \pi \beta \int_a^b f_0(x) x dx
    \end{align*}
    Therefore, we see that for there to be a solution, we cannot pick the constants $\alpha$ and $\beta$ arbitrarily: they are linked by the relation $\alpha a - \beta b =1$.
    We can check that this relation is also a consequence of the Green-Riemann formula.
   The last boundary condition leads to $2\pi a u_0(a) + 2 \pi b u_0(b)  = 0$ which gives:
    $$ C_0(a \ln a + b \ln b) + D_0(a + b) + 2 \pi b \int_a^b \ln(x/b) f_0(x) x dx = 0.$$
    Therefore:
    $$ D_0 = - \frac{a \ln a + b \ln b}{a + b} C_0 - \frac{2 \pi b}{a+b} \int_a^b \ln(x/b) f_0(x) x dx. $$
We have one constant left, $C_0$, we will choose its value to make the Green's function symmetric. The relations $C_0 = 2 \pi a \alpha \int f_0(x) x dx $ and $ \alpha a - \beta b =1$ then completely determine all the constants in the problem. We have for $r \leq \rho$
    $$ g_0(r, \rho) = \alpha a ( \frac{a \ln r/a + b \ln r/b}{a+b}) - \frac{b}{a+b} \ln \rho/b $$
    and for $r \geq \rho$
    $$ g_0(r, \rho) = \ln \rho/r +  \alpha a ( \frac{a \ln r/a + b \ln r/b}{a+b}) - \frac{b}{a+b} \ln \rho/b. $$
     The value of $\alpha$ that makes the Green's function symmetric is $\alpha = \frac{1}{a+b}$. We then have the following expression for $g_0$:
    $$ g_0(r,\rho) = \left\{ \begin{array}{lcl}  \frac{a^2 \ln(r/a) + b^2 \ln (b/\rho) + a b \ln (r/ \rho) }{(a+b)^2}, & \text{for} & r \leq \rho \\  \frac{a^2 \ln(\rho/a) + b^2 \ln (b/r) + a b \ln ( \rho/r) }{(a+b)^2}, & \text{for}&  r \geq \rho \end{array} \right.$$    
Therefore we get the final expression for our Green's function
    $$ G(r,\theta,\rho, \phi) =   g_0(r,\rho) +2\sum_{n=1}^{\infty} {g}_n(r,\rho) \cos n(\theta - \phi) + \ln \frac{|b^4 z^2 z'^2|}{|a^2 - z \overline{z'}| |b^2 - z \overline{z'}||z - z'| |b^2 z - a^2 z'|}  $$
    where again the factor $|b^2 z - a^2 z'|$ holds for $r < \rho$ and is replaced by $|a^2 z - b^2 z'|$ for $r > \rho$.

\end{document}